\pgfplotsset{compat=1.17}
\newtheorem{lemma}{Lemma}
\newtheorem{theorem}{Theorem}
\newtheorem{definition}{Definition}
\newtheorem{proposition}{Proposition}
\newtheorem{remark}{Remark}
\newtheorem{assumption}{Assumption}
\begin{document}
	
\title{\huge Age of Computing: A Metric of Computation Freshness in Communication and Computation Cooperative Networks}
\date{}
\author{Xingran Chen, {\it Member}, IEEE, \IEEEmembership{}
	Yi Zhuang,\IEEEmembership{}
	and Kun Yang, {\it Fellow}, IEEE\IEEEmembership{}
	
	\IEEEcompsocitemizethanks 	 
		{
		\IEEEcompsocthanksitem This work was accepted for publication in the IEEE Transactions on Mobile Computing in June 2025.
		\IEEEcompsocthanksitem This work was supported by Young Scientists Fund of the National Natural Science Foundation of China (Grant No. 62401111), Natural Science Foundation of China (Grant No. 62132004), the Jiangsu Major Project on Basic Researches (Grant No. BK20243059), and Gusu Innovation Project for Talented People (Grant No. ZXL2024360). \textit{(Corresponding Author: Kun Yang)}
		\IEEEcompsocthanksitem Xingran Chen and Yi Zhuang are with School of Information and Communication Engineering,  University of Electronic Science and Technology of China, Chengdu, 611731, China	(E-mail: xingranc@ieee.org, yizhuang265@163.com).
		\IEEEcompsocthanksitem Kun Yang is with the State Key Laboratory of Novel Software Technology, Nanjing University, Nanjing, 210008, China, and School of Intelligent Software and Engineering, Nanjing University (Suzhou Campus), Suzhou, 215163, China, and School of Information and Communication Engineering, University of Electronic Science and Technology of China, Chengdu, China (E-mail: kunyang@nju.edu.cn).
		}
}

\maketitle
\begin{abstract}

In communication and computation cooperative networks (3CNs), timely computation is crucial but not always guaranteed. There is a strong demand for a computational task to be completed within a given deadline. The time taken involves  processing time,  transmission time, and the impact of the deadline. However, a measure of such timeliness in 3CNs is lacking. To address this gap, we propose the novel concept of Age of Computing (AoC) to quantify computation freshness in 3CNs. Built on task timestamps, AoC serves as a practical metric for dynamic and complex real-world 3CNs. 
We evaluate AoC under two types of deadlines:
(i) soft deadline, tasks can be fed back to the source if delayed beyond the deadline, but with additional latency; (ii) hard deadline, tasks delayed beyond the deadline are discarded. We investigate AoC in two distinct networks. In point-to-point, time-continuous networks, tasks are processed sequentially using a first-come, first-served discipline. We derive a general expression for the time-average AoC under both deadlines. Utilizing this expression, we obtain a closed-form solution for M/M/1-M/M/1 systems under soft deadlines and propose an accurate approximation for hard deadlines. These results are further extended to G/G/1-G/G/1 systems. Additionally, we introduce the concept of computation throughput, derive its general expression and an approximation, and explore the trade-off between freshness and throughput. In the multi-source, time-discrete networks, tasks are scheduled for offloading to a computational node. For this scenario, we develop AoC-based Max-Weight policies for real-time scheduling under both deadlines, leveraging a Lyapunov function to minimize its drift.

\end{abstract}

{\it Index Terms} ---
Age of computing, computation freshness, communication and computing cooperative networks, time-average AoC, Max-Weight Policy.
	
\section{Introduction}\label{sec: introduction}
In the 6G era, emerging applications such as the Internet of Things (IoT), smart cities, and cyber-physical systems have significant demands for communication and computation cooperative networks (3CNs), which provide faster data processing, efficient resource utilization, and enhanced security \cite{8016573}. 3CNs originated from mobile edge computing (MEC) technology, which aims to complete computation-intensive and latency-critical tasks, with the paradigm deploying distributedly tons of billions of edge devices at the network edges \cite{7488250}. Besides MEC, 3CNs include fog computing and computing power networks. Fog computing can be regarded as a generalization of MEC, where the definition of edge devices is broader than that in MEC \cite{10.1145/2342509.2342513}. Computing power networks refer to a broader concept of distributed computing networks, including edge, fog, and cloud computing \cite{9354741}. In all 3CNs, there is no established metric for capturing the freshness of computation. Recently, a notable metric called the Age of Information (AoI) has been proposed to describe information freshness in communication networks \cite{AoItutorial}. The AoI metric has broad applications in various communication and control contexts, including random access protocols \cite{cxrAoI}, multiaccess protocols \cite{PAoI2015}, remote estimation \cite{cxrestimation}, wireless-powered relay-aided communication networks \cite{KunyangAoI-1}, and network coding \cite{cxrisit, cxritw}.

However, applying AoI in 3CNs is inappropriate because it only addresses communication latency and does not account for computation latency. In this paper, we propose a novel metric called the {\it Age of Computing (AoC)} to capture computation freshness in 3CNs. A primary requirement in 3CNs is that computational tasks are processed as promptly as possible and within a maximum acceptable deadline. The core idea of AoC is to combine communication delay, computation delay, and the impact of the maximum acceptable deadline. Communication and computation delays are caused by the transmission and processing of computational tasks, while the impact of the maximum acceptable deadline accounts for additional delays when task delays exceed the users' acceptable threshold.

\subsection{Related Work}\label{subsec: Related Work}
All related papers can be divided into two broad categories. The first category investigates information freshness in edge and fog computing networks. The second category focuses on freshness-oriented metrics.

\subsubsection{Information Freshness in Edge/Fog Computing Networks}\label{subsubsec: Information Freshness in Edge/Fog Computing Networks}
In edge and fog computing networks, tasks or messages typically go through two phases: the transmission phase and the processing phase. The basic mathematical model for these networks is established as two-hop networks and tandem queues.

The first study to focus on AoI for edge computing applications is \cite{AoIedge1}, which primarily calculated the average AoI. As an early work, \cite{MPeakAoI} established an analytical framework for the peak age of information (PAoI), modeling the computing and transmission process as a tandem queue. The authors derived closed-form expressions and proposed a derivative-free algorithm to minimize the maximum PAoI in networks with multiple sensors and a single destination. Subsequently, \cite{PAoIedge} modeled the communication and computation delays as a generic tandem of two first-come, first-serve (FCFS) queues, and analytically derived closed-form expressions for PAoI in M/M/1-M/D/1 and M/M/1-M/M/1 tandems.  More recently, \cite{kam2022age} analyzed the average AoI for a tandem of FCFS M/M/1-M/M/1 queues with non-preemptive, memoryless servers. Their study considered both single-capacity and infinite-capacity queue settings, and utilized stochastic hybrid system techniques to obtain the results.  Additionally, the authors in \cite{Jacksonnet} extended the concept of tandems of M/M/1 queues, examining the AoI in Jackson Networks with finite buffer sizes. They provided a closed-form expression for an upper bound on the average AoI, also utilizing stochastic hybrid systems in their analysis. Building on \cite{PAoIedge, kam2022age, Jacksonnet}, \cite{ComputationTransmission} went further by considering both average and peak AoI in general tandems with packet management. The packet management included two forms: no data buffer, and a one-unit data buffer with last-come first-serve discipline. This work illustrated how computation and transmission times could be traded off to optimize AoI, revealing a tradeoff between average AoI and average peak AoI.  Expanding on \cite{ComputationTransmission}, \cite{AoI_edgecomputing} explored the information freshness of Gauss-Markov processes, defined as the process-related timeliness of information. The authors derived closed-form expressions for information timeliness at both the edge and fog tiers. These analytical formulas explicitly characterize the dependency among task generation, transmission, and execution, serving as objective functions for system optimization. 
Reference \cite{ZChen-2} investigated a remote status updating system where the transmission process is modeled as a multi-stream M/G/1/1 non-preemptive system. The study derived closed-form expressions for the average AoI of each stream in a heterogeneous case, where the service time distributions differ across streams. In \cite{AoIedgecomputing}, a multi-user MEC network where a base station (BS) transmits packets to user equipment was investigated. The study derived the average AoI for two computing schemes—local computing and edge computing—under a first-come, first-serve discipline. 

There are other relevant works such as \cite{NOMAedge, ChenHe2021, FRANcxr, decentralizedMEC, RLEdge, ZChen-1}. Reference \cite{NOMAedge} investigated information freshness in MEC networks from a multi-access perspective, where multiple devices use NOMA to offload their computational tasks to an access point integrated with an MEC server. Leveraging tools from queuing theory, the authors proposed an iterative algorithm to obtain the closed-form solution for AoI.  The authors  in \cite{ChenHe2021} investigated the age minimization problem in a two-hop relay system, subject to a constraint on the average number of forwarding operations at the relay. They proposed a low-complexity double threshold relaying policy and derived approximate closed-form expressions for the average AoI at the destination and the average number of forwarding operations at the relay.
In \cite{FRANcxr}, a F-RAN with multiple senders, multiple relay nodes, and multiple receivers was considered. The authors analyzed the AoI performances and proposed optimal oblivious and non-oblivious policies to minimize the time-average AoI.
\cite{decentralizedMEC} and \cite{RLEdge} explored AoI performances in MEC networks using different mathematical tools. \cite{decentralizedMEC} considered MEC-enabled IoT networks with multiple source-destination pairs and heterogeneous edge servers. Using game-theoretical analysis, they proposed an age-optimal computation-intensive update scheduling strategy based on Nash equilibrium. 
Reinforcement learning is also a powerful tool in this context. \cite{RLEdge} proposed a computation offloading method based on a directed acyclic graph task model, which models task dependencies. The algorithm combined the advantages of deep Q-network, double deep Q-network, and dueling deep Q-network algorithms to optimize AoI.  Parallel queues were explored in \cite{ZChen-1}, where the average age and PAoI in a dual-queue status update system, monitoring a common stochastic process through two independent channels, were analyzed. Closed-form expressions were derived using the graphic method and state flow graph analysis.

\subsubsection{Freshness-oriented Metrics}\label{subsubsec: Freshness-oriented Metrics}
The AoI metric, introduced in \cite{AoItutorial}, measures the freshness of information at the receiver side. AoI depends on both the frequency of packet transmissions and the delay experienced by packets in the communication networks \cite{cxrAoI}. When the communication rate is low, the receiver’s AoI increases, indicating stale information due to infrequent packet transmissions. However, even with frequent transmissions, if the system design imposes significant delays, the receiver's information will still be stale.
Following the introduction of AoI, several related metrics were proposed to capture network freshness from different perspectives. Peak AoI, introduced in \cite{PAoI2015}, represents the worst-case AoI. It is defined as the maximum time elapsed since the preceding piece of information was generated, offering a simpler and more mathematically tractable formulation.

Nearly simultaneously, the age of synchronization (AoS) \cite{AoS} and the effective age \cite{Effectiveage} were proposed. AoS, as a complementary metric to AoI, drops to zero when the transmitter has no packets to send and grows linearly with time until a new packet is generated \cite{AoS}. The effective age metrics in \cite{Effectiveage} include sampling age, tracking the age of samples relative to ideal sampling times, and cumulative marginal error, tracking the total error from the reception of the latest sample to the current time.

Later, the age of incorrect information (AoII) \cite{AoII2020} and the urgency of information (UoI) \cite{UoI2020} were introduced. AoII addresses the shortcomings of both AoI and conventional error penalty functions by extending the concept of fresh updates to ``informative'' updates—those that bring new and correct information to the monitor side \cite{AoII2020}. UoI, a context-based metric, evaluates the timeliness of status updates by incorporating time-varying context information and dynamic status evolution \cite{UoI2020}, which enables analysis of context-based adaptive status update schemes and more effective remote monitoring and control.

Despite the variety of freshness-oriented metrics proposed, none are applicable for capturing computation freshness in 3CNs. None of these metrics simultaneously address the impact of both communication and computation delays, as well as the maximum acceptable deadline. Motivated by the need for a metric capturing freshness in 3CNs, we propose the AoC metric in this paper.

\subsection{Contributions}\label{subsec: Contributions}

This paper introduces the Age of Computing (AoC), a novel metric designed to capture computation freshness in 3CNs (see Definition~\ref{def: AoC}). The AoC concept is built on tasks' arrival and completion timestamps, which makes it applicable to dynamic and complex real-world 3CNs. The AoC is defined under two types of deadlines:  (i) soft deadline, i.e., if the task's delay exceeds the deadline, the outcome is still usable but incurs additional latency; (ii) hard deadline, i.e., if the task's delay exceeds the deadline, the outcome is discarded, and the task is considered invalid.

We then theoretically analyze the time-average AoC under both types of deadlines in a linear topology comprising a source, a transmitter, a receiver, and a computational node. Tasks arrive at the source at a constant rate and immediately enter a communication queue at the transmitter. After being transmitted/offloaded to the receiver, tasks are forwarded to the computational node for processing in a computation queue. The queuing discipline considered is first-come, first-served.

Under the soft deadline, we first derive a general expression for the average AoC (see Theorem~\ref{thm: average AoC general}). We then study a fundamental scenario where the task arrival process follows a Poisson distribution, and the transmission and computation delays adhere to exponential distributions—forming an M/M/1-M/M/1 system. In this case, we derive a closed-form expression for the average AoC (see Theorem~\ref{thm: average AoC soft}). Subsequently, we extend our analysis to a more general scenario where the task arrival process follows a general distribution, and the transmission and computation delays also follow general distributions—resulting in a G/G/1-G/G/1 system. For this case, we derive the expression for the average AoC as well (see Theorem~\ref{thm: average AoC soft extension}).

Under the hard deadline, we first derive a general expression for the average AoC (see Theorem~\ref{thm: average AoC hard general}). This expression involves intricate correlations, making it highly challenging to obtain a closed-form solution, even for an M/M/1-M/M/1 system. To address this, we provide an approximation of the average AoC and demonstrate its accuracy when the communication and computation rates are significantly larger than the task generation rate (see Theorem~\ref{thm: average AoC hard}). We also define computation throughput as the number of tasks successfully fed back to the source per time slot (see Definition~\ref{def: computation throughput}). A general expression for computation throughput is derived (see Lemma~\ref{lem: computation throughput}), along with an approximation (see Proposition~\ref{pro: approximated CT}). Furthermore, we explore the trade-off between computation freshness and computation throughput (see Lemma~\ref{lem: pareto optimal}).  In the end, we extend the accurate approximations for both the average AoC and computation throughput to more generalized scenario involving G/G/1-G/G/1 systems (see Theorem~\ref{thm: average AoC hard extension} and Proposition~\ref{pro: approximated CT extension}).

Finally, we apply the AoC concept to develop optimal real-time scheduling strategies focused on enhancing computation freshness in multi-source networks. Recognizing the importance of recent computational tasks, we adopt a preemptive scheduling rule. For real-time scenarios, we propose AoC-based Max-Weight policies for both deadlines (see Algorithms~\ref{alg: soft} and~\ref{alg: hard}). By constructing a Lyapunov function \big(see \eqref{eq: Lyapunov function}\big) and its drift \big(see \eqref{eq: Drift}\big), we show that these policies minimize the Lyapunov drift in each time slot (see Propositions~\ref{pro: maxweight soft} and~\ref{pro: maxweight hard}).

The remaining parts of this paper are organized as follows. Section~\ref{sec: Age of Computing} proposes and discusses the novel concept AoC. Section~\ref{sec: AoC under Soft Deadlines} and Section~\ref{sec: AoC under Hard Deadlines} derive theoretical results for the AoC under the soft and hard deadlines, respectively. Section~\ref{sec: AoC-based Scheduling for Multi-Source 3CNs} proposes real-time AoC-based scheduling algorithms for multi-source networks. We numerically verify our theoretical results in Section~\ref{sec: simulations} and conclude this work in Section~\ref{sec: conclusion}.

\section{Age of Computing}\label{sec: Age of Computing}
In this section, we introduce the mathematical formulation of the novel concept, Age of Computing (AoC), which quantifies the freshness of computation within 3CNs. Consider a line topology comprising a source, a transmitter, a receiver, and a computational node (sink), as depicted in Fig.~\ref{outline}. In this topology, both the transmitter and the computational node are equipped with queuing capabilities. The process begins with the source generating or offloading computational tasks, which are first placed into a communication queue at the transmitter, awaiting transmission. Once transmitted, the tasks are received by the receiver, and then forwarded to the computational node. At the computational node, the tasks enter a computation queue, where they await processing before eventually departing from the system.

\begin{figure}[htbp]
\centering
\includegraphics[width=0.45\textwidth,height=0.15\textwidth]{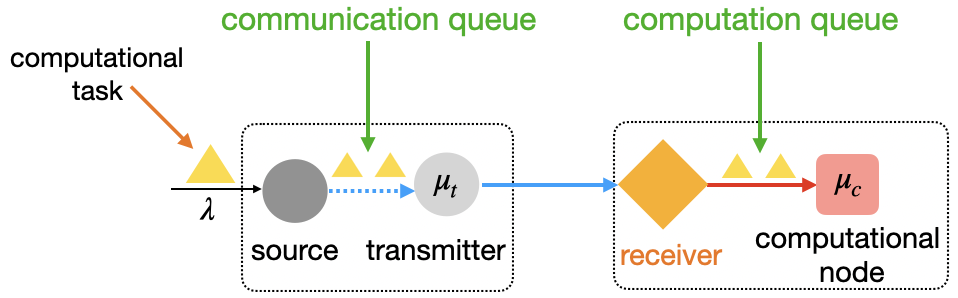}
\caption{A line topology consisting of a source, a transmitter, a receiver, and a computational node (sink).}
\label{outline}
\end{figure}

\subsection{Definition}\label{subsec: definition}

In the network, the queuing discipline follows a {\it first-come first-serve} approach.
For any task $k$, let $\tau_k$ denote the arrival time at the source, $\tau_k''$ the time when the computation starts at the computational node, and $\tau_k'$ the time when the computation completes. The {\it delay} of task $k$ is then defined as $\tau_k'-\tau_k$. A task $k$ is considered {\it valid} if its outcome can be fed back to the source\footnote{The feedback is an acknowledgment message with a small bit size, commonly used in communication systems to confirm the successful receipt of data.}; otherwise, it is deemed invalid.

\begin{definition}\label{def: informative task}
(informative, processing, and latest tasks). The index of the informative task during $[0, t]$, denoted by $N(t)$, is given by 
\begin{align}\label{eq: index of latest packet}
N(t) = \max\{k|\tau_k'\leq t, \text{and task }k\text{ is valid}\}.
\end{align}
The index of the processing task during $[0, t]$, denoted by $P(t)$, is given by 
\begin{align}\label{eq: index of processing packet}
P(t) = \max\{k|\tau_k''\leq t\}.
\end{align}
The index of the latest task during $[0, t]$, denoted by $G(t)$, is given by 
\begin{align}\label{eq: index of information packet}
G(t) = \max\{k|\tau_k'\leq t\}.
\end{align}
\end{definition}

Based on Definition~\ref{def: informative task}, an informative task is a valid task that brings the latest information. The processing is the current task being processed, and the latest refers to the last completed task.  The informative task and the latest task are not necessarily the same, i.e., $G(t)\geq N(t)$. They coincide \big($G(t)=N(t)$\big) only when the latest task is also informative.  At any time $t$, if the computational node is idle (no task is being processed), then the processing task is exactly the latest one, i.e, $P(t)=G(t)$. If the computational node is occupied (a task is being processed), then $P(t)=G(t)+1$. In summary, we have $P(t)\geq G(t)\geq N(t)$.  A graphical illustration of Definition~\ref{def: informative task} is shown in Fig.~\ref{PGN}, which depicts the system state at time $t$. In this example, task~$k+2$ arrives at the computational node at time~$t$, while tasks~$k$ and~$k+1$ have already departed from the computational node. Among them, task~$k$ is valid, whereas task~$k+1$ is invalid. According to Definition~\ref{def: informative task}, this yields $P(t) = k+2$, $G(t) = k+1$, and $N(t) = k$, satisfying the condition $P(t) \geq G(t) \geq N(t)$.

\begin{figure}[htbp]
	\centering
	\includegraphics[width=0.45\textwidth,height=0.21\textwidth]{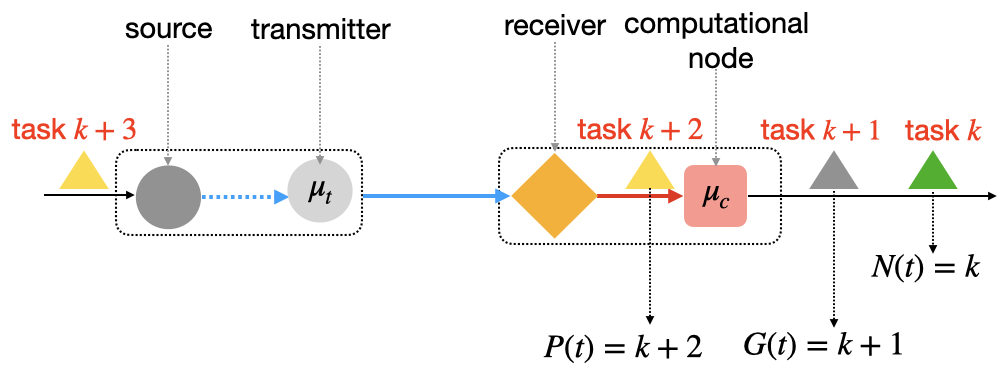}
	\caption{A graphical illustration for Definition~\ref{def: informative task}: At time $t$, task $k+2$ arrives at the computational node, while tasks $k$ and $k+1$ already departed from the computational node. Among them, task $k$ (green) is valid, whereas task $k+1$ (gray) is invalid.}
	\label{PGN}
\end{figure}

\begin{definition}\label{def: softhard deadline}
A maximum acceptable deadline $w>0$ can be categorized into two types:
\begin{itemize}
	\item Soft deadline: A task is considered valid if its delay $\tau_k'-\tau_k > w$.
	\item Hard deadline: A task is considered invalid if its delay $\tau_k'-\tau_k > w$.
\end{itemize}
\end{definition}
We define the number of tasks within $[0, t]$ whose delays exceed the threshold $w$ as:
\begin{align}\label{eq: index of delay not exceeding w}
A(t) = \operatorname{card}\Big(\big\{1\leq k\leq G(t): \tau_k' -\tau_k>w \big\}\Big).
\end{align}
In \eqref{eq: index of delay not exceeding w}, $\operatorname{card}(\cdot)$ denotes the cardinality. This quantity can be efficiently tracked by the computational node using mechanisms such as a local counter or a task log. The computation freshness at the computational node is formally defined as follows.
\begin{definition}\label{def: AoC}
(AoC). Under the soft deadline, the age of computing (AoC) is defined as the random process
\begin{align}\label{eq: AoC-soft}
c_{\text{soft}}(t) =& t - \tau_{N(t)}\nonumber\\
+&1_{\{P(t)>G(t)\}}\cdot\frac{A(t)}{G(t)}\cdot(t - \tau_{P(t)}-w)^+.
\end{align}
Under the hard deadline, the AoC is defined as the random process
\begin{align}\label{eq: AoC-hard}
c_{\text{hard}}(t) = t-\tau_{N(t)}.
\end{align}
 \end{definition}

The key distinction in Definition~\ref{def: AoC} lies in how computation freshness is assessed under hard and soft deadlines:
\begin{itemize}
	\item Under the hard deadline, computation freshness in \eqref{eq: AoC-hard} is determined solely by the informative tasks, as tasks with delays exceeding the threshold $w$ are deemed invalid. 
	\item Under the soft deadline, computation freshness in \eqref{eq: AoC-soft} accounts for both the informative tasks (i.e, $t-\tau_{N(t)}$) and an additional latency incurred if the task being processed experiences a significant instantaneous delay \big(i.e, $1_{\{P(t)>G(t)\}}\frac{A(t)}{G(t)}(t-\tau_{P(t)}-w)^+$\big). In particular, the additional latency includes $3$ components:
\begin{itemize}
	\item The indicator function $1_{\{P(t)>G(t)\}}$ denotes whether a task is being processed at the computational node.
	\item  The ratio
	$\frac{A(t)}{G(t)}$  represents the frequency of task delays exceeding the deadline up to time $t$, effectively quantifying {\it the level/frequency of conflict} with respect to the deadline (up to time $t$). 
	\item  The term $(t-\tau_{P(t)}-w)^+$ quantifies the amount by which the delay of the task currently being processed exceeds the threshold.
	\item This additional latency, calculated by the computational node, vanishes as soon as the current task is completed.
\end{itemize}
\end{itemize}

The concept of information freshness, known as AoI \cite{AoItutorial}, reflects the cumulative delay over a given time period. Building on this foundation, AoC extends the notion to computational tasks, representing the cumulative delay associated with their processing. AoC provides a more comprehensive understanding of the timeliness of computations in a system.
While related, AoC and AoI are {\it fundamentally distinct} in their definitions and physical interpretations: AoC evaluates the freshness of the informative task (may includ an additional latency incurred by the task being processed), whereas AoI focuses on the freshness of the latest task.

Since the AoC $c_{x}(t)$ with $x\in\{\text{soft}, \text{hard}\}$ captures the computation freshness at time $t$, we often consider the time-average AoC over a period to measure the computation freshness of a network. As $T\to\infty$, we define the average AoC of a network as 
\begin{align}\label{eq: average AoC-1}
\Theta_{x} \triangleq  \lim_{T\to\infty}\frac{1}{T}\int_0^T c_{x}(t) dt,\quad x\in\{\text{soft}, \text{hard}\}.
\end{align}

Finally, we summarize the important notations and their descriptions in the paper in Table~\ref{tab: notations and descriptions}.

\begin{table}[ht]
	\centering
	\begin{tabular}{|c|c|}
		\hline
		 $k$ & the index of computational tasks \\
		\hline
		$w$ & the threshold \\
				\hline
		$\tau_k$ & the arrival time of task~$k$ at the source\\
				\hline
		$\tau_k''$ &the time  when the computation starts \\
				\hline
		$\tau_k'$ & the time when the computation completes \\
				\hline
		$N(t)$ & the index of the informative task during $[0, t]$\\
				\hline
		$P(t)$ & the index of the processing task during $[0, t]$\\
				\hline
		$G(t)$ & the index of the latest task during $[0, t]$\\
				\hline
		$A(t)$ & the number of tasks with delays $> w$  during $[0, t]$\\
				\hline
		$c_{\text{soft}}(t)$ & the AoC under the soft deadline in time $t$\\
				\hline
		$c_{\text{hard}}(t)$ & the AoC under the hard deadline in time $t$\\
				\hline
		$\Theta_{\text{soft}}$ & the time-average AoC under the soft deadline \\
				\hline
		$\Theta_{\text{hard}}$ & the time-average AoC under the hard deadline \\
				\hline
		$\mu_t$ & the communication rate\\
				\hline
		$\mu_c$ & the computation rate\\
				\hline
		$X_{k+1}$ & the inter-arrival time between task $k$ and task $k+1$\\
				\hline
		$T_{k}$ & the delay of task $k$\\
				\hline
		$M$ & the number of invalid tasks between two valid tasks\\
				\hline
	$S_{k, t}$ & the transmission delay of task $k$ \\ 
			\hline
	$S_{k, c}$ & the computation delay of task  $k$ \\ 
		\hline
	$ \hat{\epsilon}(t)$ & $A(t)/G(t)$ \\ 
			\hline
	$\epsilon_w$ & $\Pr(T_k>w)$ \\ 
		\hline
	\end{tabular}
	\caption{Summary of Important Notations}
	\label{tab: notations and descriptions}
\end{table}

\subsection{Insights and Applications}\label{subsec: Insights and Applications}

From \eqref{eq: index of delay not exceeding w}, \eqref{eq: AoC-soft} and \eqref{eq: AoC-hard}, we observe that the {\it only} notation used is the {\it timestamp}, specifically the arrival timestamp $\tau_k$ and the departure timestamp $\tau_k'$. To establish this concept, we intentionally avoid incorporating physical parameters such as bandwidth, GPU cycles, or caching. Instead, we focus solely on the use of timestamps. The reason for this choice is that the timestamp is the most fundamental and universally applicable metric that can be recorded by the sink (i.e., computational nodes) in realistic 3CNs.

\begin{figure}[htbp]
	\centering
	\includegraphics[width=0.45\textwidth,height=0.25\textwidth]{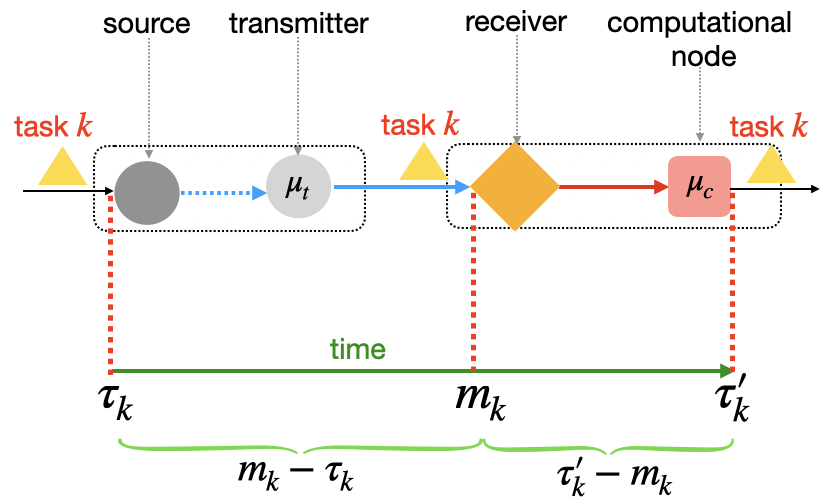}
	\caption{A graphical illustration that conveys the intuition underlying Definition~\ref{def: AoC}.}
	\label{insights}
\end{figure}

Let $m_k$ denote the timestamp when task $k$ reaches the receiver  (see Fig.~\ref{insights}). The difference $m_k - \tau_k$ represents the system time incurred during communication for task $k$, which is determined by the communication capability. This delay can be influenced by various network characteristics, such as bandwidth, information size, signal-to-noise ratio, channel capacity, network load, transmission protocols, and more. On the other hand, the difference $\tau_k' - m_k$ represents the system time incurred during computation for task $k$, which depends on the computation capability. This delay is influenced by factors such as GPU cycles, task complexity, node load, memory size, access speed, I/O scheduling, and more. Finally,  the total delay of task $k$, given by $\tau_k' - \tau_k$, consists of the system times incurred during both communication and computation phases. As such, it is influenced by both the communication and computation capabilities. The AoC concept effectively captures the impact of both these factors on task performance.

The AoC concept is also applicable in dynamic and complex environments, such as mobile computing. In such settings, the communication network may experience interruptions or temporary failures due to factors like signal loss, interference, or mobility-related issues (e.g., moving out of network coverage). Tasks may experience temporary delays in a queue due to these disruptions. Nevertheless, the AoC concept remains applicable. To calculate the AoC, we only need the timestamps of tasks (i.e., $(\tau_k, \tau_k')$). As long as timestamps can be recorded, the AoC can be calculated, regardless of disruptions.

\section{AoC Analysis under the Soft Deadline}\label{sec: AoC under Soft Deadlines}

In this section, we first provide graphical insights into the curve of the AoC (see Fig.~\ref{curveAoCsoft}). Next, we derive a general expression for the time-average AoC, presented in Theorem~\ref{thm: average AoC general} (Section~\ref{subsec: General Expression for soft}). Following this, we analyze the expression in a fundamental case: the M/M/1 - M/M/1 tandem system, as detailed in Theorem~\ref{thm: average AoC soft} (Section~\ref{subsec: Average AoC in M/M/1 - M/M/1 Tandem}). Finally, we extend the average AoC expression to more general cases, specifically G/G/1 - G/G/1 tandem systems, as discussed in Theorem~\ref{thm: average AoC soft extension} (Section~\ref{subsec: Extension soft}).

\subsection{General Expression for $\Theta_{\text{soft}}$}\label{subsec: General Expression for soft}

For analytical tractability, we consider a simplified model: (i) A task is represented by $(L, w, B)$, where $L$ denotes the task's input data size, $w$ represents the maximum acceptable deadline, and $B$ indicates the computation workload \cite{8016573}. (ii) The network is characterized by $(R, F)$, where $R$ is the data rate of the communication channel at the transmitter, and $F$ is the CPU cycle frequency at the computational node. (iii) The expected transmission delay of a task is $L/R$, and the expected computation delay is $B/F$. (iv) Both delays each follow their respective distributions. We define $\mu_t = R/L$ and $\mu_c = F/B$ as the {\it communication rate} and {\it computation rate}, respectively.

Let $h(t)=t-\tau_{N(t)}$ and $\hat{\epsilon}(t)=\frac{A(t)}{G(t)}$.  From \cite{AoItutorial}, the AoI concept shares the same formula as $h(t)$. Using the AoI curve as a benchmark, the AoC curve is depicted in Fig.~\ref{curveAoCsoft}. In Fig.~\ref{curveAoCsoft}~(a), the delay of task $k$ exceeds the threshold ($T_k>w$), but any waiting time before task $k$ is processed does not surpass the threshold.  When the (instantaneous) delay of a task is less than the deadline $w$, the AoC curve coincides with the AoI curve. However, when the (instantaneous) delay of a task exceeds the deadline $w$, the portion of the delay exceeding the deadline increases at a rate $1+\hat{\epsilon}(t)$. In Fig.~\ref{curveAoCsoft}~(b), both the delay of task $k$ and the waiting time before  task  $k$ is processed  ($\tau_{k-1}'-\tau_k>w$) exceed the threshold. When the processing of task $k$ begins, an additional latency is introduced. As a result, at time  $\tau_{k-1}'$, there is an upward jump (additional latency) of $\big(1+\hat{\epsilon}(t)\big)(\tau_{k-1}'-\tau_k-w)$.  Subsequently, the AoC curve increases at a rate $1+\hat{\epsilon}(t)$. 

\begin{figure}[htbp]
	\centering
	\subfigure[The waiting time for task $k$ in the computation queue does not exceed $w$.] {\includegraphics[width=0.44\textwidth,height=0.36\textwidth]{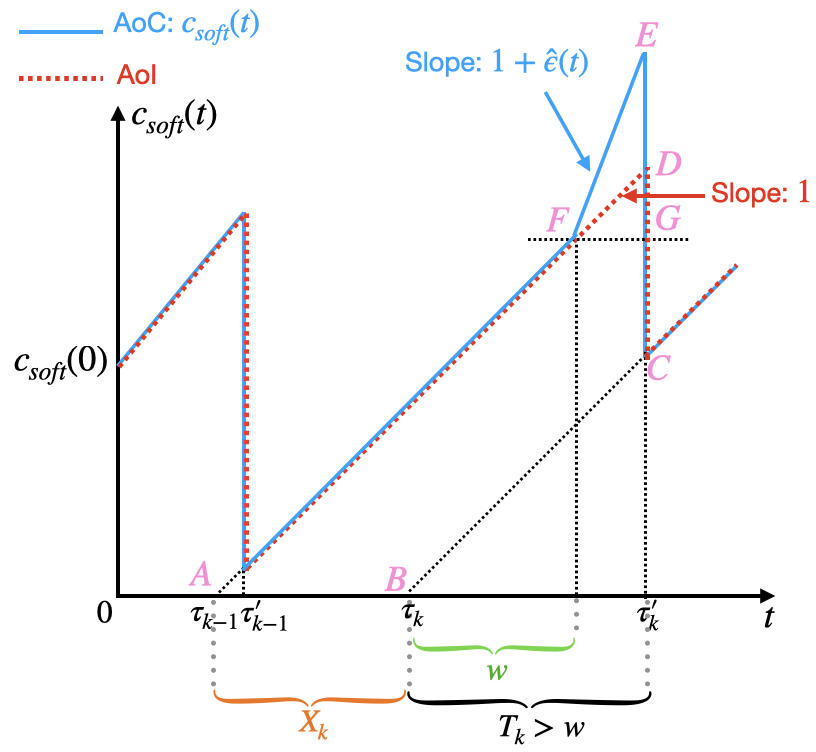}}
	\subfigure[The waiting time for task $k$ in the computation queue exceeds $w$.] {\includegraphics[width=0.5\textwidth,height=0.36\textwidth]{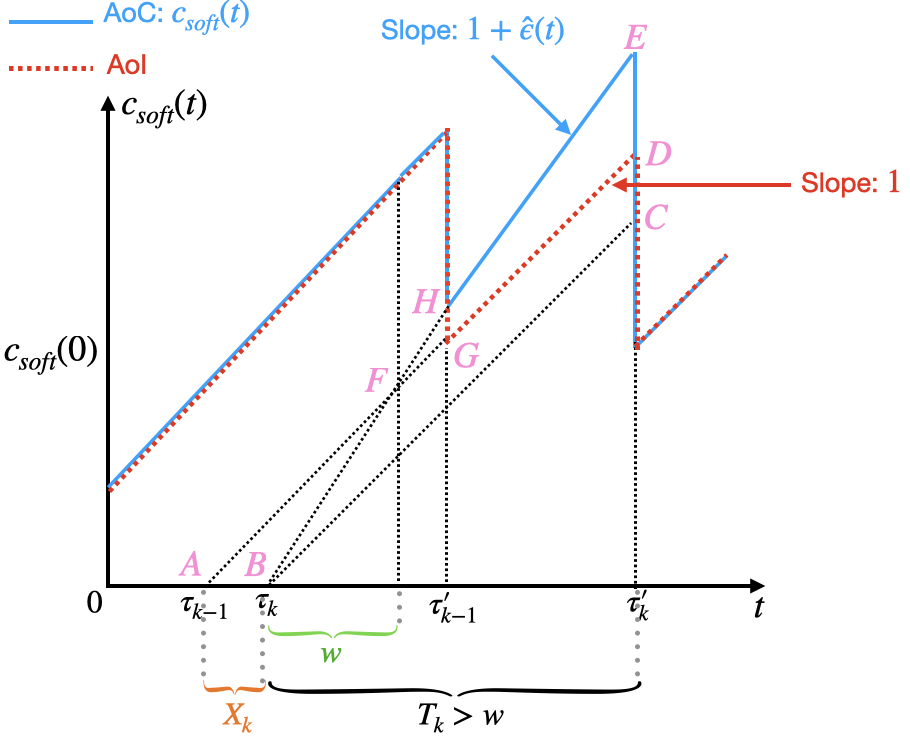}}
	\caption{The curve AoC under the soft deadline}
	\label{curveAoCsoft}
\end{figure}

To uncover theoretical insights into the average AoC, we investigate it within queue-theoretic systems. We assume a stationary queuing discipline with a first-come, first-serve approach.  On the source side, computational tasks arrive according to a stationary stochastic process characterized by a constant average rate. The inter-arrival time between consecutive tasks is denoted by $X_{k+1} = \tau_{k+1} - \tau_k$.  Upon arrival, each task experiences a transmission delay, denoted by $S_{k, t}$, at the transmitter, which follows a distribution with a constant expectation. Similarly, the computation delay denoted by $S_{k, c}$, at the computational node also follows a distribution with a constant expectation. Let the delay of task $k$ be denoted by $T_k$. In queuing systems, this delay is referred to as the system time, and we use these terms interchangeably. Under the stationary queuing discipline, $\{X_k\}_k$, $\{S_{k, t}\}_k$, $\{S_{k, c}\}_k$, and $\{T_k\}_k$ are i.i.d, respectively.

\begin{theorem}\label{thm: average AoC general}
	The average AoC can be calculated as
	\begin{align}\label{eq: average AoC soft}
		&\Theta_{\text{soft}} =\frac{\mathbb{E}[X_kT_k]+\frac{1}{2}\mathbb{E}[X_k^2]}{\mathbb{E}[X_k]}\nonumber\\
		&+\epsilon_w\cdot\frac{\mathbb{E}\big[\big((T_k-w)^+\big)^2\big] - \mathbb{E}\big[\big((T_k-S_{k,c}-w)^+\big)^2\big]}{2\mathbb{E}[X_k]},
	\end{align}
	where $\epsilon_{w}=\Pr(T_k>w)$.
\end{theorem}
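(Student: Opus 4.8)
The plan is to compute the time-average of $c_{\text{soft}}(t)$ in \eqref{eq: AoC-soft} by a renewal-reward / geometric-area argument, using the AoI curve $h(t)=t-\tau_{N(t)}$ as a baseline and treating the extra term as a correction. Writing $\Theta_{\text{soft}} = \lim_{T\to\infty}\frac1T\int_0^T h(t)\,dt + \lim_{T\to\infty}\frac1T\int_0^T 1_{\{P(t)>G(t)\}}\hat\epsilon(t)(t-\tau_{P(t)}-w)^+\,dt$, the first limit is the classical average-AoI formula. Since the informative task is the latest valid task and valid tasks occur with long-run frequency $1-\epsilon_w$ (but the renewal structure is on arrivals, not on valid tasks), I would be careful here: the standard area-under-sawtooth decomposition over inter-arrival intervals $[\tau_k,\tau_{k+1})$ gives, per the usual Kaul--Yates--Gruteser argument, a contribution whose numerator is $\mathbb{E}[X_k T_k] + \tfrac12\mathbb{E}[X_k^2]$ over $\mathbb{E}[X_k]$ — this is exactly the first term of \eqref{eq: average AoC soft}, and I expect the invalid tasks to \emph{not} change this term because an invalid task still advances $t-\tau_{N(t)}$ in the same sawtooth manner (the drop simply occurs at the next valid completion; the total area under the curve telescopes identically provided $\{T_k\}$ and $\{X_k\}$ are stationary). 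I would justify this by the graphical insight already given in Fig.~\ref{curveAoCsoft} and by noting the trapezoidal areas sum the same way regardless of which completions trigger the downward jumps.

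Next I would handle the correction term. The indicator $1_{\{P(t)>G(t)\}}$ is active precisely when the computational node is busy processing some task, say task $k+1$; during that busy sub-interval $t$ ranges over $[\tau_k', \tau_{k+1}']$ (processing of task $k+1$ starts when task $k$ completes, under FCFS, assuming the server is work-conserving), so $\tau_{P(t)} = \tau_{k+1}$ there. The contribution of this busy period to the integral is $\int_{\tau_k'}^{\tau_{k+1}'} \hat\epsilon(t)\,(t-\tau_{k+1}-w)^+\,dt$. Over a long horizon $\hat\epsilon(t)\to \epsilon_w$ a.s.\ (this is $A(t)/G(t)\to\Pr(T_k>w)$ by the SLLN, using the i.i.d.\ assumption on $\{T_k\}$), so I would pull $\epsilon_w$ out of the limit. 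The remaining job is to evaluate $\lim_{T\to\infty}\frac1T\sum_{k}\int_{\tau_k'}^{\tau_{k+1}'}(t-\tau_{k+1}-w)^+\,dt$. Change variables $s=t-\tau_{k+1}$: the integral becomes $\int_{\tau_k'-\tau_{k+1}}^{\tau_{k+1}'-\tau_{k+1}}(s-w)^+\,ds = \int_{T_{k+1}-S_{k+1,c}}^{T_{k+1}}(s-w)^+\,ds$, because $\tau_{k+1}'-\tau_{k+1}=T_{k+1}$ and $\tau_k' = \tau_{k+1}'-S_{k+1,c}$ (the computation delay of task $k+1$ is $S_{k+1,c}$, and processing of $k+1$ occupies exactly the last $S_{k+1,c}$ units before its completion). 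Using $\int_a^b (s-w)^+\,ds = \tfrac12\big[((b-w)^+)^2 - ((a-w)^+)^2\big]$ for $a\le b$, this equals $\tfrac12\big[((T_{k+1}-w)^+)^2 - ((T_{k+1}-S_{k+1,c}-w)^+)^2\big]$.

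Finally, by the renewal-reward theorem (with renewals at arrival epochs $\tau_k$, inter-renewal mean $\mathbb{E}[X_k]$, and per-interval reward the quantity just computed, which depends only on $(T_{k+1},S_{k+1,c})$ and is integrable under the stationarity assumptions), the long-run time average of the correction equals $\epsilon_w \cdot \dfrac{\mathbb{E}\big[((T_k-w)^+)^2\big] - \mathbb{E}\big[((T_k-S_{k,c}-w)^+)^2\big]}{2\mathbb{E}[X_k]}$, which is the second term of \eqref{eq: average AoC soft}. Adding the two pieces gives the claimed formula.

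I expect the main obstacle to be rigorously arguing that the invalid tasks leave the first (baseline-AoI) term unchanged and that the busy-period sub-intervals $[\tau_k',\tau_{k+1}']$ tile the busy time correctly under FCFS with a work-conserving server — in particular, verifying $\tau_{P(t)}=\tau_{k+1}$ and $\tau_k'=\tau_{k+1}'-S_{k+1,c}$ hold on the relevant interval, and confirming that $\hat\epsilon(t)$ can be replaced by its limit $\epsilon_w$ uniformly enough to interchange the limit with the (bounded-on-each-cycle) integral, e.g.\ via a sandwiching argument together with Cesàro convergence. The rest is the routine trapezoid/area bookkeeping already visualized in Fig.~\ref{curveAoCsoft}.
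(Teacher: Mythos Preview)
Your approach is essentially the paper's: the same geometric area decomposition (trapezoid for the baseline AoI piece, triangle/trapezoid for the correction), the same antiderivative identity $\int_a^b(s-w)^+\,ds=\tfrac12[((b-w)^+)^2-((a-w)^+)^2]$, and the same sandwiching argument to replace $\hat\epsilon(t)$ by its limit $\epsilon_w$. Two small clean-ups: under the \emph{soft} deadline every task is valid (Definition~\ref{def: softhard deadline}), so $N(t)=G(t)$ and the baseline term is literally the AoI --- your concern about invalid tasks is moot; and the processing interval for task $k{+}1$ is $[\tau_{k+1}'',\tau_{k+1}']$ with $\tau_{k+1}''=\tau_{k+1}'-S_{k+1,c}$, which coincides with $\tau_k'$ only when the computational server does not idle between tasks --- your integral limits $[T_{k+1}-S_{k+1,c},T_{k+1}]$ are still correct once you use $\tau_{k+1}''$ rather than $\tau_k'$ as the lower endpoint.
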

\begin{proof}
The proof is given in Appendix~\ref{App: average AoC soft}.
\end{proof}

\subsection{Average AoC in M/M/1-M/M/1 Systems}\label{subsec: Average AoC in M/M/1 - M/M/1 Tandem}
In this section, we let the inter-arrival time $X_k$ follows an exponential distribution with parameter $\lambda$, meaning that computational tasks arrive at the source according to a Poisson process characterized by an average rate of $\lambda$. Let $S_{k, t}$ and $S_{k, c}$ have exponential distributions with parameters $\mu_t$ ($=R/L$) and $\mu_c$ ($=F/X$), respectively. In other words, the network forms an M/M/1-M/M/1 tandem. 

\begin{theorem}\label{thm: average AoC soft}
Let $\rho_t=\lambda/\mu_t$ and $\rho_c=\lambda/\mu_c$. Denote
\begin{align*}
&\delta_t=1-\rho_t,\,\delta_c=1-\rho_c,\, \zeta_t=e^{-\mu_t\delta_tw},\,\zeta_c=e^{-\mu_c\delta_cw}.
\end{align*} 
The closed form expression for $\Theta_{\text{soft}}$ is given by: if $\mu_t\neq \mu_c$, then
\begin{align}\label{eq: AoC soft-1}
	\Theta_{\text{soft}} = &\frac{1}{\lambda}+\frac{1}{\mu_t}+\frac{1}{\mu_c}+\frac{\rho_t^2}{\mu_t \delta_t}+ \frac{\rho_c^2}{\mu_c\delta_c}+\frac{\rho_t\rho_c}{\mu_t+\mu_c-\lambda}\nonumber\\
	+&\lambda\frac{\mu_c(\delta_c\mu_t\delta_t)^2}{(\mu_c-\mu_t)^2}\big(\frac{\zeta_t}{\mu_t\delta_t} - \frac{\zeta_c}{\mu_c\delta_c}\big)\big(\frac{\zeta_t}{\mu_t^2\delta_t^2} - \frac{\zeta_c}{\mu_c^2\delta_c^2}\big);
\end{align}
if $\mu_t=\mu_c$, then
\begin{align}\label{eq: AoC soft-2}
\Theta_{\text{soft}} = &\frac{1}{\lambda}+\frac{2}{\mu_t}+\frac{2\rho_t^2}{\mu_t\delta_t}+\frac{\rho_t^2}{\mu_t+\mu_t\delta_t}\nonumber\\
+&\lambda\zeta_t^2(1+\mu_t\delta_tw)(\frac{2}{\mu_t^2\delta_t}+\frac{w}{\mu_t}).
\end{align}
\end{theorem}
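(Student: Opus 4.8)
The plan is to specialize the general expression in Theorem~\ref{thm: average AoC general} to the M/M/1-M/M/1 tandem, so the core task is to evaluate each of the moments appearing in \eqref{eq: average AoC soft} for this queuing model. The first bracket, $\big(\mathbb{E}[X_kT_k]+\tfrac12\mathbb{E}[X_k^2]\big)/\mathbb{E}[X_k]$, is exactly the classical AoI formula for the tandem, so I would invoke (or rederive via the known decomposition $T_k = S_{k,t}^{\text{sys}} + S_{k,c}^{\text{sys}}$) the Burke/stochastic-hybrid-systems result that for an M/M/1 queue with load $\rho$ the average AoI contributes $\tfrac1\lambda + \tfrac1\mu + \tfrac{\rho^2}{\mu(1-\rho)}$, and that two concatenated M/M/1 queues add a cross term $\tfrac{\rho_t\rho_c}{\mu_t+\mu_c-\lambda}$. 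This accounts for the first six (resp. four) terms of \eqref{eq: AoC soft-1} (resp. \eqref{eq: AoC soft-2}). The remaining work is the deadline-penalty term.

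For the penalty term I need three ingredients: $\epsilon_w = \Pr(T_k > w)$, $\mathbb{E}\big[((T_k - w)^+)^2\big]$, and $\mathbb{E}\big[((T_k - S_{k,c} - w)^+)^2\big]$. The key structural fact (Burke's theorem) is that in the M/M/1-M/M/1 tandem the \emph{departure} process of the first queue is again Poisson($\lambda$), so the second queue is an independent M/M/1 queue; moreover the system time $T_k$ is the independent sum of the two sojourn times, each exponential — $T_{k,t}\sim\text{Exp}(\mu_t\delta_t)$ and $T_{k,c}\sim\text{Exp}(\mu_c\delta_c)$ — hence $T_k$ is hypoexponential with density a combination of $\mu_t\delta_t e^{-\mu_t\delta_t x}$ and $\mu_c\delta_c e^{-\mu_c\delta_c x}$ (the $\mu_t=\mu_c$ case being the degenerate Erlang-like limit, explaining the bifurcation in the theorem). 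From this, $\epsilon_w = \Pr(T_k>w)$ is a two-term exponential tail, which is where the $\zeta_t,\zeta_c$ factors enter. For the two second-moment terms I would compute $\mathbb{E}[((Y-a)^+)^2] = \int_a^\infty (y-a)^2 f_Y(y)\,dy$ by the substitution $u = y-a$, reducing each to $\int_0^\infty u^2 (\text{mixture of }e^{-\theta(u+a)})\,du$, i.e. $2/\theta^3$ times the tail factor $e^{-\theta a}$; the difference of the two second-moment terms then telescopes because $T_k - S_{k,c} = T_{k,t}$ plus whatever waiting the task does in the computation queue — more carefully, $(T_k - S_{k,c})$ is the time from arrival until the computation \emph{starts}, which by the memoryless structure and Burke's theorem again has a clean hypoexponential/exponential form. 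The factored shape of the last line of \eqref{eq: AoC soft-1}, namely $\big(\tfrac{\zeta_t}{\mu_t\delta_t}-\tfrac{\zeta_c}{\mu_c\delta_c}\big)\big(\tfrac{\zeta_t}{\mu_t^2\delta_t^2}-\tfrac{\zeta_c}{\mu_c^2\delta_c^2}\big)$, strongly suggests that one of these two differences comes from $\epsilon_w$-like tail bookkeeping and the other from the $2/\theta^3$ second-moment integral, and that the prefactor $\mu_c(\delta_c\mu_t\delta_t)^2/(\mu_c-\mu_t)^2$ is just the normalization of the partial-fraction coefficients; I would organize the algebra to make that factorization fall out rather than expanding blindly.

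The main obstacle I anticipate is correctly identifying the distribution of $T_k - S_{k,c} - w$ inside the positive part — i.e., disentangling ``time until computation starts'' from the full system time while respecting that $S_{k,c}$ is the \emph{same} random variable that appears in $T_k$, so these are not independent and one cannot naively subtract an independent $\text{Exp}(\mu_c)$. The clean resolution is to write $T_k = W_{k,t} + S_{k,t} + W_{k,c} + S_{k,c}$ (waiting plus service in each stage) and note $T_k - S_{k,c} = W_{k,t}+S_{k,t}+W_{k,c}$, the sojourn-until-service-start, whose stationary distribution under Burke's theorem is again a hypoexponential of the \emph{same two rates} $\mu_t\delta_t$ and $\mu_c\delta_c$ but missing one exponential factor — this is exactly what produces the asymmetric exponents $\mu_t^2\delta_t^2$ vs.\ $\mu_t\delta_t$ in the two differences above. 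Once that identification is pinned down, everything else is elementary integration and the partial-fraction bookkeeping for $\mu_t\neq\mu_c$, together with a careful l'Hôpital-type limit $\mu_c\to\mu_t$ to obtain \eqref{eq: AoC soft-2}; I would present the $\mu_t\neq\mu_c$ computation in full and then remark that \eqref{eq: AoC soft-2} follows by taking the limit, checking that the singular prefactor $1/(\mu_c-\mu_t)^2$ is cancelled by the double zero of the bracketed differences.
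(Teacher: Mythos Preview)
Your plan is essentially the paper's proof: it too splits \eqref{eq: average AoC soft} into the classical tandem-AoI term $Q_1$ (quoted from \cite{AoI_edgecomputing}) and the penalty term, then computes $\epsilon_w$, $q_1=\mathbb{E}[((T_k-w)^+)^2]$, and $q_2=\mathbb{E}[((T_k-S_{k,c}-w)^+)^2]$ via the decomposition $T_k-S_{k,c}=U_t+W_c$ with $U_t\sim\mathrm{Exp}(\mu_t\delta_t)$ independent of the computation-queue waiting time $W_c$. One caution on your description of that last distribution: $T_k-S_{k,c}$ is \emph{not} hypoexponential in the two rates, because $W_c$ carries an atom at $0$ (density $\rho_c\mu_c\delta_c e^{-\mu_c\delta_c x}+(1-\rho_c)\delta(x)$); the paper convolves this mixed law with $\mathrm{Exp}(\mu_t\delta_t)$ directly, and it is precisely the $\rho_c$/$(1-\rho_c)$ weighting, not a ``missing exponential factor,'' that produces the asymmetric prefactor $\mu_c(\delta_c\mu_t\delta_t)^2$ in \eqref{eq: AoC soft-1}. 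Also, the paper handles $\mu_t=\mu_c$ by redoing the integrals with the Erlang-type density rather than by taking the limit; either route is fine.
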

\begin{proof}
The proof is given in Appendix~\ref{App: average AoC soft closedform}.
\end{proof}

\subsection{Extension: Average AoC in G/G/1-G/G/1 Systems}\label{subsec: Extension soft}
In this section, we let the inter-arrival time $X_k$ follows a general distribution with $\mathbb{E}[X_k] = 1/\lambda$, meaning that computational tasks arrive at the source according to a general stochastic process characterized by an average rate of $\lambda$. Let $S_{k, t}$ and $S_{k, c}$ have general distributions with $\mathbb{E}[S_{k, t}]=1/\mu_t$ and $\mathbb{E}[S_{k, c}] = 1/\mu_c$, respectively. In other words, the network forms a G/G/1-G/G/1 tandem. 

To obtain the average AoC, it is necessary to know the stochastic information about the inter-arrival interval, the service delay at the transmission and computation nodes, and the system time of a task in both the transmission and computation queues. Let the system time of a task in the transmission and computation queues be denoted as $U_{k, t}$ and $U_{k, c}$, respectively. 

Note that sequences $\{X_k\}_k$, $\{S_{k, t}\}_k$, $\{S_{k, c}\}_k$, $\{U_{k, t}\}_k$, and $\{U_{k, c}\}_k$ are i.i.d with respect to $k$, respectively. The correpsonding density functions are denoted by $f_X$, $f_{S_t}$, $f_{S_c}$, $f_{U_t}$, and $f_{U_c}$, respectively. 
 The joint density function of $U_{k, t}$ and $U_{k, c}$ is denoted as $f_{U_t, U_c}$. 
The following assumptions are then required. 

\begin{assumption}\label{ass: distributions}
The density functions $f_X$, $f_{S_t}$, $f_{S_c}$, and $f_{U_t, U_c}$ are known.
\end{assumption}
Based on Assumption~\ref{ass: distributions}, the following steps can be taken:
\begin{itemize}
	\item By integrating $f_{U_t, U_c}$, we can obtain the marginal density functions $f_{U_t}$ and $f_{U_c}$.
	\item Since $S_{k, t}$ (respectively, $S_{k, c}$) is independent of $U_{k, t}-S_{k, t}$ (respectively, $U_{k, c}-S_{k, c}$), and the marginal density functions $f_{U_t}$ and $f_{U_c}$ are available,  we can derive the density functions of the waiting times in both queues, namely $f_{U_t-S_t}$ and $f_{U_c-S_c}$, through inverse convolution.
	\item Given that $f_{U_t-S_t}$,  $f_{U_c-S_c}$, and $f_{U_t, U_c}$ are known, we can again apply inverse convolution to derive the joint density functions $f_{U_t, U_c-S_c}$ and $f_{U_c, U_t-S_t}$. 
\end{itemize}

\begin{theorem}\label{thm: average AoC soft extension}
When Assumption~\ref{ass: distributions} is satisfied, let $\mu_t$ and $\mu_c$ be in Theorem~\ref{thm: average AoC soft}. The closed form expression for $\Theta_{\text{soft}}$ is given by:
\begin{align}\label{eq: average AoC soft extension}
\Theta_{\text{soft}} =& \frac{1}{\mu_t}+\frac{1}{\mu_c} + \lambda\big(\frac{\int_{0}^{\infty}x^2f_X(x)dx}{2}+g_1+ g_2\big)\nonumber\\
+&\lambda \frac{\int_{w}^{\infty}\eta_1(\tau)d\tau}{2}\int_{w}^{\infty}(\tau-w)^2(\eta_1(\tau)-\eta_2(\tau))d\tau\big),
\end{align}
where  $\eta_1(\tau) = \int_{0}^{\tau}f_{U_t, U_c}(u, \tau-u)du$, $\eta_2(\tau) = \int_{0}^{\tau}f_{U_t, U_c-S_c}(u, \tau-u)du$,
\begin{align*}
g_1 = &\int_{0}^{\infty} x \int_{x}^{\infty} (\tau-x)f_{U_t}(\tau)d\tau f_X(x)dx,\\
g_2 =& \int_{0}^{\infty}x \int_{0}^{\infty} \tau 
\int_{0}^{\infty} (\xi(x)\cdot f_{S_{t}}(y)+ f_{S_t}(y) \circledast f_{U_{t}}(x - y)\big)\\
&\cdot
f_{U_c}(\tau+y)dy
d\tau f_X(x)dx,
\end{align*}
and $\xi(x) = \int_{x}^{\infty} f_{U_t}(u)du$.
\end{theorem}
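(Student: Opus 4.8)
The plan is to specialize the general expression for $\Theta_{\text{soft}}$ from Theorem~\ref{thm: average AoC general} to the G/G/1--G/G/1 setting, translating each expectation in \eqref{eq: average AoC soft} into an integral against the density functions guaranteed by Assumption~\ref{ass: distributions}. The first term $\big(\mathbb{E}[X_kT_k]+\tfrac12\mathbb{E}[X_k^2]\big)/\mathbb{E}[X_k]$ splits as $\mu_t^{-1}+\mu_c^{-1}+\lambda\big(\tfrac12\int x^2 f_X\,dx + \mathbb{E}[X_k(T_k-S_{k,t}-S_{k,c})]\big)$ after writing $T_k = S_{k,t}+S_{k,c}+(\text{waiting in transmission queue})+(\text{waiting in computation queue})$ and using independence of $X_k$ from $S_{k,t},S_{k,c}$; the residual cross term $\mathbb{E}[X_k\cdot(\text{total waiting})]$ is what becomes $g_1+g_2$.

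First I would handle $g_1$: the waiting time in the transmission queue of task $k$ is $U_{k,t}-S_{k,t}$, but a cleaner route is to note that the covariance structure between $X_{k}$ (the inter-arrival gap \emph{ending} at $\tau_k$) and the transmission-queue system time $U_{k,t}$ is exactly what a Lindley-type recursion encodes; conditioning on $X_k=x$ and on $U_{k,t}=\tau$, the contribution of the transmission-queue wait to $\mathbb{E}[X_k\cdot(\text{wait})]$ is $\int_0^\infty x\int_x^\infty(\tau-x)f_{U_t}(\tau)\,d\tau\,f_X(x)\,dx$, which is precisely $g_1$ — the inner $(\tau-x)^+$ reflecting that if the previous task's system time $\tau$ exceeds the inter-arrival gap $x$ then the current task waits $\tau-x$. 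Then $g_2$ is the analogous but messier term for the \emph{computation}-queue wait: here one must propagate the delay through both stages, which is why $g_2$ involves a convolution $f_{S_t}\circledast f_{U_t}$ together with the survival function $\xi(x)=\int_x^\infty f_{U_t}(u)\,du$ and an integration against $f_{U_c}(\tau+y)$ — the shift by $y$ accounting for the transmission service time of the interfering task. I would derive $g_2$ by a two-level conditioning: first on the arrival gap $x$, then on whether the previous task was still in the transmission queue (contributing the $\xi(x)f_{S_t}(y)$ piece) or had already moved to computation (contributing the convolution piece), and finally integrating the resulting computation-queue wait against its density.

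For the second (penalty) term of \eqref{eq: average AoC soft}, I would rewrite $\epsilon_w = \Pr(T_k>w) = \int_w^\infty f_T(\tau)\,d\tau$ and observe $f_T(\tau) = \int_0^\tau f_{U_t,U_c}(u,\tau-u)\,du = \eta_1(\tau)$, since $T_k=U_{k,t}+U_{k,c}$ and $f_{U_t,U_c}$ is the joint density of the two stage system times; this gives $\epsilon_w = \int_w^\infty \eta_1(\tau)\,d\tau$. Similarly $\mathbb{E}\big[((T_k-w)^+)^2\big]=\int_w^\infty(\tau-w)^2\eta_1(\tau)\,d\tau$, and $\mathbb{E}\big[((T_k-S_{k,c}-w)^+)^2\big]=\int_w^\infty(\tau-w)^2\eta_2(\tau)\,d\tau$ where $\eta_2(\tau)=\int_0^\tau f_{U_t,U_c-S_c}(u,\tau-u)\,du$ is the density of $U_{k,t}+(U_{k,c}-S_{k,c})=T_k-S_{k,c}$, which is well-defined by the last bullet following Assumption~\ref{ass: distributions}. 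Assembling these pieces and dividing by $\mathbb{E}[X_k]=1/\lambda$ (hence multiplying by $\lambda$) yields \eqref{eq: average AoC soft extension}.

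The main obstacle I expect is the careful derivation of $g_2$: unlike $g_1$, the computation-queue waiting time of task $k$ depends jointly on the departure epoch of task $k-1$ from the computation server and on when task $k$ itself finishes transmission, so the relevant events are not simply described by a single marginal density and one must correctly track the correlation introduced by the shared transmitter — this is where the convolution $f_{S_t}\circledast f_{U_t}$ and the two-case split (previous task still transmitting vs.\ already in computation) enter, and getting the arguments of $f_{U_c}$ shifted by the right amount $y$ is the delicate step. A secondary technical point is justifying that the inverse-convolution steps in the bullets after Assumption~\ref{ass: distributions} actually produce honest densities (not just formal expressions), which rests on the stated independence of each stage's service time from its own waiting time; I would invoke this independence explicitly rather than re-derive it. Everything else — interchanging the order of integration, identifying $\eta_1,\eta_2$ as the claimed densities, and collecting terms — is routine bookkeeping.
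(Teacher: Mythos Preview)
Your proposal is correct and follows essentially the same route as the paper: start from the general expression in Theorem~\ref{thm: average AoC general}, decompose $T_k$ into the two service times and the two waiting times so that $\mathbb{E}[X_kS_{k,t}]/\mathbb{E}[X_k]$ and $\mathbb{E}[X_kS_{k,c}]/\mathbb{E}[X_k]$ collapse to $1/\mu_t$ and $1/\mu_c$, identify $g_1=\mathbb{E}[X_kW_{k,t}]$ via $W_{k,t}=(U_{k-1,t}-X_k)^+$, and handle the penalty term by recognizing $\eta_1,\eta_2$ as the densities of $U_{k,t}+U_{k,c}$ and $U_{k,t}+(U_{k,c}-S_{k,c})$. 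The one place where the paper is a bit more explicit than your sketch is $g_2$: rather than conditioning directly on the state of the previous task, the paper introduces the inter-departure time $D_{k,t}$ from the transmission queue, observes that $X_k\to D_{k,t}\to W_{k,c}$ is a Markov chain with $W_{k,c}=(U_{k-1,c}-D_{k,t})^+$, and then splits $f_{D_{k,t}\mid X_k=x}$ according to whether the transmission queue is busy or idle at arrival --- which is exactly your two-case split, so the variable $y$ is really $D_{k,t}$ (equal to $S_{k,t}$ in the busy case and $S_{k,t}+(x-U_{k-1,t})$ in the idle case), not a service time of an ``interfering'' task.
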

\begin{proof}
The proof is given in Appendix~\ref{App: proof of average AoC soft extension}.
\end{proof}
\begin{remark}
Let $X_{k}$, $S_{k, t}$, and $S_{k, c}$ be exponentially distributed with parameters $\lambda$, $\mu_t$, and $\mu_c$, respectively. The expression in \eqref{eq: average AoC soft extension} reduces to that in \eqref{eq: AoC soft-1} and \eqref{eq: AoC soft-2}.
\end{remark}
\begin{remark}
Assumption~\ref{ass: distributions} provides the minimum sufficient conditions for the closed form of the average AoC. However, it does not guarantee the convergence of $\Theta_{\text{soft}}$, which heavily depends on the distributions of $X_k$, $S_{k, t}$, $S_{k, c}$, $U_{k, t}$, and $U_{k, c}$.
\end{remark}

\section{AoC under the Hard Deadline}\label{sec: AoC under Hard Deadlines}

In this section, we first provide graphical insights into the curve of the AoC (see Fig.~\ref{AoC-hard}). Next, we derive a general expression for the time-average AoC, presented in Theorem~\ref{thm: average AoC hard general} (Section~\ref{subsec: general expression hard}). Following this, we accurately approximate the expression in a fundamental case: the M/M/1-M/M/1 system, as detailed in Theorem~\ref{thm: average AoC hard} (Section~\ref{subsec: Approximated Average AoC in M/M/1 - M/M/1 Tandems}). In addition, we define computation throughput (see Definition~\ref{def: computation throughput} in Section~\ref{subsec: Computation Throughput}) and derive its expression (see Lemma~\ref{lem: computation throughput} in Section~\ref{subsec: Computation Throughput}). Subsequently, we investigate the trade-off between computation freshness and computation throughput (see Lemma~\ref{lem: pareto optimal} in Section~\ref{subsec: Computation Throughput}).
Finally, we generalize the accurate approximations for both the average AoC and computation throughput to broader cases, specifically G/G/1-G/G/1 systems, as presented in Theorem~\ref{thm: average AoC hard extension} and Proposition~\ref{pro: approximated CT extension} (Section~\ref{subsec: Extension hard}).

\subsection{General Expression for $\Theta_{\text{hard}}$}\label{subsec: general expression hard}

From \eqref{eq: AoC-hard} in Definition~\ref{def: AoC}, under the hard deadline, $c_{\text{hard}}(t)$ is {\it solely} determined by informative tasks. 
When $w=0$, all tasks are considered invalid, leading to $c_{\text{hard}}(t)=t$, which increases linearly with time $t$. Conversely, when $w=\infty$, there is no deadline, and all tasks are considered valid. In this case, $G(t)=N(t)$ for all $t$, so $c_{\text{hard}}(t)=t-\tau_{G(t)}$, which is only affected by the latest task.

\begin{figure}[htbp]
	\centering
	\includegraphics[width=0.42\textwidth,height=0.39\textwidth]{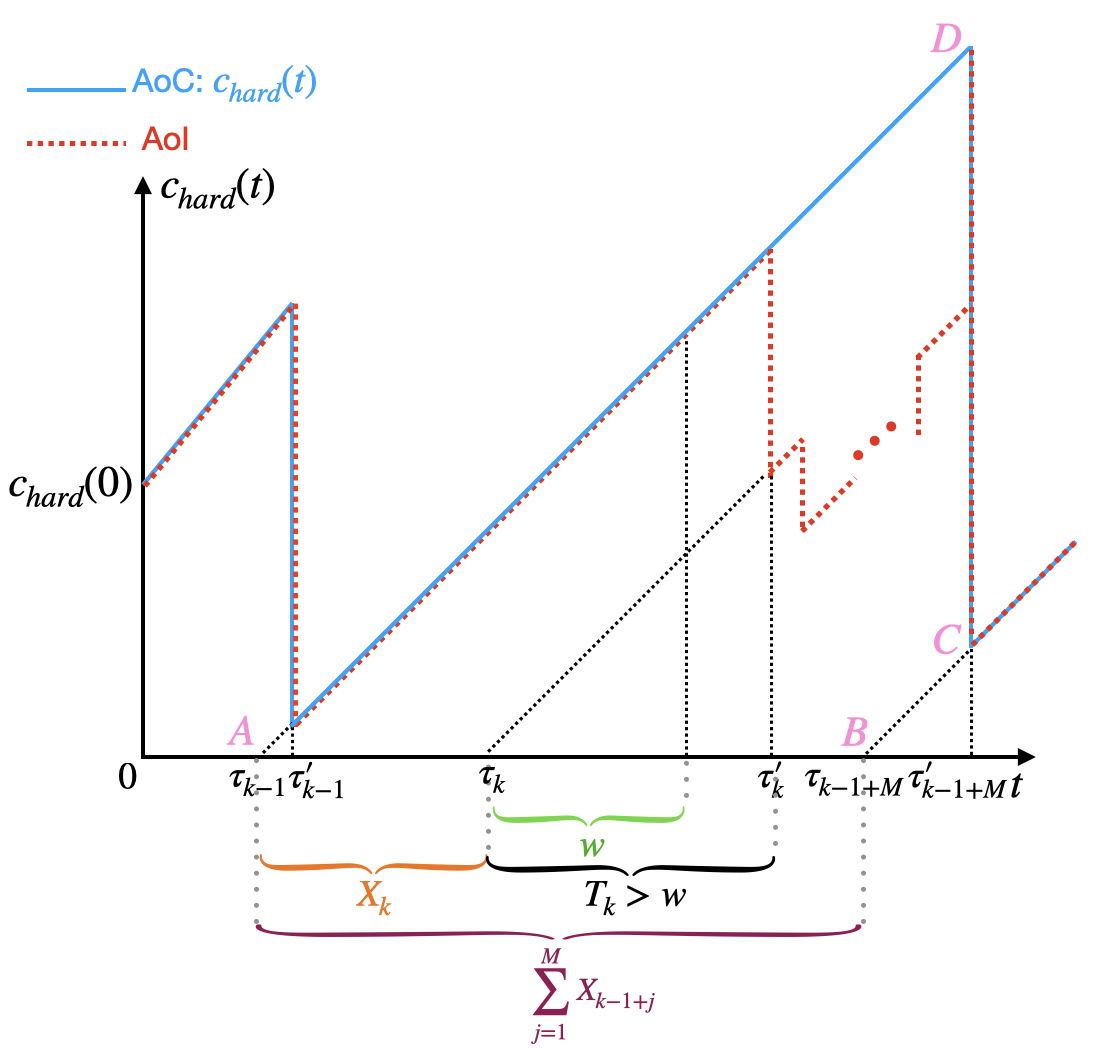}
	\caption{The curve of AoC under the hard deadline.}
	\label{AoC-hard}
\end{figure}

The AoC curve is depicted in Fig.~\ref{AoC-hard}, with the curve of AoI (see \cite{AoItutorial}) as a benchmark. In this figure, task $k-1$ is valid, so both AoI and AoC decreases at time $\tau_{k-1}'$. Suppose that after task $k-1$, the next valid task has the index $k-1+M$. Here, $M$ is a random variable with the distribution
\begin{align}\label{eq: Mk-1}
&\Pr\{M=n\}\nonumber\\
&=\Pr\{T_k>w,\cdots, T_{k-1+n-1}>w, T_{k-1+n}\leq w \}.
\end{align}
In \eqref{eq: Mk-1}, $M\geq 1$. Since the network is stationary, the random variable $M$ associated with every valid task has the identical distribution as in \eqref{eq: Mk-1}.
The AoC does not decrease at times $\tau_{k}'$, $\tau_{k+1}'$, $\cdots$, $\tau_{k-1+M-1}'$, and decreases at time $\tau_{k-1+M}'$. In the interval $[\tau_{k-1}', \tau_{k-1+M}')$, the AoC increases linearly with time $t$.

\begin{theorem}\label{thm: average AoC hard general}
The average AoC can be calculated as
\begin{align}\label{eq: average AoC hard}
	\Theta_{\text{hard}} = \frac{\mathbb{E}[T_M\cdot\sum_{j=1}^{M}X_j ]+\frac{1}{2}\mathbb{E}[\big(\sum_{j=1}^{M}X_j\big)^2]}{\mathbb{E}[\sum_{j=1}^{M}X_j]}.
\end{align}
\end{theorem}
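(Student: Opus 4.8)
The plan is to compute the time-average AoC under the hard deadline via the standard renewal-reward / area-under-the-curve decomposition, as in the classical AoI analysis, but with the renewal cycle defined by consecutive \emph{valid} tasks rather than consecutive arrivals. First I would fix a reference valid task, say task $k-1$, and let the next valid task have index $k-1+M$, with $M$ distributed as in \eqref{eq: Mk-1}. The AoC curve $c_{\text{hard}}(t)=t-\tau_{N(t)}$ drops only at the completion epochs of valid tasks, so the natural renewal points are $\{\tau_{k-1}', \tau_{k-1+M}', \dots\}$. Using the ergodic limit $\Theta_{\text{hard}} = \lim_{T\to\infty}\tfrac1T\int_0^T c_{\text{hard}}(t)\,dt$, I would write this as the ratio $\mathbb{E}[Q]/\mathbb{E}[L]$, where $L$ is the length of one renewal cycle and $Q$ is the area under $c_{\text{hard}}(t)$ over that cycle, invoking stationarity (the i.i.d.\ structure of $\{X_k\}$, $\{T_k\}$ guaranteed by the stationary queueing discipline) so that all cycles are statistically identical.

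Next I would identify the cycle length and the area geometrically from Fig.~\ref{AoC-hard}. The cycle runs from $\tau_{k-1}'$ to $\tau_{k-1+M}'$; since $\tau_{k-1+M}' = \tau_{k-1+M} + (\text{delay terms})$ and more usefully $\tau_{k-1+M}'-\tau_{k-1}' $ can be written in terms of inter-arrival times and system times, I would express the cycle length as $L = \sum_{j=1}^{M} X_j'$ for an appropriately indexed block of inter-arrival times (the arrivals falling between the two valid tasks' arrival instants plus the boundary correction from the delays), which after using $\tau_{k-1+M}' - \tau_{k-1}' = T_{k-1+M} + \sum_{j} X_j - T_{k-1}$ and the stationarity telescoping collapses to the stated $\mathbb{E}[\sum_{j=1}^M X_j]$ in the denominator. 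The area $Q$ over the cycle is the integral of a sawtooth that starts at height $t-\tau_{N}$ at the left endpoint and grows linearly with slope $1$ until the next drop; decomposing the trapezoid/polygon under the curve into a triangle of the form $\tfrac12(\sum_{j=1}^M X_j)^2$ plus a rectangle-type term $T_M \cdot \sum_{j=1}^M X_j$ (the base times the residual age carried over at the start of the cycle) yields the numerator $\mathbb{E}[T_M\sum_{j=1}^M X_j] + \tfrac12\mathbb{E}[(\sum_{j=1}^M X_j)^2]$. Here the subscript $M$ on $T_M$ reflects that the age at a drop is reset to the system time of the task that just completed, i.e.\ the $M$-th task of the cycle.

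Then I would assemble these into \eqref{eq: average AoC hard} by the renewal-reward theorem, checking that $\mathbb{E}[M]<\infty$ (which holds whenever $\epsilon_w = \Pr(T_k>w) < 1$, so $M$ is geometric-like with finite mean) and that $\mathbb{E}[X_j]<\infty$, ensuring the denominator is finite and positive and the limit exists almost surely. I would also note that the correlation between $T_M$ (the delay of the last, valid task in the cycle) and the block sum $\sum_{j=1}^M X_j$ is genuine — $M$ is a stopping-time-like quantity determined by the sequence $T_k>w$, which is correlated with the $X_j$ through the queueing dynamics — so the expectation $\mathbb{E}[T_M \sum_{j=1}^M X_j]$ does not factor; this is exactly the ``intricate correlation'' the paper warns about and is why no closed form follows immediately.

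The main obstacle I anticipate is the careful bookkeeping of the boundary terms when translating the geometric picture into the algebraic cycle decomposition: precisely which inter-arrival times $X_j$ belong to the cycle, how the system times $T_{k-1}$ and $T_{k-1+M}$ at the two ends enter, and verifying that the cross terms telescope correctly across consecutive cycles so that only $T_M$ (and not also $T_{k-1}$) survives in the stationary expectation. Getting the indexing consistent between $\tau$, $\tau'$, and the $M$-block — and confirming that the residual age carried into a cycle has the same distribution as $T_M$ from the previous cycle by stationarity — is the delicate step; once that is pinned down, the rest is the routine triangle-plus-rectangle area computation familiar from AoI analysis.
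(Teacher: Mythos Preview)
Your proposal is correct and follows essentially the same renewal-reward/graphical approach as the paper: partition the area under $c_{\text{hard}}(t)$ into pieces indexed by consecutive valid tasks, compute the expected area per piece and divide by the expected cycle length. One minor simplification the paper exploits: rather than taking the cycle endpoints at the \emph{departure} epochs $\tau_{k-1}',\tau_{k-1+M}'$ (which, as you note, forces you to track the telescoping of $T_{k-1}$ and $T_{k-1+M}$), it associates each parallelogram directly with the \emph{arrival}-indexed interval and computes its area as $\tfrac12(\sum_{j=1}^M X_j + T_M)^2 - \tfrac12 T_M^2$, which expands immediately to your rectangle-plus-triangle form and makes the denominator $\mathbb{E}[\sum_{j=1}^M X_j]$ fall out without any boundary bookkeeping.
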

\begin{proof}
The proof is given in Appendix~\ref{App: average AoC hard}.
\end{proof}
Although the general expression for the average AoC is given by \eqref{eq: average AoC hard}, a further exploration on this expression is challenging due to a couple of correlations involved. These correlations are detailed as follows.
\begin{itemize}
\item [(i)] Correlation between delays: In the queuing system, if the delay of task $k$, $T_k$, increases, the waiting time for task $k+1$ also increases, leading to a larger delay for task $k+1$, $T_{k+1}$. Hence, the sequence $\{T_k\}_k$ consists of identically distributed but positively correlated delays.
\item [(ii)] Correlation between delays and $M$ \big(defined in \eqref{eq: Mk-1}\big): According to \eqref{eq: Mk-1}, $M$ is the first index $n$ such that $T_n\leq w$ while all previous $T_k>w$. A higher value of $T_k$  suggests a higher likelihood of subsequent $T_j$ values (for $j>k$) also being high, thus making $M$ larger because it takes longer for a $T_k$ to
be less than or equal to $w$. This indicates that $T_k$ are $M$ are positively correlated. 
\item [(iii)] Correlation between the inter-arrival times $X_k$ and delays $T_k$:
If $X_k$ is larger, meaning that the inter-arrival time between task $k-1$ and task $k$ is longer, then $T_k$ is likely smaller because the waiting time for task $k$ is reduced. Therefore, $T_k$ and $X_k$ are negatively correlated. 
\item [(iv)] Correlation between the inter-arrival times $X_k$ and $M$: Since $X_k$ are $T_k$ are negatively correlated, and $T_k$ and $M$ are positively correlated, $X_k$ and $M$ are negatively correlated.
\end{itemize}

\subsection{Average AoC in M/M/1-M/M/1 Systems}\label{subsec: Approximated Average AoC in M/M/1 - M/M/1 Tandems}

Due to all these correlations, it is extremely challenging to derive the closed-form expression for $\Theta_{\text{hard}}$ in \eqref{eq: average AoC hard}. However, we can approximate it accurately under specific conditions.

\begin{theorem}\label{thm: average AoC hard}
When $\mu_t\gg\lambda$ and $\mu_c\gg\lambda$, the average AoC defined in \eqref{eq: average AoC hard} can be accurately approximated by $\hat{\Theta}_{\text{hard}}$, 
\begin{align}\label{eq: AoC hard-1}
\hat{\Theta}_{\text{hard}} =\mathbb{E}[T_M]+\frac{\mathbb{E}[X_1^2]}{2\mathbb{E}[X_1]}+\big(\frac{\mathbb{E}[M^2]}{2\mathbb{E}[M]}-\frac{1}{2}\big)\mathbb{E}[X_1].
\end{align}
Let $\rho_t$, $\rho_c$, $\delta_t$, $\delta_c$,
$\zeta_t$, and $\zeta_c$ be given in Theorem~\ref{thm: average AoC soft}, if $\mu_t\neq \mu_c$,
\begin{align}\label{eq: Thetahard-1}
\hat{\Theta}_{\text{hard}} =& \frac{\frac{1-\zeta_t(1+\mu_t\delta_t w)}{\mu_t^2\delta_t^2} - \frac{1-\zeta_c(1+\mu_c\delta_c w)}{\mu_c^2\delta_c^2}}{(1-\zeta_t)/\mu_t\delta_t-(1-\zeta_c)/\mu_c\delta_c}\nonumber\\
+&\frac{\mu_c-\mu_t}{\lambda\big(\mu_c\delta_c(1-\zeta_t) - \mu_t\delta_t(1-\zeta_c)\big)};
\end{align}
if $\mu_t=\mu_c$,
\begin{align}\label{eq: Thetahard-2}
\hat{\Theta}_{\text{hard}} =& \frac{\frac{2}{\mu_t\delta_t}-(\frac{2}{\mu_t\delta_t}+2w+\mu_t\delta_tw^2)\zeta_t }{1-\zeta_t(1+\mu_t\delta_tw)}\nonumber\\
+&\frac{1}{\lambda\big(1 - \zeta_t(1+\mu_t\delta_tw)\big)}.
\end{align}
\end{theorem}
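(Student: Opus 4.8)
## Proof Proposal for Theorem~\ref{thm: average AoC hard}

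The plan is to prove the statement in two stages: first derive the light-traffic approximation \eqref{eq: AoC hard-1} from the exact expression \eqref{eq: average AoC hard} of Theorem~\ref{thm: average AoC hard general}, and then evaluate \eqref{eq: AoC hard-1} in closed form for the exponential model to reach \eqref{eq: Thetahard-1}--\eqref{eq: Thetahard-2}.

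For the first stage, I would argue that $\mu_t\gg\lambda$ and $\mu_c\gg\lambda$ put both queues in light traffic, so waiting times are negligible and $T_k\approx S_{k,t}+S_{k,c}$ depends (to leading order) only on task~$k$'s own service times. This makes the four correlations (i)--(iv) listed after Theorem~\ref{thm: average AoC hard general} disappear: $\{T_k\}_k$ becomes i.i.d.\ and independent of $\{X_j\}_j$; since $M=\min\{n\ge 1:T_{k-1+n}\le w\}$ is a functional of $\{T_k\}_k$ only, it is independent of $\{X_j\}_j$; and because $\{T_k\}$ is i.i.d., conditioning on $M=n$ leaves $T_M$ distributed as $T_k\mid T_k\le w$ irrespective of $n$, so $T_M$ is independent of $M$ and hence of $S_M:=\sum_{j=1}^M X_j$. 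Under this decoupling I would invoke Wald's identity, $\mathbb{E}[S_M]=\mathbb{E}[M]\,\mathbb{E}[X_1]$, the standard random-sum identity $\mathbb{E}[S_M^2]=\mathbb{E}[M]\big(\mathbb{E}[X_1^2]-\mathbb{E}[X_1]^2\big)+\mathbb{E}[M^2]\,\mathbb{E}[X_1]^2$, and $\mathbb{E}[T_M S_M]=\mathbb{E}[T_M]\,\mathbb{E}[M]\,\mathbb{E}[X_1]$; substituting these into \eqref{eq: average AoC hard} and simplifying produces \eqref{eq: AoC hard-1}.

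For the second stage I specialize to the exponential model. With $X_1\sim\mathrm{Exp}(\lambda)$ one has $\mathbb{E}[X_1]=1/\lambda$ and $\mathbb{E}[X_1^2]=2/\lambda^2$; under the decoupling $M$ is geometric with success probability $1-\epsilon_w$, so $\mathbb{E}[M]=1/(1-\epsilon_w)$, $\mathbb{E}[M^2]=(1+\epsilon_w)/(1-\epsilon_w)^2$, and hence $\mathbb{E}[M^2]/(2\mathbb{E}[M])-1/2=\epsilon_w/(1-\epsilon_w)$, so that \eqref{eq: AoC hard-1} collapses to $\hat\Theta_{\text{hard}}=\mathbb{E}[T_M]+1/\big(\lambda(1-\epsilon_w)\big)$. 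It then remains to evaluate $\epsilon_w=\Pr(T_k>w)$ and $\mathbb{E}[T_M]=\mathbb{E}[T_k\mid T_k\le w]$. By Burke's theorem the first queue's departure stream is Poisson($\lambda$), so the second queue is M/M/1; by Reich's theorem the stationary sojourn times $U_{k,t}\sim\mathrm{Exp}(\mu_t\delta_t)$ and $U_{k,c}\sim\mathrm{Exp}(\mu_c\delta_c)$ of a task in the two queues are independent, so $T_k=U_{k,t}+U_{k,c}$ is hypoexponential when $\mu_t\ne\mu_c$ (equivalently $\mu_t\delta_t\ne\mu_c\delta_c$) and Erlang-$2$ with rate $\mu_t\delta_t$ when $\mu_t=\mu_c$. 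In each case I would compute the tail $\Pr(T_k>w)$ and the truncated mean $\int_0^w \tau f_T(\tau)\,d\tau$ by elementary exponential integrals; dividing the latter by $1-\epsilon_w$ gives the first term of \eqref{eq: Thetahard-1} (resp.\ \eqref{eq: Thetahard-2}), while $1/\big(\lambda(1-\epsilon_w)\big)$ rewrites, using $\mu_c\delta_c-\mu_t\delta_t=\mu_c-\mu_t$, as the second term.

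The hard part will be the justification in the first stage that the decoupling is genuinely \emph{accurate}, not merely convenient: the exact $\Theta_{\text{hard}}$ carries real positive and negative correlations among $\{T_k\}$, $M$, and $\{X_j\}$, all of which are discarded. I would make this quantitative by bounding the waiting-time contributions to the relevant moments of $T_k$ by $O(\rho_t)+O(\rho_c)$, showing the induced errors in $\mathbb{E}[T_M S_M]$, $\mathbb{E}[S_M^2]$ and in the geometric approximation of $M$ are of the same order and therefore vanish as $\lambda/\mu_t,\lambda/\mu_c\to 0$; confirmation of accuracy at moderate loads would be left to the numerical section.
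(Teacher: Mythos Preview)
Your proposal is correct and follows essentially the same route as the paper: first argue that light traffic decouples $\{T_k\}$, $M$, and $\{X_j\}$ so that Wald-type random-sum identities reduce \eqref{eq: average AoC hard} to \eqref{eq: AoC hard-1}, then evaluate the exponential/hypoexponential ingredients to obtain \eqref{eq: Thetahard-1}--\eqref{eq: Thetahard-2}. Your explicit invocation of Burke's and Reich's theorems and the proposed $O(\rho_t)+O(\rho_c)$ error control go slightly beyond what the paper actually does (it simply asserts approximate independence and defers accuracy to simulation), but the underlying argument is the same.
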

\begin{proof}
The proof is given in Appendix~\ref{App: average AoC hard closedform}.
\end{proof}
\begin{remark}\label{remark: lower bound}
(Lower Bound) $\hat{\Theta}_{\text{hard}}$ in \eqref{eq: AoC hard-1} captures an extreme case where the positive correlations among $\{T_k\}_k$ are removed. Therefore, $\hat{\Theta}_{\text{hard}}$ in \eqref{eq: AoC hard-1} serves as a lower bound for $\Theta_{\text{hard}}$. This lower bound is approximately tight when $\mu_t\gg\lambda$ and $\mu_c\gg\lambda$.
\end{remark}
By exchanging $\mu_t$ and $\mu_c$ in \eqref{eq: Thetahard-1} and \eqref{eq: Thetahard-2}, we observe that $\Theta_{\text{soft}}$ remains unchanged. This indicates that $\Theta_{\text{soft}}$ is symmetric with respect to $(\mu_t, \mu_c)$.
From a mathematical standpoint, this symmetry implies that both communication latency and computation latency equally affect $\Theta_{\text{soft}}$. Therefore, in practical terms, to improve the computation freshness $\Theta_{\text{soft}}$, one can reduce either the communication latency or the computation latency, as both have the same impact on the overall freshness.

\subsection{Computation Throughput}\label{subsec: Computation Throughput}
Under the hard deadline, the frequency of informative tasks is influenced by two facts: the arrival rate $\lambda$ and the deadline $w$. We define the frequency of informative tasks as {\it computation throughput}. Formally, we have the following definition.
\begin{definition}\label{def: computation throughput}
(Computation Throughput) Let $K(t)$  denote the number of informative tasks within the interval $[0, t]$. The computation throughput is then defined as
\begin{align}\label{eq: computation throughput}
\Xi = \lim_{t\to\infty}\frac{K(t)}{t}.
\end{align}
\end{definition}

\begin{lemma}\label{lem: computation throughput}
The computation throughput is given by,
\begin{align}\label{eq: computation throughput-1}
\Xi = \frac{1}{\mathbb{E}[\sum_{k=1}^{M}X_k]}.
\end{align}
\end{lemma}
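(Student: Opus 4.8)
\textbf{Proof Plan for Lemma~\ref{lem: computation throughput}.}

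The plan is to relate the counting process $K(t)$ to the inter-arrival structure of informative (valid) tasks, and then invoke a renewal-reward type argument. First I would observe that the informative tasks form a subsequence of all tasks, and by the definition of $M$ in \eqref{eq: Mk-1}, the gap (in task index) between two consecutive informative tasks is exactly one copy of the random variable $M$. Since the queuing discipline is stationary and first-come first-serve, the successive index-gaps between informative tasks are i.i.d.\ copies of $M$. Consequently, if $k$ is informative and the next informative task has index $k+M$, then the \emph{arrival-time} separation between these two informative tasks is $\sum_{j=1}^{M} X_{k+j}$, a sum of $M$ consecutive inter-arrival times starting right after task $k$.

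Next I would set up the renewal structure on the source arrival timeline. Let $V_n$ denote the arrival time $\tau$ of the $n$-th informative task. Then $V_{n+1} - V_n = \sum_{j=1}^{M_n} X_{(\cdot)+j}$ where $M_n$ are i.i.d.\ copies of $M$; call this increment $Y_n$. The $Y_n$ are i.i.d.\ with mean $\mathbb{E}[Y] = \mathbb{E}\big[\sum_{k=1}^{M} X_k\big]$ (the notation in \eqref{eq: computation throughput-1} suppresses the index shift since the $X_k$ are i.i.d.). Now $K(t)$, which counts informative tasks completed by time $t$, differs from the renewal count $\sup\{n : V_n \le t\}$ only by a bounded amount: each informative task's completion time $\tau'$ lies within a finite (a.s.\ finite-mean) delay $T$ of its arrival time $\tau = V_n$, and these delays do not accumulate, so $K(t)$ and the arrival-side renewal count are within $o(t)$ of each other almost surely. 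Hence $\lim_{t\to\infty} K(t)/t = \lim_{t\to\infty} (\text{renewal count})/t$.

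Finally, applying the elementary renewal theorem to the i.i.d.\ increments $Y_n$ gives
\begin{align}\label{eq: ct-proof-final}
\lim_{t\to\infty}\frac{K(t)}{t} = \frac{1}{\mathbb{E}[Y]} = \frac{1}{\mathbb{E}\big[\sum_{k=1}^{M} X_k\big]},
\end{align}
which is exactly \eqref{eq: computation throughput-1}. I expect the main obstacle to be the rigorous justification that $K(t)$ tracks the arrival-side renewal count up to a sublinear error — i.e., handling the fact that "informative" is defined via completion times $\tau'_k \le t$ rather than arrival times, so one must argue the system-time delays $T_k$ are stable (finite mean, no drift to infinity under the stationarity assumption) and therefore cannot cause $K(t)/t$ to differ from the arrival-timeline rate. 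Once that coupling is in place, the renewal-reward computation is routine, and the independence of $M$ from the shift of the $X$-indices (so that $\mathbb{E}[\sum_{j=1}^{M} X_{k+j}] = \mathbb{E}[\sum_{k=1}^{M} X_k]$) follows from the i.i.d.\ and stationarity assumptions already in force.
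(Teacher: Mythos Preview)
Your proposal is correct and follows essentially the same renewal/law-of-large-numbers argument as the paper: decompose the timeline into i.i.d.\ blocks $\sum_{k=1}^{M_j} X_k$ between consecutive informative tasks and take the limit of $K(t)/t$. Your version is somewhat more carefully framed (explicitly invoking the elementary renewal theorem and noting the arrival-vs-completion time discrepancy is $o(t)$), but the core idea and mechanics match the paper's proof.
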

\begin{proof}
The proof is given in Appendix~\ref{App: computation throughput}.
\end{proof}
In \eqref{eq: computation throughput-1}, the arrival rate $\lambda$ is reflected in $X_k$, while the deadline $w$ is captured by $M$. It is worth noting that Definition~\ref{def: computation throughput} can apply to the case with a soft deadline. Under the soft deadline, all tasks are valid, so \eqref{eq: computation throughput} implies that $\Xi = \lim_{t \to \infty} \frac{K(t)}{t} = \lim_{t \to \infty} \frac{G(t)}{t} = \lambda$, which is a trivial case. Therefore, we did not investigate the computation throughput concept in Section~\ref{sec: AoC under Soft Deadlines}.

\begin{proposition}\label{pro: approximated CT}
When $\mu_t\gg\lambda$ and $\mu_c\gg\lambda$, the computation throughput defined in \eqref{eq: computation throughput-1} can be accurately approximated by $\hat{\Xi}$, 
\begin{align}\label{eq: TC}
\hat{\Xi} =\frac{1}{\mathbb{E}[M]\mathbb{E}[X_1]}.
\end{align}
Let $\rho_t$, $\rho_c$, $\delta_t$, $\delta_c$,
$\zeta_t$, and $\zeta_c$ be given in Theorem~\ref{thm: average AoC soft}, if $\mu_t\neq \mu_c$,
\begin{align}\label{eq: TC-1}
\hat{\Xi}=\lambda\cdot\frac{\mu_c\delta_c (1-\zeta_t)-\mu_t\delta_t(1-\zeta_c)}{\mu_c-\mu_t}, 
\end{align}
if $\mu_t=\mu_c$,
\begin{align}\label{eq: TC-2}
\hat{\Xi} =& \lambda(1-(1+\mu_t\delta_tw)\zeta_t).
\end{align}
\end{proposition}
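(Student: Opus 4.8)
The plan is to derive $\hat{\Xi}$ in two stages: first establish the approximation $\hat{\Xi} = 1/(\mathbb{E}[M]\mathbb{E}[X_1])$ from the exact expression in Lemma~\ref{lem: computation throughput}, and then evaluate $\mathbb{E}[M]$ explicitly for the M/M/1-M/M/1 tandem. For the first stage, I would invoke the same heuristic already used in Theorem~\ref{thm: average AoC hard} and Remark~\ref{remark: lower bound}: when $\mu_t\gg\lambda$ and $\mu_c\gg\lambda$, the queues are lightly loaded, the waiting times are negligible relative to the service times, and the delays $\{T_k\}_k$ become approximately independent. Under that decoupling, $M$ (the number of tasks up to and including the first valid one after a valid task) becomes geometric with success probability $1-\epsilon_w$ where $\epsilon_w=\Pr(T_k>w)$, and — crucially — $M$ becomes approximately independent of the inter-arrival times $\{X_k\}_k$, so that $\mathbb{E}[\sum_{k=1}^M X_k] \approx \mathbb{E}[M]\,\mathbb{E}[X_1]$ by Wald's identity. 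This yields \eqref{eq: TC}. I note that this is exactly the reciprocal relationship one expects, and indeed comparing \eqref{eq: TC} with \eqref{eq: AoC hard-1} shows $\hat{\Xi}$ reappears as the $1/(\mathbb{E}[M]\mathbb{E}[X_1])$ factor, so consistency with Theorem~\ref{thm: average AoC hard} is automatic.

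For the second stage, I need $\mathbb{E}[M]$. Since $M$ is geometric with $\Pr(M=n) = \epsilon_w^{n-1}(1-\epsilon_w)$ under the independence approximation, $\mathbb{E}[M] = 1/(1-\epsilon_w) = 1/\Pr(T_k\le w)$. So it remains to compute $\Pr(T_k\le w)$, or equivalently $\epsilon_w = \Pr(T_k>w)$, where $T_k = U_{k,t} + U_{k,c}$ is the sum of the (independent, in the light-traffic regime) sojourn times in the two M/M/1 queues. For an M/M/1 queue with arrival rate $\lambda$ and service rate $\mu$, the stationary sojourn time is exponential with rate $\mu-\lambda = \mu\delta$ (with $\delta = 1-\rho$); hence $U_{k,t}\sim\mathrm{Exp}(\mu_t\delta_t)$ and $U_{k,c}\sim\mathrm{Exp}(\mu_c\delta_c)$. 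The plan is then to compute the tail of the sum of two independent exponentials. When $\mu_t\delta_t\neq\mu_c\delta_c$ (which in the light-traffic regime essentially amounts to $\mu_t\neq\mu_c$), a partial-fraction / convolution computation gives
\begin{align*}
\Pr(T_k\le w) = 1 - \frac{\mu_c\delta_c\, e^{-\mu_t\delta_t w} - \mu_t\delta_t\, e^{-\mu_c\delta_c w}}{\mu_c\delta_c - \mu_t\delta_t} = \frac{\mu_c\delta_c(1-\zeta_t) - \mu_t\delta_t(1-\zeta_c)}{\mu_c\delta_c-\mu_t\delta_t},
\end{align*}
where $\zeta_t = e^{-\mu_t\delta_t w}$, $\zeta_c = e^{-\mu_c\delta_c w}$; and when $\mu_t\delta_t=\mu_c\delta_c$ (the $\mu_t=\mu_c$ case), $T_k$ is Erlang-2 and $\Pr(T_k\le w) = 1 - (1+\mu_t\delta_t w)\zeta_t$. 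Substituting $\mathbb{E}[M]=1/\Pr(T_k\le w)$ and $\mathbb{E}[X_1]=1/\lambda$ into \eqref{eq: TC} produces exactly \eqref{eq: TC-1} and \eqref{eq: TC-2}, after simplifying $\mu_c\delta_c-\mu_t\delta_t = \mu_c-\mu_t$ (since $\mu_t\delta_t = \mu_t-\lambda$ and $\mu_c\delta_c=\mu_c-\lambda$, their difference is $\mu_c-\mu_t$).

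The main obstacle is not the algebra but justifying the two approximations cleanly: (i) that $M$ is genuinely geometric, which requires the delays $\{T_k\}_k$ to be i.i.d. rather than merely identically distributed and positively correlated — this is where the light-traffic condition $\mu_t,\mu_c\gg\lambda$ does the work, since the waiting-time contributions that couple successive delays vanish; and (ii) that $M\perp\{X_k\}_k$, so Wald applies with $\mathbb{E}[\sum_1^M X_k]=\mathbb{E}[M]\mathbb{E}[X_1]$ rather than carrying the correlation term identified in item (iv) of the discussion after Theorem~\ref{thm: average AoC hard general}. I would state these as the controlled approximations underlying $\hat{\Xi}$ (mirroring the treatment of $\hat{\Theta}_{\text{hard}}$), rather than attempt a rigorous error bound, and defer the detailed verification — including the convolution for the exponential-sum tail and the reduction of the denominator — to Appendix~\ref{App: computation throughput} alongside the proof of Lemma~\ref{lem: computation throughput}, since the bulk of the work is shared with Theorem~\ref{thm: average AoC hard}.
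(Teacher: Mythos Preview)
Your proposal is correct and follows essentially the same approach as the paper: decouple $M$ from $\{X_k\}$ under the light-traffic assumption to get $\hat{\Xi}=\lambda(1-\epsilon_w)$, then substitute the explicit formula for $\epsilon_w$. The only difference is that you recompute $\Pr(T_k\le w)$ from the convolution of two exponentials, whereas the paper simply cites the expression for $\epsilon_w$ already derived in the proof of Theorem~\ref{thm: average AoC soft} \big(equation~\eqref{eq: prob exceeding-1}\big); your simplification $\mu_c\delta_c-\mu_t\delta_t=\mu_c-\mu_t$ is exactly what makes that cited formula match your expression.
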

\begin{proof}
The proof  is given in Appendix~\ref{App: approximated CT}.
\end{proof}
\begin{remark}\label{remark: upper bound}
(Upper Bound) 
According to \eqref{eq: Mk-1}, a higher value of $T_k$  suggests a higher likelihood of subsequent $T_j$ values (for $j>k$) also being high, thus making $M$ larger. However,
$\hat{\Xi}$ in \eqref{eq: TC} captures an extreme case where positive correlations among $\{T_k\}_k$ are removed, resulting in a smaller expectation for $M$. Consequently, $\hat{\Xi}$ in \eqref{eq: TC} serves an upper bound for $\hat{\Xi}$. Additionally, this upper bound is approximately tight when $\mu_t\gg\lambda$ and $\mu_c\gg\lambda$.
\end{remark}

A Pareto-optimal point represents a state of resources allocation where improving one objective necessitates compromising the other. A pair $(\hat{\Theta}^*, \hat{\Xi}^*)$ is defined as a Pareto-optimal point if, for any $(\hat{\Theta}, \hat{\Xi})$, both conditions (i) $\hat{\Theta}<\hat{\Theta}^*$ (or $\hat{\Theta}\leq \hat{\Theta}^*$) and (ii) $\hat{\Xi}\geq\hat{\Xi}^*$ (or $\hat{\Xi}>\hat{\Xi}^*$) cannot hold simulatenously \cite{JLXYSK2020}.  This indicates that reducing computation freshness is impossible without degrading computation throughput, and increasing computation throughput cannot occur without compromising computation freshness.  

While closed-form expressions for computation freshness \big(see \eqref{eq: average AoC hard}\big) and computation throughput \big(see \eqref{eq: computation throughput-1}\big) are unavailable, their relationship can be approximated using \eqref{eq: AoC hard-1} and \eqref{eq: TC}, the analysis focuses on weakly Pareto-optimal points rather than strict Pareto-optimal points \cite{Pareto}. Consider the following optimization problem:
\begin{align}\label{eq: pareto optimization}
&\min_{\lambda:\,\, \hat{\Xi} > u}\,\,\hat{\Theta}.
\end{align}
Let the corresponding $\hat{\Theta}$ and $\hat{\Xi}$ as $\hat{\Theta}(u)$ and $\hat{\Xi}(u)$, respectively. The tradeoff between  computation freshness and the computation throughput is explored in the following lemma.
\begin{lemma}\label{lem: pareto optimal}
The objective pair $\big(\hat{\Theta}(u), \hat{\Xi}(u)\big)$ is a weakly Pareto-optimal point.
\end{lemma}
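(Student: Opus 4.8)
The plan is to establish weak Pareto optimality of $\big(\hat{\Theta}(u), \hat{\Xi}(u)\big)$ by a direct contradiction argument, exploiting the fact that the optimization problem in \eqref{eq: pareto optimization} is parameterized by a single scalar $\lambda$. First I would set up the contradiction: suppose $\big(\hat{\Theta}(u), \hat{\Xi}(u)\big)$ is not weakly Pareto-optimal, so there exists some $\lambda'$ achieving a pair $(\hat{\Theta}', \hat{\Xi}')$ with $\hat{\Theta}' < \hat{\Theta}(u)$ \emph{and} $\hat{\Xi}' > \hat{\Xi}(u)$ simultaneously. Since $\hat{\Xi}' > \hat{\Xi}(u) \geq u$, the value $\lambda'$ is feasible for problem \eqref{eq: pareto optimization}; but then $\hat{\Theta}' < \hat{\Theta}(u)$ contradicts the optimality of $\hat{\Theta}(u)$ as the minimum over all feasible $\lambda$. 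This is the skeleton; the remaining work is to make sure the feasibility and the strict-inequality bookkeeping are airtight, in particular that the constraint $\hat{\Xi} > u$ is satisfied with the correct (non-strict versus strict) direction so that $\lambda'$ genuinely lies in the feasible set.

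Next I would address the subtlety that $\hat{\Theta}$ and $\hat{\Xi}$ are functions of the \emph{single} decision variable $\lambda$ (with $\mu_t, \mu_c, w$ fixed), via the closed-form approximations \eqref{eq: AoC hard-1}--\eqref{eq: Thetahard-2} and \eqref{eq: TC}--\eqref{eq: TC-2}. Because both objectives are driven by the same $\lambda$, I would note the monotone relationship between them: increasing $\lambda$ raises the conflict frequency $\epsilon_w$ and hence $\mathbb{E}[M]$, which decreases $\hat{\Xi}$ while generally increasing $\hat{\Theta}$. This coupling is exactly what forces a genuine trade-off rather than a situation where both can be improved at once. I would verify, using the explicit expressions, that along the feasible region $\{\lambda : \hat{\Xi}(\lambda) > u\}$ the map $\lambda \mapsto \hat{\Theta}(\lambda)$ and $\lambda \mapsto \hat{\Xi}(\lambda)$ cannot both be improved beyond the optimizer's values — or, more cleanly, simply invoke the contradiction argument above, which does not actually require monotonicity, only that the minimizer of a constrained scalar problem cannot be beaten by another feasible point.

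The main obstacle I anticipate is the boundary behavior of the constraint set. Since \eqref{eq: pareto optimization} uses a \emph{strict} inequality $\hat{\Xi} > u$, the feasible set is open, and the infimum $\hat{\Theta}(u)$ may or may not be attained; if it is attained at some $\lambda(u)$ with $\hat{\Xi}(\lambda(u)) = u$ exactly (on the closure but not the open set), then one must be careful about whether $\hat{\Theta}(u)$ is a minimum or merely an infimum, and the contradiction argument needs the competitor pair to satisfy the strict constraint. This is precisely why the statement says \emph{weakly} Pareto-optimal: a point $(\hat{\Theta}^*, \hat{\Xi}^*)$ is weakly Pareto-optimal when one cannot strictly improve \emph{both} objectives at once, which is a weaker requirement than ruling out improving one while not worsening the other, and it is robust to these boundary issues. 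I would handle this by taking limits along a feasible sequence $\lambda_n$ with $\hat{\Xi}(\lambda_n) \downarrow u$ and $\hat{\Theta}(\lambda_n) \to \hat{\Theta}(u)$, and showing any hypothetical strict simultaneous improvement at some $\lambda'$ would already beat $\hat{\Theta}(\lambda_n)$ for large $n$ while remaining feasible, giving the contradiction. The rest is routine algebra with the closed forms, which I would relegate to the appendix.

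\begin{proof}
The detailed proof is given in Appendix~\ref{App: pareto optimal}.
\end{proof}
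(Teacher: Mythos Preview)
Your core contradiction argument is exactly the paper's proof: assume a point strictly improves both objectives, observe it is feasible for \eqref{eq: pareto optimization}, and contradict the minimality of $\hat{\Theta}(u)$. The additional discussion you give about monotonicity and boundary behavior is more than the paper provides, but the essential approach is identical.
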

\begin{proof}
The proof  is given in Appendix~\ref{App: lem: pareto optimal}.
\end{proof}

\subsection{Extension: Average AoC in G/G/1-G/G/1 Systems}\label{subsec: Extension hard}

In this section, we extend the average AoC in M/M/1-M/M/1 tandems  (see Section~\ref{subsec: Approximated Average AoC in M/M/1 - M/M/1 Tandems}) to general case, i.e., G/G/1 - G/G/1 tandems: Computational tasks arrive at the source via a random process characterized by an average rate of $\lambda$. Upon arrival, the transmission delay of each task follows a general distribution with an average rate of $\mu_t$, and the computation delay at the computational node follows a general distribution with an average rate of $\mu_c$.

\begin{theorem}\label{thm: average AoC hard extension}
When Assumption~\ref{ass: distributions} is satisfied, let $\mu_t$ and $\mu_c$ be in Theorem~\ref{thm: average AoC soft}. When $\mu_t\gg\lambda$ and $\mu_c\gg\lambda$, the average AoC defined in \eqref{eq: average AoC hard} can be accurately approximated by \eqref{eq: AoC hard-1}. In particular,
\begin{align}\label{eq: AoC hard-extension}
\hat{\Theta}_{\text{hard}} =& \frac{\int_{0}^{w}\tau\eta_1(\tau) d\tau}{F_T(w)}+\frac{\int_{0}^{\infty}x^2f_X(x)dx}{2\int_{0}^{\infty}xf_X(x)dx}\nonumber\\
+&\frac{1-F_T(w)}{F_T(w)}\cdot\int_{0}^{\infty}xf_X(x)dx.
\end{align}
where 
$F_T(w) = \int_{0}^{w}\eta_1(\tau) d\tau$ and $\eta_1(\tau)$ is defined in Theorem~\ref{thm: average AoC soft extension}.
\end{theorem}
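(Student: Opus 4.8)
The plan is to first reduce to the decorrelated expression \eqref{eq: AoC hard-1} and then evaluate its three terms against the densities in Assumption~\ref{ass: distributions}. The derivation of \eqref{eq: AoC hard-1} from the exact formula \eqref{eq: average AoC hard} in the proof of Theorem~\ref{thm: average AoC hard} only uses the high-rate regime $\mu_t\gg\lambda$, $\mu_c\gg\lambda$ to suppress the positive correlations among $\{T_k\}_k$ and the cross-correlations of $\{X_k\}_k$ with $\{T_k\}_k$ and $M$ (items (i)--(iv) following Theorem~\ref{thm: average AoC hard general}); at no point does it invoke the exponential form of $X_k$, $S_{k,t}$, or $S_{k,c}$. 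Hence, applying Wald-type identities to $\mathbb{E}[\sum_{j=1}^{M}X_j]$, $\mathbb{E}[(\sum_{j=1}^{M}X_j)^2]$, and $\mathbb{E}[T_M\sum_{j=1}^{M}X_j]$ in \eqref{eq: average AoC hard}, the relation $\hat{\Theta}_{\text{hard}} = \mathbb{E}[T_M] + \frac{\mathbb{E}[X_1^2]}{2\mathbb{E}[X_1]} + \big(\frac{\mathbb{E}[M^2]}{2\mathbb{E}[M]}-\frac{1}{2}\big)\mathbb{E}[X_1]$ carries over to G/G/1-G/G/1 tandems, and the remaining task is to write each expectation in terms of $f_X$ and $f_{U_t,U_c}$.

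Next I would identify the law of the per-task delay. Since $T_k = U_{k,t}+U_{k,c}$ is the sum of the system times in the transmission and computation queues, its density is $f_T(\tau) = \int_0^{\tau} f_{U_t,U_c}(u,\tau-u)\,du = \eta_1(\tau)$, so $F_T(w) = \Pr(T_k\le w) = \int_0^{w}\eta_1(\tau)\,d\tau$ and $\epsilon_w = 1-F_T(w)$. In the decorrelated regime the $\{T_k\}_k$ are treated as i.i.d., so by \eqref{eq: Mk-1} the index $M$ is the first success epoch of a Bernoulli sequence with success probability $p=F_T(w)$; hence $M$ is geometric with $\Pr(M=n)=(1-p)^{n-1}p$, and the delay $T_M$ of the first valid task following a valid task is distributed as $T_k$ conditioned on $T_k\le w$. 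This gives $\mathbb{E}[T_M] = \frac{\int_0^{w}\tau\,\eta_1(\tau)\,d\tau}{F_T(w)}$, the first term of \eqref{eq: AoC hard-extension}. The second term follows directly from $f_X$, namely $\frac{\mathbb{E}[X_1^2]}{2\mathbb{E}[X_1]} = \frac{\int_0^{\infty}x^2 f_X(x)\,dx}{2\int_0^{\infty}x f_X(x)\,dx}$. For the third term, the geometric law gives $\mathbb{E}[M]=1/p$ and $\mathbb{E}[M^2]=(2-p)/p^2$, so $\frac{\mathbb{E}[M^2]}{2\mathbb{E}[M]}-\frac{1}{2} = \frac{1-p}{p} = \frac{1-F_T(w)}{F_T(w)}$; multiplying by $\mathbb{E}[X_1]=\int_0^{\infty}x f_X(x)\,dx$ and summing the three contributions produces \eqref{eq: AoC hard-extension}. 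As a consistency check I would specialize $f_X$, $f_{S_t}$, $f_{S_c}$ to exponentials, so that $\eta_1$ becomes the density of a sum of two exponential system times with rates $\mu_t\delta_t$ and $\mu_c\delta_c$, and confirm that \eqref{eq: AoC hard-extension} collapses to \eqref{eq: Thetahard-1} and \eqref{eq: Thetahard-2} (using $\mu_c\delta_c-\mu_t\delta_t=\mu_c-\mu_t$).

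The main obstacle is the step that replaces $M$ by a geometric variable and $T_M$ by $(T_k\mid T_k\le w)$: in reality the $T_k$'s are positively correlated and $T_M$ is correlated with $M$, so these identities are only asymptotic. This is precisely where the hypothesis $\mu_t\gg\lambda$, $\mu_c\gg\lambda$ is used: in that regime the waiting portions of $U_{k,t}$ and $U_{k,c}$ are negligible against the service times, the delays decouple, and the residual error---controlled by the queue utilizations $\rho_t$ and $\rho_c$---vanishes, consistent with the lower-bound reading of Remark~\ref{remark: lower bound}. Making this error estimate quantitative is the delicate point; everything else reduces to rewriting the three expectations in \eqref{eq: AoC hard-1} as integrals against $\eta_1$ and $f_X$.
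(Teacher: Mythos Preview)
Your proposal is correct and follows essentially the same route as the paper: reduce \eqref{eq: average AoC hard} to \eqref{eq: AoC hard-1} by invoking the high-rate decorrelation argument (which is distribution-agnostic), then identify $f_T=\eta_1$ via $T_k=U_{k,t}+U_{k,c}$, treat $M$ as geometric with parameter $F_T(w)$ and $T_M$ as $(T_k\mid T_k\le w)$, and evaluate the three expectations in \eqref{eq: AoC hard-1} accordingly. Your added consistency check against \eqref{eq: Thetahard-1}--\eqref{eq: Thetahard-2} and the explicit discussion of why the geometric/conditional replacement is only asymptotic are welcome elaborations, but the core argument is the paper's.
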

\begin{proof}
The proof is given in Appendix~\ref{App: proof of average AoC hard extension}.
\end{proof}

\begin{proposition}\label{pro: approximated CT extension}
When Assuption~\ref{ass: distributions} is satisfied, let $\mu_t$ and $\mu_c$ be in Theorem~\ref{thm: average AoC soft}. When $\mu_t\gg\lambda$ and $\mu_c\gg\lambda$, the computation throughput defined in \eqref{eq: computation throughput-1} can be accurately approximated by \eqref{eq: TC}. In particular,
\begin{align}\label{eq: TC extension}
\hat{\Xi} =\frac{F_T(w)}{\int_{0}^{\infty}xf_X(x)dx},
\end{align}
where $F_T(w)$ is given in Theorem~\ref{thm: average AoC soft extension}.
\end{proposition}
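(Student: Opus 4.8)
\emph{Proof plan for Proposition~\ref{pro: approximated CT extension}.} The plan is to build directly on Proposition~\ref{pro: approximated CT}, which already establishes that in the light‑traffic regime $\mu_t\gg\lambda$, $\mu_c\gg\lambda$ the computation throughput of Lemma~\ref{lem: computation throughput} is accurately approximated by $\hat{\Xi}=\bigl(\mathbb{E}[M]\,\mathbb{E}[X_1]\bigr)^{-1}$, i.e.\ by \eqref{eq: TC}. Hence it suffices to evaluate the two expectations $\mathbb{E}[X_1]$ and $\mathbb{E}[M]$ under Assumption~\ref{ass: distributions}, and then substitute. The first factor is immediate: since $X_1$ has density $f_X$, we have $\mathbb{E}[X_1]=\int_{0}^{\infty}x f_X(x)\,dx$.

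The substantive step is the computation of $\mathbb{E}[M]$. As discussed after Theorem~\ref{thm: average AoC hard general} and formalized in Remark~\ref{remark: upper bound}, the regime $\mu_t\gg\lambda$, $\mu_c\gg\lambda$ suppresses queueing, so the positive correlations among the delays $\{T_k\}_k$ are removed and the events $\{T_k\le w\}$ may be treated as independent, each of probability $\Pr(T_1\le w)$. Under this approximation, $M$ — defined in \eqref{eq: Mk-1} as the first index $n$ with $T_n\le w$ — is a geometric random variable, so $\mathbb{E}[M]=1/\Pr(T_1\le w)$. It remains to express $\Pr(T_1\le w)$ in the density language provided by Assumption~\ref{ass: distributions}: the delay of a task is the sum of its system times in the transmission and computation queues, $T_1=U_{1,t}+U_{1,c}$, whose joint density $f_{U_t,U_c}$ is known. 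A change of variables $\tau=u+v$ gives $\Pr(T_1\le w)=\int_{0}^{w}\!\bigl(\int_{0}^{\tau}f_{U_t,U_c}(u,\tau-u)\,du\bigr)d\tau=\int_{0}^{w}\eta_1(\tau)\,d\tau=F_T(w)$, with $\eta_1$ and $F_T$ exactly as in Theorem~\ref{thm: average AoC soft extension}. Therefore $\mathbb{E}[M]=1/F_T(w)$, and substituting the two expectations into \eqref{eq: TC} yields $\hat{\Xi}=F_T(w)/\!\int_{0}^{\infty}x f_X(x)\,dx$, which is \eqref{eq: TC extension}.

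The only delicate point is the justification of the geometric (independence) reduction for $M$ and the claim that it remains accurate for general service distributions. This justification, however, is precisely the one already carried out for Proposition~\ref{pro: approximated CT} in the M/M/1–M/M/1 case — it relies on light traffic rather than on exponentiality — so it transfers essentially verbatim; see also Remark~\ref{remark: upper bound}, which records that the resulting $\hat{\Xi}$ is an upper bound that is approximately tight when $\mu_t\gg\lambda$ and $\mu_c\gg\lambda$. With that reduction in hand, the extension reduces to the routine translation of $\Pr(T_1\le w)$ into the joint‑density form $F_T(w)$, completing the argument.
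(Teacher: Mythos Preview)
Your proposal is correct and follows essentially the same route as the paper: invoke the light-traffic approximation \eqref{eq: TC}, use the (approximate) independence of $\{T_k\}_k$ to make $M$ geometric with parameter $\Pr(T_1\le w)=F_T(w)$, and substitute $\mathbb{E}[M]=1/F_T(w)$ and $\mathbb{E}[X_1]=\int_0^\infty x f_X(x)\,dx$ into $\hat{\Xi}=(\mathbb{E}[M]\mathbb{E}[X_1])^{-1}$. The paper's own proof is essentially this one-liner, citing the geometric approximation for $M$ established in the proof of Theorem~\ref{thm: average AoC hard extension}; your version simply spells out the same steps in more detail.
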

\begin{proof}
The proof  is given in Appendix~\ref{App: approximated CT extension}.
\end{proof}

\section{AoC-based Scheduling for Multi-Source Networks}\label{sec: AoC-based Scheduling for Multi-Source 3CNs}

The AoC concept is not only utilized in continuous-time settings, but also suitable for discrete-time setting. In this section, we explore the application of the AoC concept to resource optimization and {\it real-time} scheduling in multi-source networks, providing an analysis of the resulting optimal scheduling policies. 

\subsection{System Model in Multi-Source Networks}\label{subsec: System Model in Multi-Source 3CNs}

Consider a network with a computational node processing tasks from $N$ sources, as illustrated in Fig.~\ref{Multisources}. Time is slotted, indexed by $k\in\{1,2,\cdots, T\}$, where $T$ is the time-horizon of this discrete-time system. In every time slot, computational tasks arrive source~$i$ with rate $\lambda_i$. Upon arrival, tasks are immediately available at the corresponding transmitter~$i$, where they enter a communication queue awaiting transmission. Transmitter~$i$ transmits tasks at a rate $\mu_{t, i}$: if a task is under transmission during a slot, the transmission completes with probability $\mu_{t, i}$ by the end of the slot. Once transmitted, the tasks are received by the receiver, where they await processing.  The computational node processes tasks at a rate $\mu_c$. Without loss of generality, we assume $\mu_c=1$, the framework can be straightforwardly extended to cases where $\mu_c<1$. After processing, tasks depart from the system. In most realistic scenarios, recent computational tasks hold greater importance than earlier ones, as they often contain fresher and more relevant information. To address this, we adopt a {\it preemptive} rule \cite{cxrAoI, cxrestimation}: newly arrived tasks can replace previously queued tasks already present in the transmitter.

\begin{figure}[htbp]
\centering
\includegraphics[width=0.39\textwidth,height=0.28\textwidth]{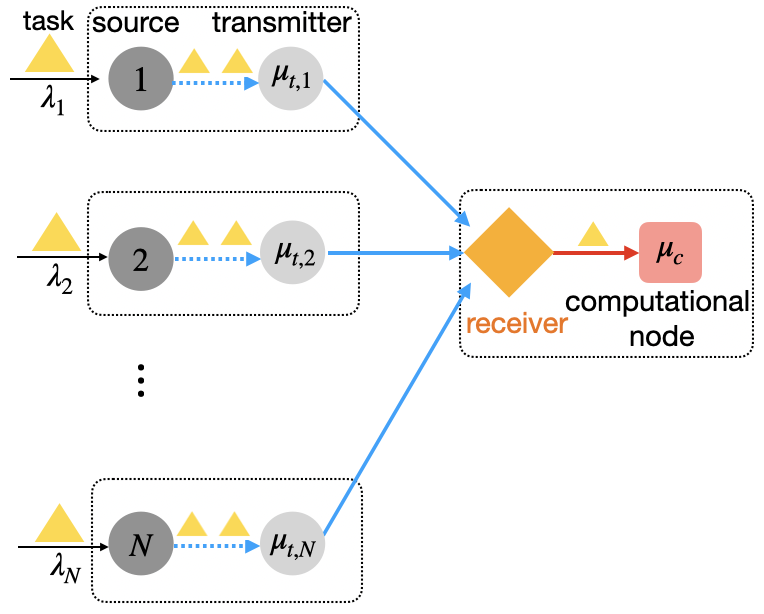}
\caption{A 3CN with multi sources/transmitters, a receiver, and a computational node.}
\label{Multisources}
\end{figure}

At each slot, the computational node either idles or selects a transmitter to transmits its task. Let $a_i(k)\in\{0, 1\}$ be an indicator function where  $a_i(k)=1$ if transmitter $i$ is selected for transmission in time $k$, and $a_i(k)=0$ otherwise. Similarly, let  $d_i(k)=1$ indicate that a task from transmitter~$i$ successfully leaves the computational node at slot $k$. Due to limited communication resources, {\it at most one} task can be transmitted across all transmitters in a given slot. Therefore, the following constraints hold: 
\begin{align}\label{eq: a and d}
\sum_{i=1}^{N} a_i(k)\leq 1,\quad \sum_{i=1}^{N} d_i(k)\leq 1,\quad\forall k.
\end{align}
Additionally, if a task successfully leaves the computational node, \big($\sum_{i=1}^{N}d_i(k)=1$\big), indicating that the last scheduling decision has been responded to, the computational node will schedule a transmitter in the next time slot \big($\sum_{i=1}^{N}a_i(k+1)=1$\big).  Here, since $\mu_c=1$, $\sum_{i=1}^{N}d_i(k)=1$ further implies that the current task reaches the receiver at the beginning of time slot $k$. Conversely, if the last scheduling decision has not been responded to in the current slot, \big(i.e., $\sum_{i=1}^{N}d_i(k)=0$\big), the computational node will not schedule a transmitter in the next slot \big($\sum_{i=1}^{N}a_i(k+1)=0$\big).  This relationship is formalized as: $\sum_{i=1}^{N} a_i(k+1) = \sum_{i=1}^{N} d_i(k)$.

Let $c^{(i)}_x(k)$ with $x\in\{\text{soft}, \text{hard}\}$ be the AoC associated with source $i$ at the end of slot $k$. The time-average AoC associated with source $i$ is given by $\mathbb{E}[\sum_{k=1}^{T}c^{(i)}_x(k)]/T$. For capturing the freshness of computation of this network employing scheduling policy $\pi\in\Pi$, we define the average sum AoC in the limit as the time-horizon grows to infinity as
\begin{align}\label{eq: average sum AoC}
\Theta_x=\lim_{T\to\infty}\frac{1}{TN}\sum_{k=1}^{T}\sum_{i=1}^{N}\mathbb{E}[c^{(i)}_x(k)], \,\, x\in\{\text{soft}, \text{hard}\}.
\end{align}
The AoC-optimal the scheduling policy $\pi^*\in\Pi$ is the one that minimizes the average sum AoC:
\begin{align}\label{eq: optimal average sum AoC}
\Theta_{x}^* = \min_{\pi\in\Pi} \Theta_{x}.
\end{align}

At the end of this subsection, we introduce the concept of instantaneous delay for transmitter~$i$ at the end of time slot $k$, denoted by $z_i(k)$. If a task is in transmitter~$i$ at time slot~$k$, $z_i(k)$ is defined as the delay experienced by that task up to time~$k$. If transmitter~$i$ is idle at time slot~$k$, we define  $z_i(k)=c^{(i)}_x(k)$. Since the computational node is the decision-making component of the centralized network, it is reasonable to assume that it has access to the information from each transmitter in every time slot, i.e., $z_i(k)$.

\subsection{AoC-based Max-Weight Policies}\label{subsec: AoC-based Max-Weight Policies}
Finding the global optimal policy for the optimization problem \eqref{eq: optimal average sum AoC} is challenging due to the real-time nature of scheduling decisions. Drawing inspiration from \cite{IKEM2021}, we employ Lyapunov Optimization to develop AoC-based Max-Weight policy. This policy have been shown to be near-optimal \cite{IKEM2021, Lyapunov}. The Max-Weight policy is designed to minimize the expected drift of the Lyapunov function in each time slot, thereby striving to reduce the AoC across the network.

We use the following linear Lyapunov Function:
\begin{align}\label{eq: Lyapunov function}
L(k)\triangleq \frac{1}{N}\sum_{i=1}^{N}\beta_i c^{(i)}_{x}(t),\,\, x\in\{\text{soft}, \text{hard}\}.
\end{align}
where $\beta_i$ is a positive hyperparameter that allows the Max-Weight policy to be tuned for different network configurations and queuing disciplines. The Lyapunov Drift is defined as
\begin{align}\label{eq: Drift}
\Delta\big(\mathcal{S}(k)\big) := \mathbb{E}[L(k+1) - L(k) |\mathcal{S}(k)],
\end{align}
where $\mathcal{S}(k)$ represents the network state at the beginning of time slots $k$ and $k-1$\footnote{Here, we define $\mathcal{S}(k)$ to represent the network state over both the beginning of time slots $k$ and $k-1$, rather than only at beginning of time slot $k$ as in \cite{IKEM2021}. This adjustment is made because when a transmitter is scheduled, the corresponding task requires at least $2$ time slots to complete the computation.}, which is defined by: for $ x\in\{\text{soft}, \text{hard}\}$,
\begin{align*}
\mathcal{S}(k) = \Big\{\big\{c^{(i)}_x(\tau), z_i(\tau), \{d_i(\tau)\}\big\}_{\tau=k-1}^{k}\Big\}_{i=1}^N.
\end{align*}

The Lyapunov Function $L(k)$ increases with the AoC of the network,  while the Lyapunov Drift  $\Delta\big(\mathcal{S}(k)\big)$ represents the expected change in $L(k)$ over two time slots. By minimizing the drift \eqref{eq: Drift} in each time slot, the Max-Weight policy aims to maintain both $L(k)$ and the network’s AoC at low levels.

\subsubsection{Under the Soft Deadline} 
To simplify \eqref{eq: Lyapunov function}, we derive the AoC in time slot $k+1$ under the soft deadline assumption. Recall that $\mu_c=1$, meaning computational tasks are processed immediately upon arrival at the computational node without queuing. Let $\ell_i\big(z_i(k)\big)=\frac{A_i(k)}{G_i(k)}(z_i(k)+1-w)^+$, where $G_i(k)$ is defined in \eqref{eq: index of information packet} and $A_i(k)$ is defined in \eqref{eq: index of delay not exceeding w}. We can derive the expression for $c^{(i)}_{\text{soft}}(k+1)$ as follows (the proof is given by Appendix~\ref{App: csoftt+2}):
\begin{align}\label{eq: recursion c soft-1}
&c^{(i)}_{\text{soft}}(k+1) = 1_{\{\sum_{i=1}^{N}d_i(k-1)=0\}}\big(c^{(i)}_{\text{soft}}(k) + 1\big)\nonumber\\
&+1_{\{\sum_{i=1}^{N}d_i(k-1)=1, a_i(k)=0\}}\big(c^{(i)}_{\text{soft}}(k) + 1\big)\nonumber\\
&+1_{\{\sum_{i=1}^{N}d_i(k-1)=1, a_i(k)=1, d_i(k+1)=0\}}\big(c^{(i)}_{\text{soft}}(k) + 1\big)\nonumber\\
&+1_{\{\sum_{i=1}^{N}d_i(k-1)=1, a_i(k)=1, d_i(k+1)=1\}}\nonumber\\
&\cdot\Big(z_i(k) +1+ \ell_i\big(z_i(k)\big)\Big)
\end{align}
From \eqref{eq: Lyapunov function}, \eqref{eq: Drift}, and \eqref{eq: recursion c soft-1}, we propose the Max-Weight algorithm, i.e.,  Algorithm~\ref{alg: soft}, and prove that it minimizes the Lyapunov Drift in Proposition~\ref{pro: maxweight soft}. 

\begin{algorithm}
\caption{Max-Weight Policy for Soft Deadline}\label{alg: soft}
	\begin{algorithmic}[1]
		\Require $T$, $\{\beta_i\}_{i=1}^{N}$, $\{\mu_{t, i}\}_{i=1}^{N}$, $\{c^{(i)}_{\text{soft}}(0)\}_{i=1}^{N}$,
		\For{$1\leq k\leq T$}
		\If{$\sum_{i=1}^{N}d_i(k-1)=0$}
		\State $a_i(k)=0$ for $i\in\{1,2,\cdots,N\}$.
		\ElsIf{$\sum_{i=1}^{N}d_i(k-1)=1$}
		\State Calculate  $w_i(k)\triangleq c^{(i)}_{\text{soft}}(k)-z_i(k)-\ell_i\big(z_i(k)\big)$.
		\State Set $a_{i^*}(k)=1$ if $i^* = \arg\max_{i}\big(\beta_i\mu_{t, i}w_i(k)\big)$.
		\State If multiple $i^*$ exist, select one randomly.
		\EndIf
		\EndFor
		\State {\bf Output} $\Theta_{\text{soft}}$.
	\end{algorithmic}
\end{algorithm}

\begin{proposition}\label{pro: maxweight soft}
Algorithm~\ref{alg: soft} minimizes the Lyapunov Drift in every time slot.
\end{proposition}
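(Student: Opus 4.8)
The plan is to evaluate the Lyapunov drift $\Delta\big(\mathcal{S}(k)\big)$ in \eqref{eq: Drift} in closed form as a function of the scheduling decision $\{a_i(k)\}_{i=1}^N$ and then to verify that the choice prescribed by Algorithm~\ref{alg: soft} is exactly the minimizer. First I would condition on the state $\mathcal{S}(k)$ and split according to the constraint $\sum_{i=1}^N a_i(k)=\sum_{i=1}^N d_i(k-1)$, which determines the feasible action set. If $\sum_{i=1}^N d_i(k-1)=0$, then no transmitter may be scheduled, so $a_i(k)=0$ for all $i$; the first line of \eqref{eq: recursion c soft-1} forces $c^{(i)}_{\text{soft}}(k+1)=c^{(i)}_{\text{soft}}(k)+1$ for every $i$, hence $\Delta\big(\mathcal{S}(k)\big)=\frac1N\sum_{i=1}^N\beta_i$ deterministically and there is nothing to optimize, which is consistent with Algorithm~\ref{alg: soft} idling in this case.

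The substantive case is $\sum_{i=1}^N d_i(k-1)=1$, where exactly one transmitter, say $j$, is scheduled ($a_j(k)=1$, $a_i(k)=0$ for $i\neq j$). For $i\neq j$ the second line of \eqref{eq: recursion c soft-1} gives $c^{(i)}_{\text{soft}}(k+1)=c^{(i)}_{\text{soft}}(k)+1$, so these terms contribute $\frac1N\sum_{i\neq j}\beta_i$ to the drift. For $i=j$, conditional on $\mathcal{S}(k)$ and the decision, the only remaining randomness is whether the (memoryless) transmission completes within slot $k$, i.e.\ the Bernoulli$(\mu_{t,j})$ event $\{d_j(k+1)=1\}$; averaging the last two lines of \eqref{eq: recursion c soft-1} yields
\[
\mathbb{E}\big[c^{(j)}_{\text{soft}}(k+1)\mid\mathcal{S}(k)\big]=\mu_{t,j}\big(z_j(k)+1+\ell_j(z_j(k))\big)+(1-\mu_{t,j})\big(c^{(j)}_{\text{soft}}(k)+1\big),
\]
so that $\mathbb{E}\big[c^{(j)}_{\text{soft}}(k+1)-c^{(j)}_{\text{soft}}(k)\mid\mathcal{S}(k)\big]=1-\mu_{t,j}w_j(k)$ with $w_j(k)=c^{(j)}_{\text{soft}}(k)-z_j(k)-\ell_j(z_j(k))$, precisely the weight defined in Algorithm~\ref{alg: soft}. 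Assembling the two parts gives $\Delta\big(\mathcal{S}(k)\big)=\frac1N\big(\sum_{i=1}^N\beta_i-\beta_j\mu_{t,j}w_j(k)\big)$.

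Since $\sum_{i=1}^N\beta_i$ does not depend on the decision and the constraint forces exactly one index $j$ to be selected, minimizing the drift is equivalent to maximizing $\beta_j\mu_{t,j}w_j(k)$ over $j\in\{1,\dots,N\}$, which is exactly the rule $i^*=\arg\max_i\big(\beta_i\mu_{t,i}w_i(k)\big)$ (with arbitrary tie-breaking) executed in line~6 of Algorithm~\ref{alg: soft}. Therefore Algorithm~\ref{alg: soft} minimizes $\Delta\big(\mathcal{S}(k)\big)$ in every slot, proving Proposition~\ref{pro: maxweight soft}. I expect the main obstacle to be the careful bookkeeping around the two-slot latency: matching each indicator in \eqref{eq: recursion c soft-1} to the current state, confirming that once $\mathcal{S}(k)$ and the action are fixed the sole randomness is $\{d_j(k+1)=1\}$ with probability $\mu_{t,j}$, and pinning down the feasible action set as ``idle when $\sum_i d_i(k-1)=0$, otherwise schedule one transmitter.'' Once these are settled, the optimality claim reduces to the one-line observation that the drift is affine in the (one-hot) decision vector.
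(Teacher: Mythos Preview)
Your proposal is correct and follows essentially the same approach as the paper's own proof: compute the conditional drift by plugging the recursion \eqref{eq: recursion c soft-1} and the Bernoulli$(\mu_{t,i})$ completion probability into \eqref{eq: Drift}, observe that the result is a constant minus a single weighted term $\beta_j\mu_{t,j}w_j(k)$, and conclude that the drift-minimizing action is $\arg\max_i \beta_i\mu_{t,i}w_i(k)$. The paper just compresses your explicit case split into the one-line expression $\Delta\big(\mathcal{S}(k)\big)=\text{const}-\frac{1}{N}\sum_i\beta_i\mu_{t,i}\,\mathbb{E}[a_i(k)\mid\mathcal{S}(k)]\,w_i(k)$, but the content is identical.
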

\begin{proof}
The proof is given in Appendix~\ref{App: maxweight soft}.
\end{proof}

\subsubsection{Under the Hard Deadline}
To obtain the Max-Weight policies, we utilize the recursion of $c^{(i)}_{\text{hard}}(k)$ to minimize the Lyapunov Drift defined in \eqref{eq: Drift}.

By similar process in Appendix~\ref{App: csoftt+2}, we can derive recursion for $c^{(i)}_{\text{hard}}(k+1)$ is given as follows:
\begin{align}\label{eq: recursion c hard}
&c^{(i)}_{\text{hard}}(k+1) = 1_{\{\sum_{i=1}^{N}d_i(k-1)=0\}}\big(c^{(i)}_{\text{hard}}(k) + 1\big)\nonumber\\
&+1_{\{\sum_{i=1}^{N}d_i(k-1)=1, a_i(k)=0\}}\big(c^{(i)}_{\text{hard}}(k) + 1\big)\nonumber\\
&+1_{\{\sum_{i=1}^{N}d_i(k-1)=1, a_i(k)=1, d_i(k+1)=0\}}\big(c^{(i)}_{\text{hard}}(k) + 1\big)\nonumber\\
&+1_{\{\sum_{i=1}^{N}d_i(k-1)=1, a_i(k)=1, d_i(k+1)=1\}}\nonumber\\
&\cdot\Big(1_{\{z_i(k)+1> w\}}\big(c^{(i)}_{\text{hard}}(k)+ 1\big)+1_{\{z_i(k)+1\leq w\}}\big(z_i(k) + 1\big)\Big).
\end{align}
From \eqref{eq: Lyapunov function}, \eqref{eq: Drift}, and \eqref{eq: recursion c hard}, we propose the Max-Weight algorithm, i.e.,  Algorithm~\ref{alg: hard}, and prove that it minimizes the Lyapunov Drift in Proposition~\ref{pro: maxweight hard}. 
\begin{algorithm}
	\caption{Max-Weight Policy for Hard Deadline}\label{alg: hard}
	\begin{algorithmic}[1]
		\Require $T$, $\{\beta_i\}_{i=1}^{N}$, $\{\mu_{t, i}\}_{i=1}^{N}$, $\{c_i^{\text{hard}}(0)\}_{i=1}^{N}$,
		\For{$1\leq k\leq T$}
		\If{$\sum_{i=1}^{N}d_i(k-1)=0$}
		\State $a_i(k)=0$ for $i\in\{1,2,\cdots,N\}$.
		\ElsIf{$\sum_{i=1}^{N}d_i(k-1)=1$}
		\State Calculate $w_i(k)\triangleq 1_{\{z_i(k)+1\leq w\}} \cdot\big(c^{(i)}_{\text{hard}}(k)-z_i(k)\big)$.
		\State Set $a_{i^*}(k)=1$ if $i^* = \arg\max_{i}\big(\beta_i\mu_{t, i}w_i(k)\big)$.
		\State If multiple $i^*$ exist, select one randomly.
		\EndIf
		\EndFor
		\State {\bf Output} $\Theta_{\text{hard}}$.
	\end{algorithmic}
\end{algorithm}

\begin{proposition}\label{pro: maxweight hard}
	Algorithm~\ref{alg: hard} minimizes the Lyapunov Drift  in every time slot.
\end{proposition}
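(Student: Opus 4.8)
The plan is to follow the same route as the proof of Proposition~\ref{pro: maxweight soft}: substitute the recursion \eqref{eq: recursion c hard} into the Lyapunov function \eqref{eq: Lyapunov function}, evaluate the conditional expectation in the drift \eqref{eq: Drift} in closed form, and then verify that the scheduling rule of Algorithm~\ref{alg: hard} is precisely the minimizer of the resulting expression over all feasible actions in slot~$k$.

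First I would condition on $\mathcal{S}(k)$ and split on the value of $\sum_{i=1}^{N}d_i(k-1)$, which is $\mathcal{S}(k)$-measurable. If $\sum_{i=1}^{N}d_i(k-1)=0$, the constraint $\sum_i a_i(k)=\sum_i d_i(k-1)$ forces $a_i(k)=0$ for all $i$, so \eqref{eq: recursion c hard} reduces to $c^{(i)}_{\text{hard}}(k+1)=c^{(i)}_{\text{hard}}(k)+1$ deterministically for every $i$; hence $\Delta(\mathcal{S}(k))=\frac{1}{N}\sum_{i=1}^{N}\beta_i$, which does not depend on any decision, and the forced action of Algorithm~\ref{alg: hard} attains it trivially. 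If $\sum_{i=1}^{N}d_i(k-1)=1$, exactly one transmitter $i^*$ is scheduled, and the only degree of freedom is the choice of $i^*$.

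For the scheduled transmitter I would compute $\mathbb{E}\big[c^{(i^*)}_{\text{hard}}(k+1)\mid\mathcal{S}(k),a_{i^*}(k)=1\big]$ from the last two lines of \eqref{eq: recursion c hard}. Because a task under transmission completes within the slot with probability $\mu_{t,i^*}$ and, since $\mu_c=1$, departs immediately, we have $d_{i^*}(k+1)=1$ with probability $\mu_{t,i^*}$ and $d_{i^*}(k+1)=0$ otherwise, given $\mathcal{S}(k)$ and $a_{i^*}(k)=1$. Combining the two branches and using $1_{\{z_{i^*}(k)+1>w\}}=1-1_{\{z_{i^*}(k)+1\le w\}}$ collapses the algebra to
\[
\mathbb{E}\big[c^{(i^*)}_{\text{hard}}(k+1)\mid\mathcal{S}(k),a_{i^*}(k)=1\big]=c^{(i^*)}_{\text{hard}}(k)+1-\mu_{t,i^*}\,1_{\{z_{i^*}(k)+1\le w\}}\big(c^{(i^*)}_{\text{hard}}(k)-z_{i^*}(k)\big),
\]
whereas every unscheduled transmitter $i\neq i^*$ satisfies $c^{(i)}_{\text{hard}}(k+1)=c^{(i)}_{\text{hard}}(k)+1$. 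The bracketed correction equals $\mu_{t,i^*}w_{i^*}(k)$ with $w_i(k)=1_{\{z_i(k)+1\le w\}}\big(c^{(i)}_{\text{hard}}(k)-z_i(k)\big)$, exactly the weight used in Algorithm~\ref{alg: hard}. Multiplying by $\beta_i/N$, summing over $i$, and subtracting $L(k)$ gives
\[
\Delta\big(\mathcal{S}(k)\big)=\frac{1}{N}\sum_{i=1}^{N}\beta_i-\frac{1}{N}\beta_{i^*}\mu_{t,i^*}w_{i^*}(k).
\]
Since the first term is independent of the decision, minimizing the drift is equivalent to maximizing $\beta_{i^*}\mu_{t,i^*}w_{i^*}(k)$ over $i^*$, i.e., to choosing $i^*=\arg\max_i\big(\beta_i\mu_{t,i}w_i(k)\big)$, which is precisely the rule of Algorithm~\ref{alg: hard}; this establishes the claim.

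The step I expect to be the main obstacle is the careful bookkeeping of the four indicator events in \eqref{eq: recursion c hard} together with the randomness of $d_{i^*}(k+1)$: one must argue that, conditioned on $\mathcal{S}(k)$ and the action, the transmission outcome is Bernoulli($\mu_{t,i^*}$) and independent of the remaining state, and check that in the regime $z_{i^*}(k)+1>w$ (the in-flight task is bound to be discarded at the hard deadline) the fourth branch collapses back to $c^{(i^*)}_{\text{hard}}(k)+1$, so scheduling such a transmitter contributes nothing to the drift reduction — consistent with $w_{i^*}(k)=0$ in that case. Beyond this, the computation is purely algebraic and mirrors the soft-deadline argument of Proposition~\ref{pro: maxweight soft}, with the additional-latency term $\ell_i(\cdot)$ replaced by the hard-deadline indicator structure.
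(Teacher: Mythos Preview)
Your proposal is correct and follows essentially the same approach as the paper, which simply states that the proof is similar to Appendix~\ref{App: maxweight soft}; you have spelled out the details of that analogy, substituting recursion \eqref{eq: recursion c hard} into the drift \eqref{eq: Drift}, computing the conditional expectation using $\Pr(d_{i^*}(k+1)=1\mid a_{i^*}(k)=1)=\mu_{t,i^*}$, and identifying the resulting correction term as $\mu_{t,i^*}w_{i^*}(k)$ with the weight of Algorithm~\ref{alg: hard}. The only cosmetic difference is that the paper's soft-deadline appendix keeps the general form $\sum_i\beta_i\mu_{t,i}\mathbb{E}[a_i(k)\mid\mathcal{S}(k)]w_i(k)$, whereas you fix a deterministic choice $i^*$; since the maximum over randomized actions is attained at a vertex, the conclusions coincide.
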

\begin{proof}
The proof is similar to Appendix~\ref{App: maxweight soft}.
\end{proof}

\section{Numerical Results}\label{sec: simulations}
In this section, we verify our findings in Section~\ref{sec: AoC under Soft Deadlines}, Section~\ref{sec: AoC under Hard Deadlines}, and  Section~\ref{sec: AoC-based Scheduling for Multi-Source 3CNs} through simulations. 

\subsection{Simulations for AoC under the Soft Deadline}\label{subsec: simulations in soft}

In Fig.~\ref{AoCs}, we compare theoretical and simulated average AoC under the soft deadline. The parameters are set as $w=0.5$, $\mu_c=3$, $\mu_t\in\{2, 3\}$, and $\lambda\in (0, 1.8]$. The results show that the theoretical average AoC values derived in \eqref{eq: AoC soft-1} and \eqref{eq: AoC soft-2} closely match the simulation outcomes. This alignment implies that the theoretical expressions in Theorem~\ref{thm: average AoC soft} are accurate for both cases where $\mu_t\neq\mu_c$ and $\mu_t=\mu_c$. Additionally, in both scenarios, the average AoC initially decreases and then increases. When  $\mu_t=2<3=\mu_c$, the computation capability exceeds the communication capability. If $\lambda\to0$, the communication queue is almost empty, meaning that the communication resources (and thus the computation resources) are underutilized, resulting in a high average AoC. Conversely, if $\lambda\to\mu_t$, the communication queue is busy, meaning that the communication capability is utilized almost to its total capacity, resulting in many tasks waiting in the communication queue to be transmitted, which also leads to a high average AoC. Since $\mu_t$ and $\mu_c$ are symmetric in \eqref{eq: AoC soft-1} and \eqref{eq: AoC soft-2}, this conclusion is valid for the case when $\mu_t\geq\mu_c$. Using numerical methods (e.g., gradient methods in as described in \cite{convex}), we can find the optimal $\lambda$ in \eqref{eq: AoC soft-1} and \eqref{eq: AoC soft-2}. 

\begin{figure}[htbp]
\centering
\includegraphics[width=.4\textwidth]{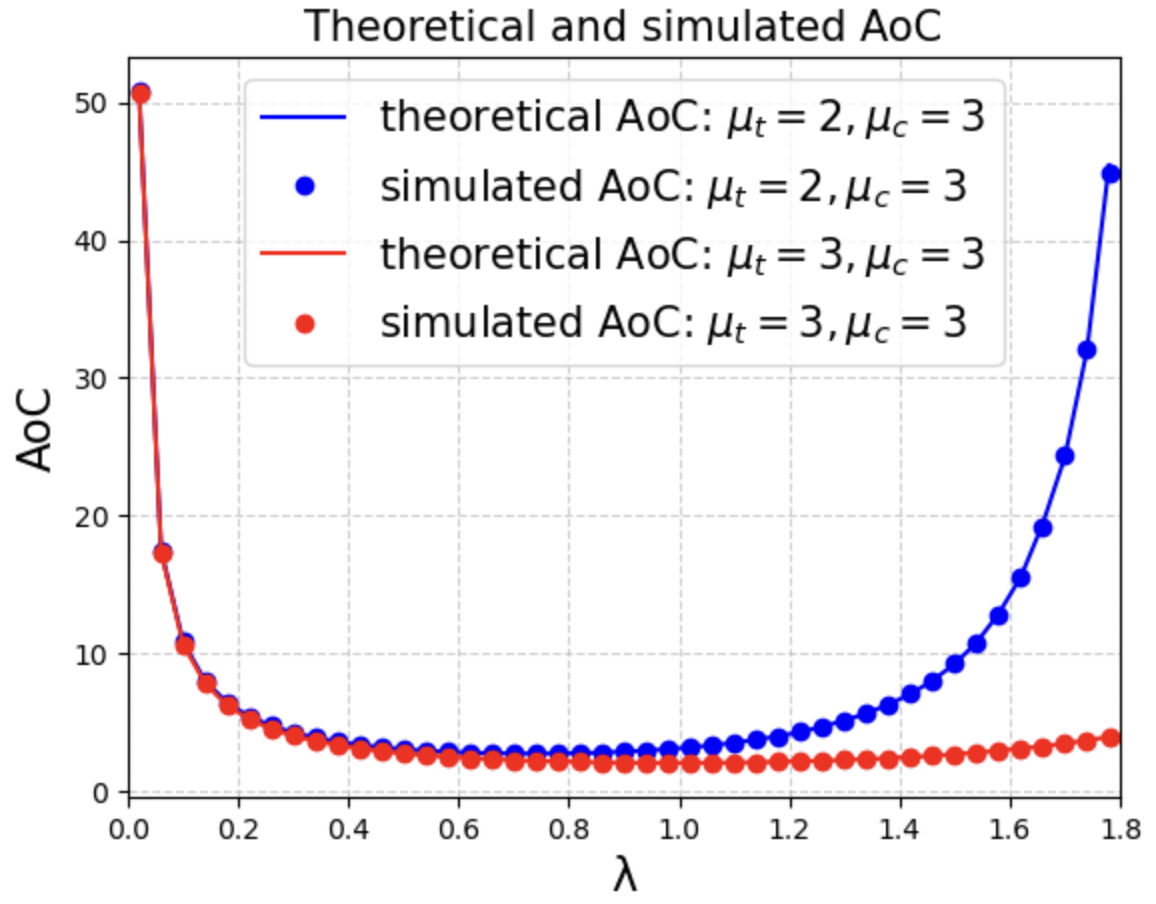}
\caption{Average AoC v.s. $\lambda$ under the soft deadline}
\label{AoCs}
\end{figure}

\begin{figure}[htbp]
\centering
\subfigure[Approximated and simulated AoC under different cases.] {\includegraphics[width=.4\textwidth]{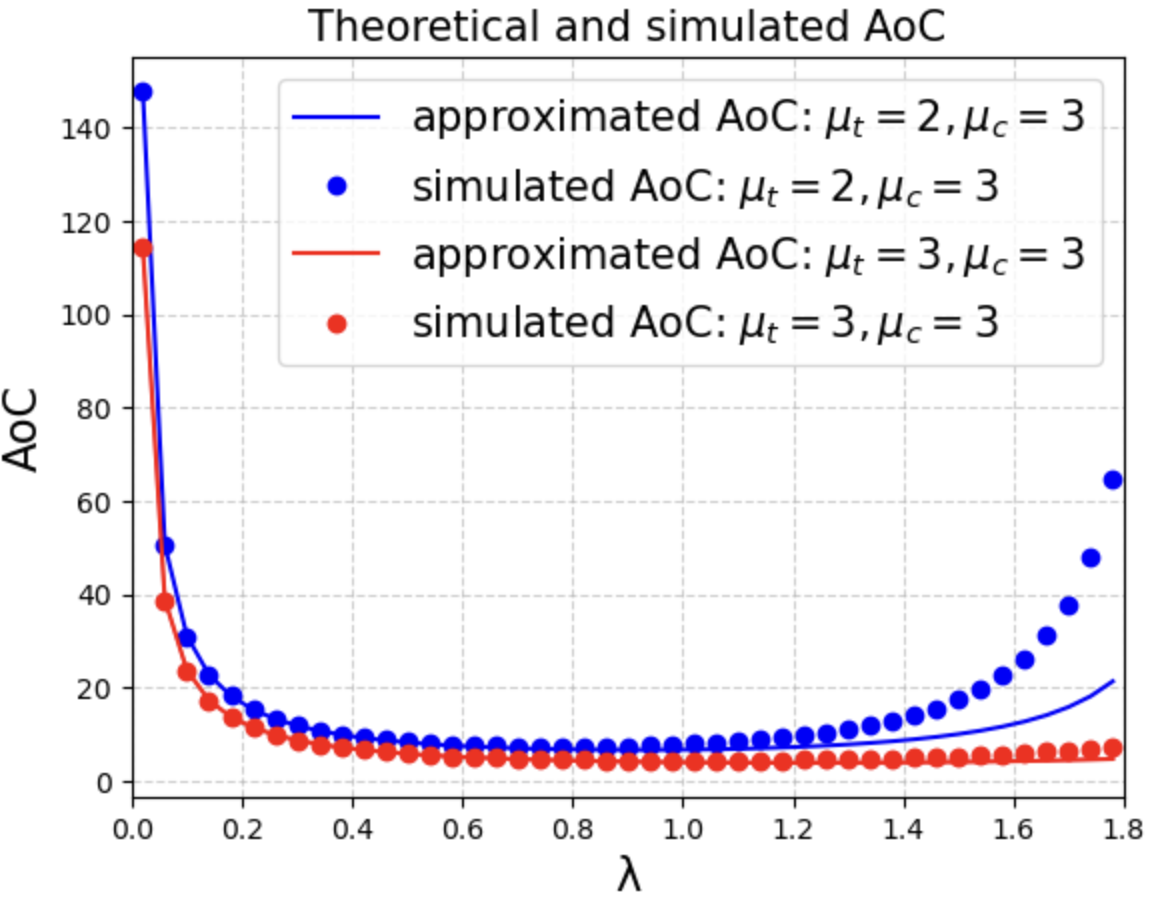}}
\subfigure[Approximated and simulated computation throughput when $\mu_t=2, \mu_c=3$.] {\includegraphics[width=.4\textwidth]{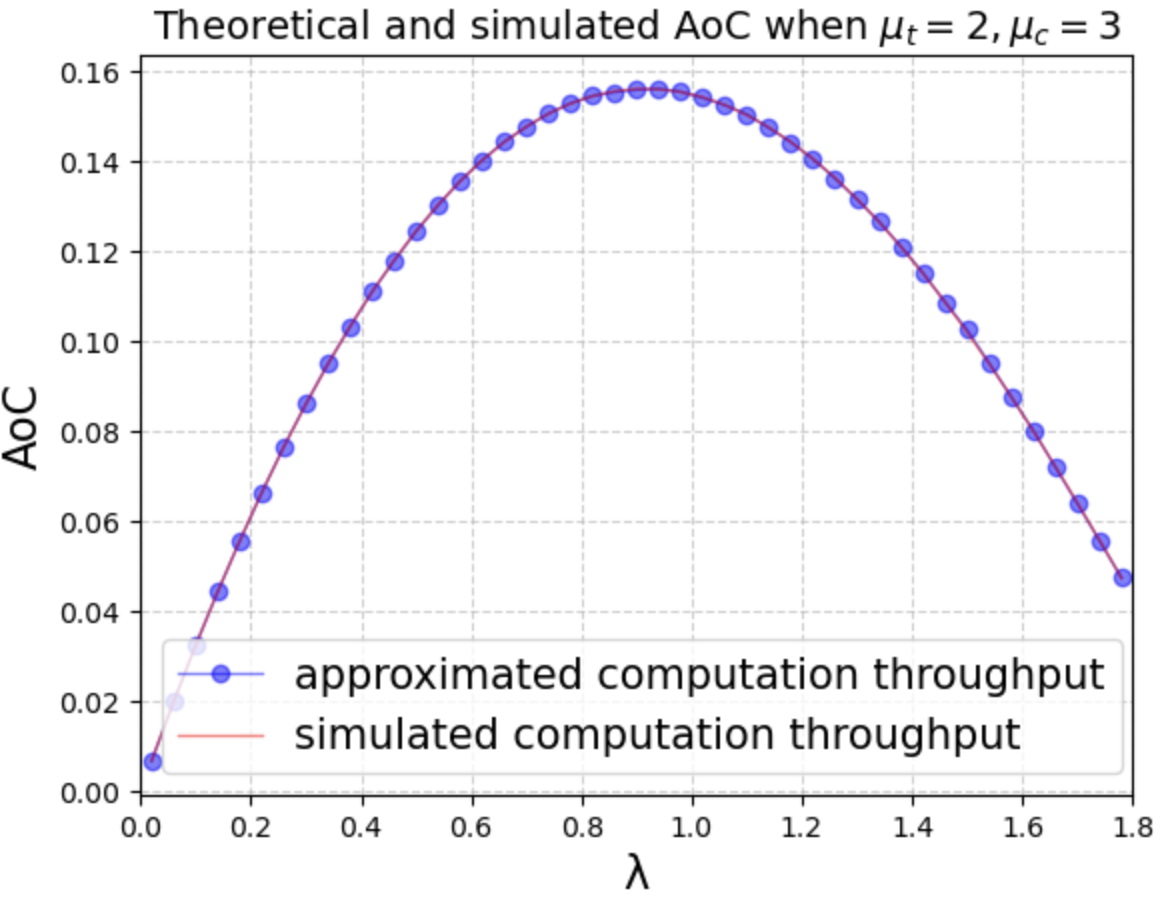}}
\caption{Numerical results for the average AoC under the hard deadline}
\label{AoCh}
\end{figure}

\subsection{Simulations for AoC under the Hard Deadline}\label{subsec: simulations in hard}

In Fig.~\ref{AoCh}, we investigate the average AoC under the hard deadline scenario. The comparisons between the approximated and simulated average AoC are presented in Fig.~\ref{AoCh}(a). We observe that when $\mu_t \gg \lambda$ and $\mu_c \gg \lambda$, the approximations of the average AoC in \eqref{eq: Thetahard-1} and \eqref{eq: Thetahard-2} closely match the simulation results. This agreement verifies the theoretical expressions in Theorem~\ref{thm: average AoC hard}.
Furthermore, the curves of the approximated average AoC are below those of the simulated average AoC, implying that the approximations in \eqref{eq: Thetahard-1} and \eqref{eq: Thetahard-2} serve as lower bounds for the average AoC, confirming the validity of Remark~\ref{remark: lower bound}. Additionally, in both scenarios, the average AoC initially decreases and then increases, similar to the trend observed under the soft deadline. This suggests that the computation freshness exhibits a convex relationship with respect to $\lambda$, indicating the existence of an optimal $\lambda$ that minimizes the average AoC.

We analyze the computation throughput as defined in \eqref{eq: computation throughput-1} in Fig.~\ref{AoCh}(b). The approximated computation throughput in \eqref{eq: TC} closely matches the simulated results, indicating the accuracy of the approximation. Although the positive correlations among $\{T_k\}_k$ are removed in \eqref{eq: TC}, making it a lower bound, the reduction in the expectation of $M$ is small. Consequently, the  reduction in $\mathbb{E}[\sum_{j=1}^{M}X_j]$ is negligible. Therefore, \eqref{eq: TC} serves as a tight lower bound for the computation throughput, not only when $\mu_t\gg\lambda$ and $\mu_c\gg\lambda$, but also across the entire range of $\lambda$.

\begin{figure}[htbp]
\centering
\subfigure[Relationship between the average AoC and  computation throughput.] {\includegraphics[width=.4\textwidth]{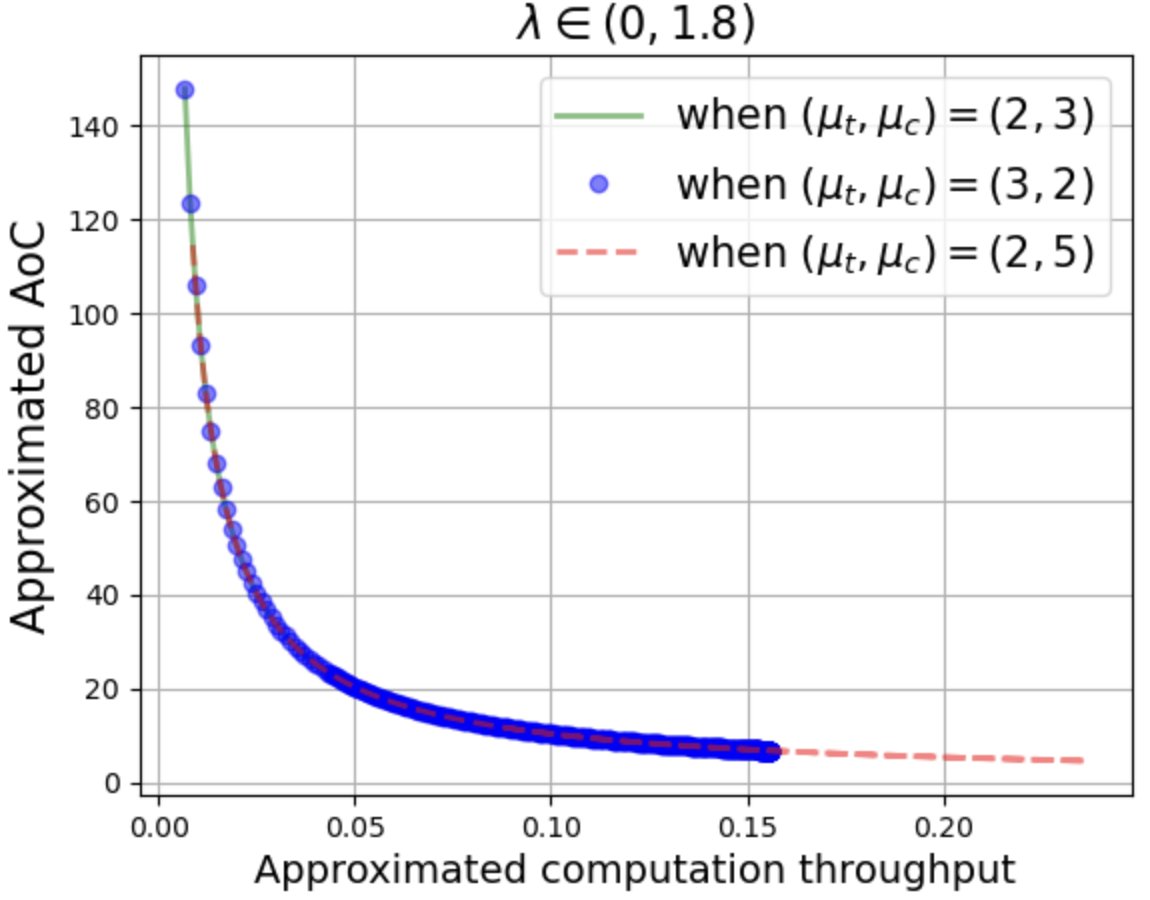}}
\subfigure[Relationship between the optimal average AoC and computation throughput.] {\includegraphics[width=.4\textwidth]{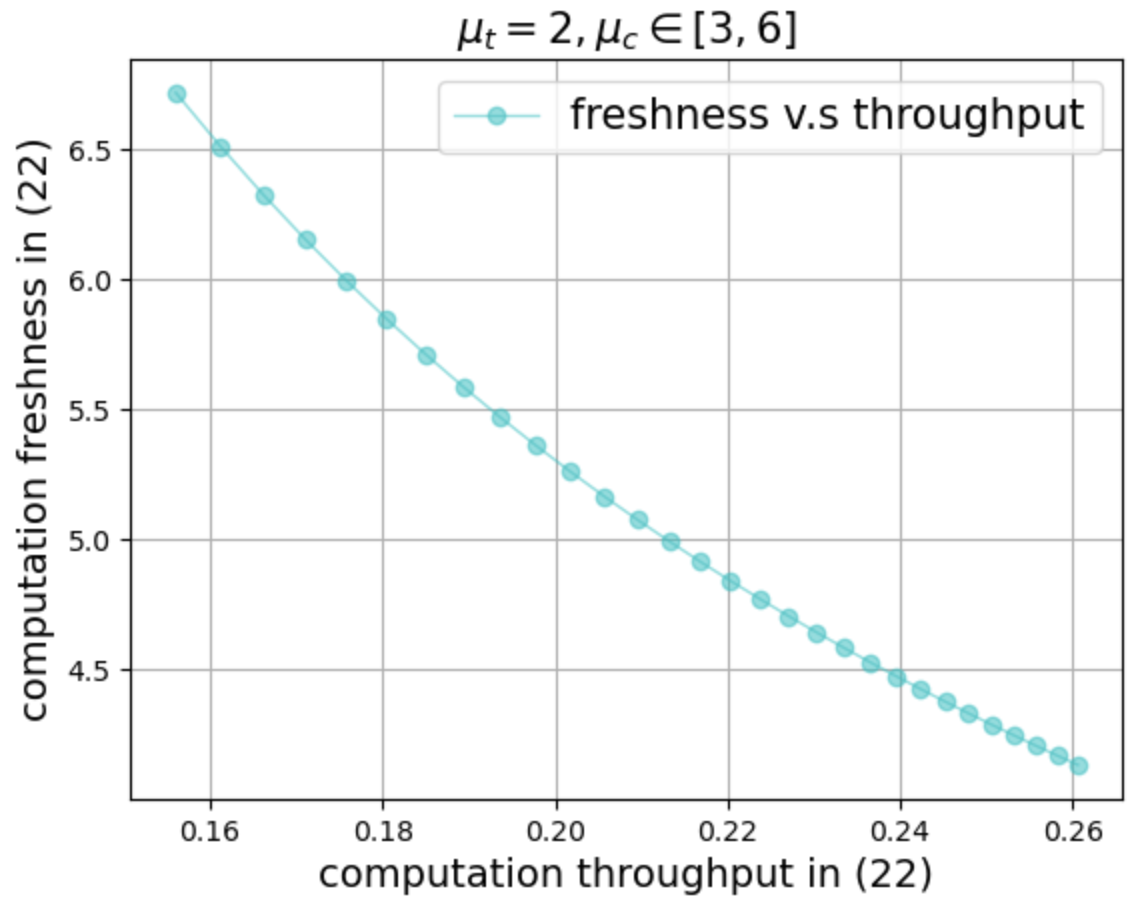}}
\caption{Relationships between computation freshness and computation throughput under the hard deadline.}
\label{tradeoff}
\end{figure}

The relationship between the computation freshness and the computation throughput is examined in  Fig.~\ref{tradeoff}. In Fig.~\ref{tradeoff}(a), we plot $\hat{\Theta}$ from \eqref{eq: AoC hard-1} versus $\hat{\Xi}$ from \eqref{eq: TC} when  $\lambda\in(0, 1.8)$ and $(\mu_t, \mu_c) = (2, 3), (3,2), (2, 5)$. These curves are observed to be decreasing, indicating that both $\hat{\Theta}$ and $\hat{\Xi}$ achieve their optimal values simultaneously.  Fig.~\ref{tradeoff}(a) shows that the approximated AoC decreases with the approximated computation throughput consistently. This suggests that computation throughput serves as a reliable proxy for the average AoC. Practically, we can minimize the average AoC by maximizing the computation throughput. 

Additionally, the comparison of $\hat{\Theta}$ versus $\hat{\Xi}$ for $(\mu_t, \mu_c) = (2, 3)$ (green solid line) aligns with that for $(\mu_t, \mu_c) = (3, 2)$ (blue scatter line). This demonstrates that interchanging the communication and computation rates does not alter the relationship, highlighting the symmetry with respect to $(\mu_t, \mu_c)$ as discussed below Theorem~\ref{thm: average AoC hard}.
Comparing $\hat{\Theta}$ v.s $\hat{\Xi}$ when $(\mu_t, \mu_c) = (2, 3)$ (the green solid line) with $\hat{\Theta}$ v.s $\hat{\Xi}$ when $(\mu_t, \mu_c) = (2, 5)$ (the red dashed line), we observe that, for any fixed $\lambda$, the approximated computation throughput under $(\mu_t, \mu_c) = (2, 5)$ is larger than that under $(\mu_t, \mu_c) = (2, 3)$. Similarly, the approximated average AoC under $(\mu_t, \mu_c) = (2, 5)$ is smaller than that under $(\mu_t, \mu_c) = (2, 3)$. This is as expected because larger communication or computation capabilities lead to better computation freshness and higher throughput. Both the optimal approximated average AoC and approximated computation throughput for $(\mu_t, \mu_c) = (2, 5)$ outperform those for $(\mu_t, \mu_c) = (2, 3)$ due to the consistency between AoC and computation throughput.

Finally, we solve the optimization \eqref{eq: pareto optimization} when $u=0$, and plot the optimal $\hat{\Theta}$ v.s the corresponding $\hat{\Xi}$ in Fig.~\ref{tradeoff}(b). We set $\mu_t=2$, $\lambda\in(0, 2)$, allowing $\mu_c$ to vary from $3$ to $6$. As expected, as the computation rate $\mu_c$ (representing the computation power of the network) increases, the optimal (approximated) average AoC decreases while the corresponding (approximated) computation throughput increases.

\subsection{AoC-based Scheduling in Multi-Source Networks}\label{subsec: AoC-based Scheduling in Multi-Source Networks}

In this subsection, we numerically validate the time-discrete and real-time Max-Weight scheduling policies proposed in Section~\ref{sec: AoC-based Scheduling for Multi-Source 3CNs}. To provide performance benchmarks, we consider two alternative policies: the Maximum Age First (MAF) policy~\cite{IKEM2021} and the stationary randomized policy~\cite{Yates}. Under the MAF policy, the receiver schedules the transmitter with the maximum age whenever a transmission opportunity arises. In contrast, under the stationary randomized policy, transmitter~$i$ is selected with a fixed probability $q_i$, and the receiver remains idle with probability $q_0$. We consider a symmetric network where $\lambda_i = \lambda$ for all $i$. The simulation parameters are set as follows:  number of sources $N=5$, computation service rate $\mu_c=1$, transmission service rate $\mu_{t, i}=0.5$ with $i\in\{1,2,\cdots,N\}$, deadline $w \in \{4, 10\}$, and arrival rate $\lambda \in [0.1, 0.9]$. Given the symmetry, we set $q_i = \frac{1}{N}$ and $\beta_i = 1$ for all $i$. 

\begin{figure}[htbp]
	\centering
	\subfigure[$w=4$.] {\includegraphics[width=.42\textwidth]{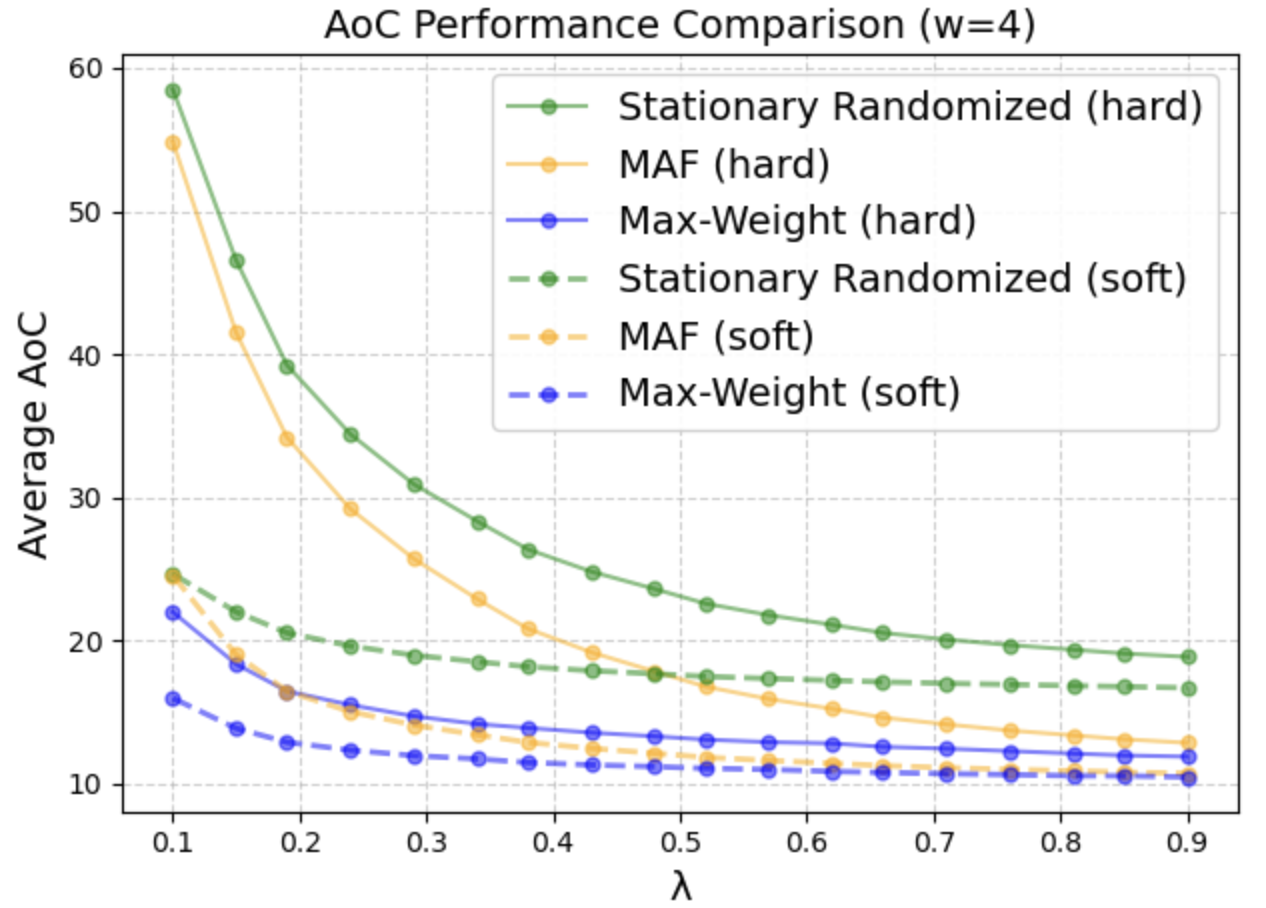}}
	\subfigure[$w=10$.] {\includegraphics[width=.42\textwidth]{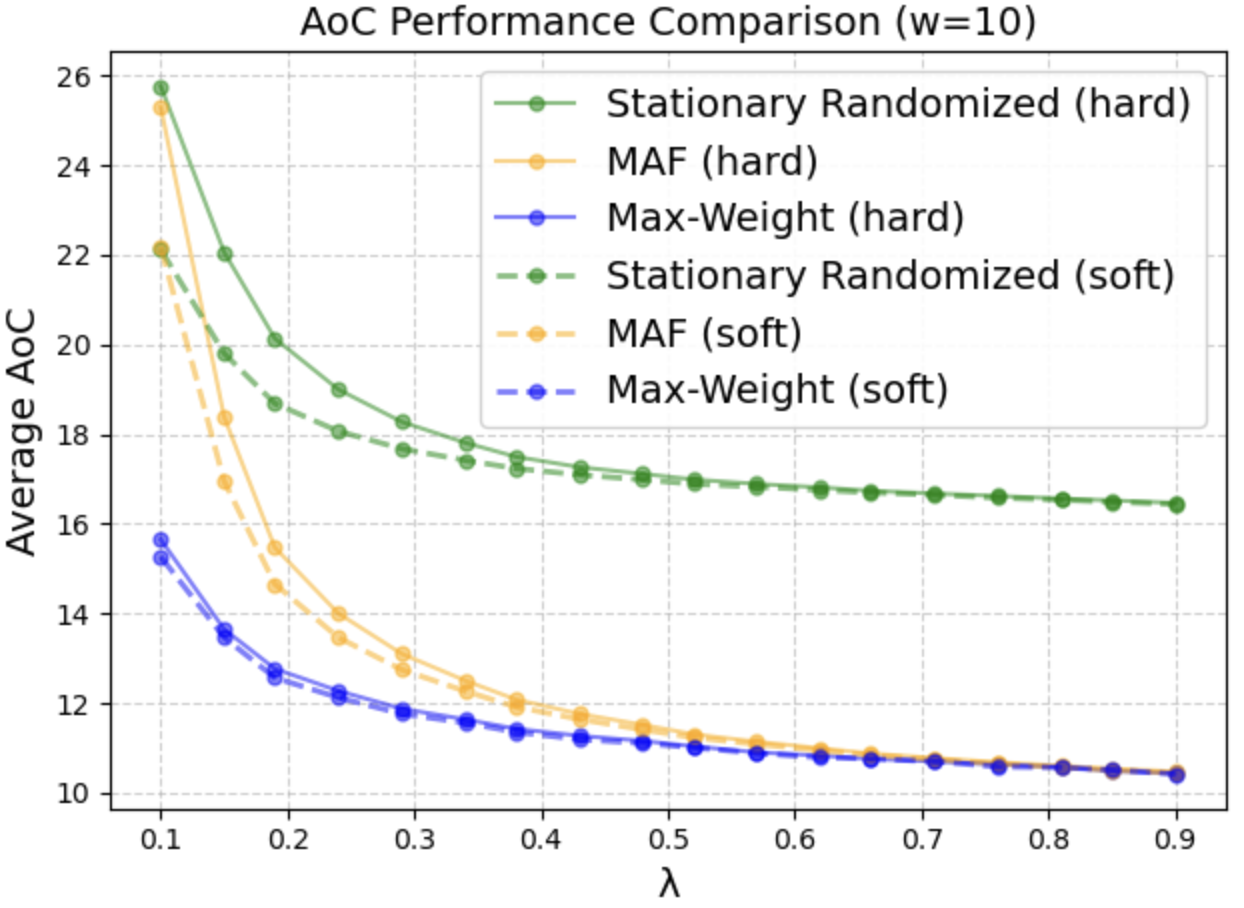}}
	\caption{Max-Weight policies and stationary randomized policies in multi-source networks.}
	\label{MW}
\end{figure}
	
As shown in Fig.~\ref{MW}, the average AoC decreases consistently as $\lambda$ increases. This behavior contrasts with that observed in Fig.~\ref{AoCs} and Fig.~\ref{AoCh}(a), where the average AoC initially decreases and then increases with increasing $\lambda$. This contrasting behavior stems from the difference in queuing disciplines: while Fig.~\ref{AoCs} and Fig.~\ref{AoCh}(a) adopt a first-come, first-served (FCFS) rule, Fig.~\ref{MW} incorporates a preemptive queuing rule. Under the preemptive rule, newly arrived tasks can replace previously queued tasks at the transmitter, thereby enhancing the freshness and reducing the AoC.

From Fig.~\ref{MW}(a) and Fig.~\ref{MW}(b), we observe that the Max-Weight policy consistently outperforms both benchmark policies.  Under the soft deadline setting, the performance gap between the Max-Weight and the MAF policies narrows as $w$ becomes larger, i.e.,  the difference between the dashed yellow and dashed blue curves in Fig.~\ref{MW}(a) is larger than that in Fig.~\ref{MW}(b). This trend can be explained as follows: When $w$ is large, the extended deadline allows more tasks to be considered valid even if they incur moderate delays.  Then, the additional latency  $1_{\{P(t)>G(t)\}}\cdot\frac{A(t)}{G(t)}\cdot(t - \tau_{P(t)}-w)^+$ \big(see \eqref{eq: AoC-soft}\big) tends to become small. Consequently,   the maximum weight in each time slot (as defined by the Max-Weight policy) is close to the maximum age used in the MAF policy. This alignment leads to similar scheduling decisions, thereby reducing the performance difference between the two policies.

Finally, comparing Fig.~\ref{MW}(a) and Fig.~\ref{MW}(b), we find that when $w$ is small (e.g., $w = 4$), the average AoC under the hard deadline is significantly higher than under the soft deadline. This is due to the stricter validity criteria of the hard deadline, which results in fewer tasks being considered valid, thus increasing the AoC a lot. However, when $w$ is large, most tasks meet the deadline requirement regardless of whether the hard or soft criterion is used, reducing the performance gap. As a result, the average AoC under hard and soft deadlines become increasingly similar as $w$ grows.

\section{Conclusion and Future Directions}\label{sec: conclusion}

In this paper, we introduce a novel metric AoC to quantify computation freshness in 3CNs. The AoC metric relies solely on tasks' arrival and completion timestamps, ensuring its applicability to dynamic and complex real-world 3CNs. We investigate AoC in two distinct setting. In point-to-point networks: tasks are processed sequentially with a first-come, first-served discipline. 
We derive closed-form expressions for the time-average AoC under  both a simplified case involving M/M/1-M/M/1 systems and a general case with G/G/1-G/G/1 systems. Additionally, we define the concept of computation throughput and derive its corresponding expressions.
We further apply the AoC metric to resource optimization and real-time scheduling in time-discrete multi-source networks. 
In the multi-source networks:  we propose AoC-based Max-Weight policies, leveraging a Lyapunov function to minimize its drift. Simulation results are provided to compare the proposed policies against benchmark approaches, demonstrating their effectiveness.

There are two primary future research directions: (1)
Deriving and analyzing time-average AoC in practical scenarios involving complex graph structures, such as sequential dependency graphs, parallel dependency graphs, and general dependency graphs \cite{8016573}.
(2) Exploring optimal AoC-based resource management policies in dynamic and complex 3CNs, such as task offloading in mobile edge networks.

\bibliographystyle{IEEEtran}
\bibliography{references}

\begin{IEEEbiography}[{\includegraphics[width=1in,height=1.25in,clip,keepaspectratio]{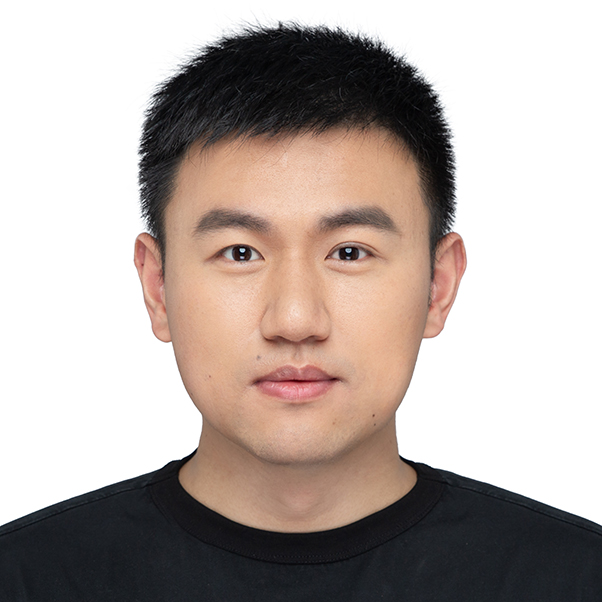}}]{Xingran Chen} received the M.A. in Applied Mathematics and Computational Science and the Ph.D. degree in Electrical and Systems Engineering from the University of Pennsylvania. He is currently a postdoctoral researcher at Rutgers University and an Assistant Professor (currently on leave) in the School of Information and Communication Engineering at University of Electronic Science and Technology of China. His research focuses on the theoretical foundations and collaborative learning algorithms for decentralized systems. He received the IEEE Communications Society \& Information Theory Society Joint Paper Award in 2023. He served as a guest editor for China Communications in 2024, and Entropy in 2025.
\end{IEEEbiography}

\begin{IEEEbiography}[{\includegraphics[width=1in,height=1.25in,clip,keepaspectratio]{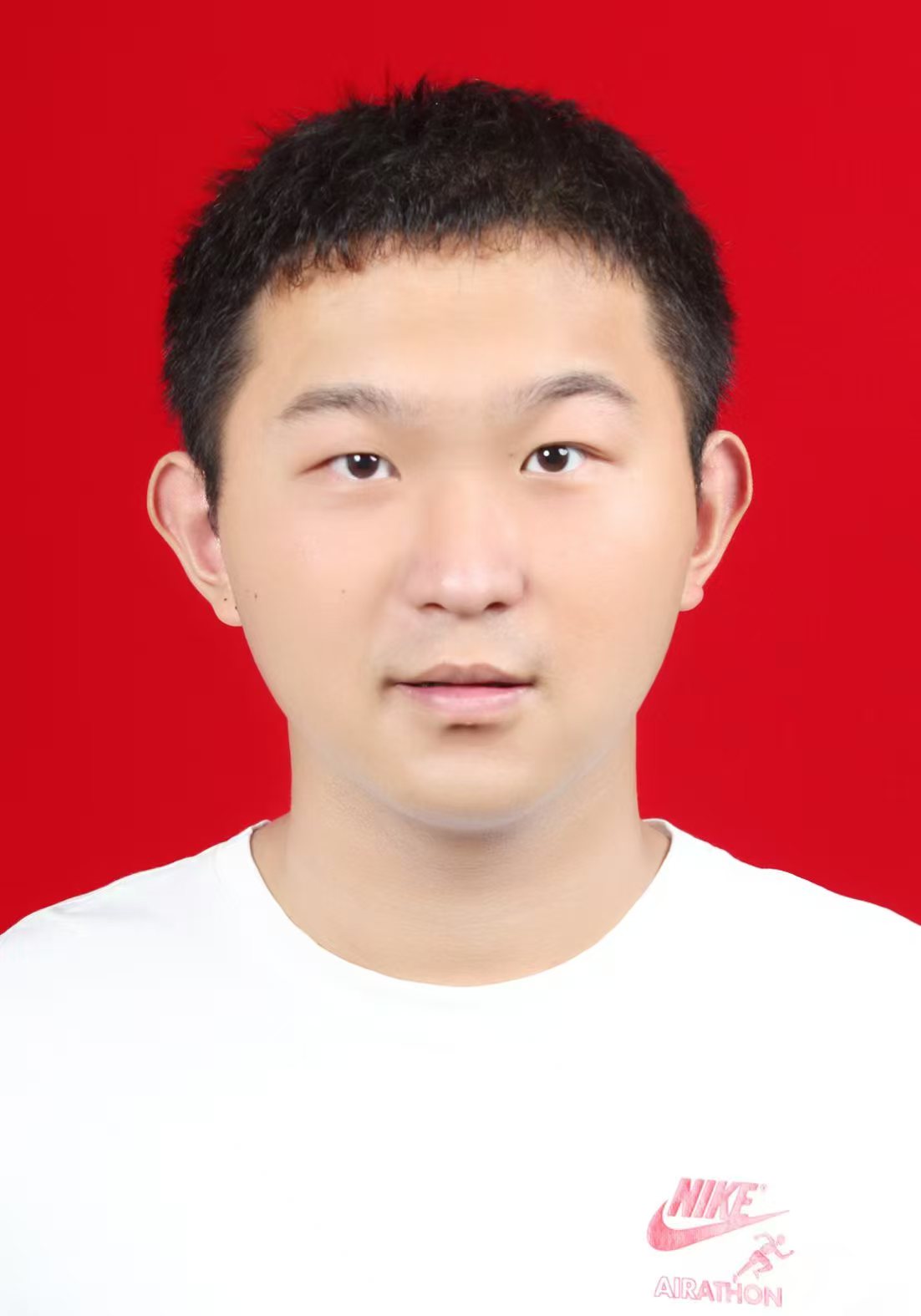}}]{Yi Zhuang} 
received the B.S. degree in communication engineering from Sichuan University, Chengdu, China, in 2024. He is currently pursuing the M.S. degree in information and communication engineering at the University of Electronic Science and Technology of China. His research interests include distributed network systems, edge and cloud computing, and reinforcement learning.
\end{IEEEbiography}

\begin{IEEEbiography}[{\includegraphics[width=1in,height=1.25in,clip,keepaspectratio]{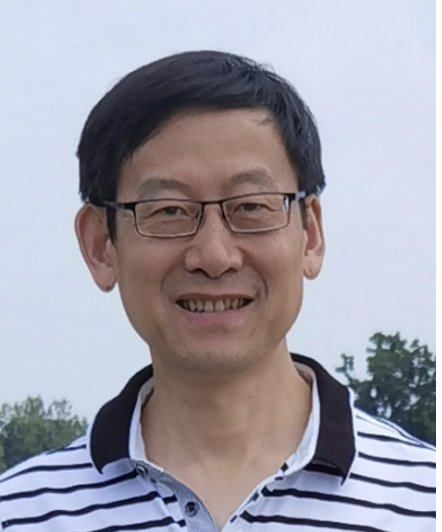}}]{Kun Yang}
received his PhD from the Department of Electronic \& Electrical Engineering of University College London (UCL), UK. He is currently a Chair Professor of Nanjing University and also an affiliated professor at University of Essex and UESTC. His main research interests include wireless networks and communications, communication-computing cooperation, and new AI (artificial intelligence) for wireless. He has published 500+ papers and filed 50 patents. He serves on the editorial boards of a number of IEEE journals (e.g., IEEE WCM, TVT, TNB). He is a Deputy Editor-in-Chief of IET Smart Cities Journal. He has been a Judge of GSMA GLOMO Award at World Mobile Congress – Barcelona since 2019. He was a Distinguished Lecturer of IEEE ComSoc (2020-2021), a Recipient of the 2024 IET Achievement Medals and the Recipient of 2024 IEEE CommSoft TC’s Technical Achievement Award. He is a Member of Academia Europaea (MAE), a Fellow of IEEE, a Fellow of IET and a Distinguished Member of ACM.
\end{IEEEbiography}

\clearpage
\appendices

\section{Proof of Theorem~\ref{thm: average AoC general}}\label{App: average AoC soft}

Recall that $\hat{\epsilon}(t)=\frac{A(t)}{G(t)}$ represents the frequency of task delays exceeding the deadline up to time $t$ and $\{T_k\}_k$ is identical across $k$. The limit of $\hat{\epsilon}_w(t)$ exists and is denoted by  $\epsilon_w$. We have:
\begin{align*}
\lim_{t\to\infty}\hat{\epsilon}_w(t) =\Pr(T_k>w)\triangleq \epsilon_{w}.
\end{align*}
Additionally, $\hat{\epsilon}_w(t)$ remains unchanged until a new task is completed. Consider the $k$-th valid task, where the corresponding inter-arrival time is $X_k$ and the system time is $T_k$. Let the corresponding value of $\hat{\epsilon}_w(t)$ be $\hat{\epsilon}_k$.  We then have:
\begin{align*}
\lim_{k\to\infty}\hat{\epsilon}_k = \lim_{t\to\infty}\hat{\epsilon}_w(t) =\epsilon_{w}.
\end{align*}

To derive an expression for the average AoC, we start with a similar graphical argument from \cite{AoItutorial}. Consider the sum of: (i) the area corresponding to the interval $(\tau_{k-1}, \tau_{k}]$, i.e., the area of trapezoid $\overline{ABCD}$, and (ii) the area corresponding to the additional latency incurred by task $k$, i.e., the area of triangle $\overline{DEF}$ in Fig.~\ref{curveAoCsoft}~(a), or the area of trapezoid $\overline{HGDE}$ in Fig.~\ref{curveAoCsoft}~(b). Note that $S_{\overline{HGDE}} = S_{DEF} - S_{FGH}$. Denote the corresponding sum area as $S_k$, 
based on Fig.~\ref{curveAoCsoft}, we can compute
\begin{align}\label{eq: S sum area}
S_k=&X_kT_k+\frac{X_k^2}{2}+\frac{\hat{\epsilon}_k}{2}\big((T_k-w)^+\big)^2\nonumber\\
-&\frac{\hat{\epsilon}_k}{2}\big((T_k-S_{k, c}-w)^+\big)^2\nonumber\\
\triangleq&Q_{k, 1} + \frac{\hat{\epsilon}_k}{2} Q_{k, 2}.
\end{align}
Since $\{X_k\}_k$, $\{S_{k, c}\}_k$ and $\{T_k\}_k$ are i.i.d, then $\{Q_{k,1}\}_k$ and $\{Q_{k, 2}\}_k$ are also i.i.d, respectively.

Let the number of valid tasks be $K$, and consider the limit as $K\to\infty$.
Utilizing a similar graphical argument in Appendix~\ref{App: average AoC soft}, the average AoC can be computed by
\begin{align*}
\hat{\Theta}_{\text{soft}} =& \lim_{K\to\infty}\frac{\sum_{k=1}^K\Big(Q_{k, 1} + \frac{\hat{\epsilon}_k}{2} Q_{k, 2}\Big)}{\sum_{k=1}^{K}X_k}\\
=&\lim_{K\to\infty}\frac{\frac{1}{K}\sum_{k=1}^K\Big(Q_{k, 1} + \frac{\hat{\epsilon}_k}{2} Q_{k, 2}\Big)}{\frac{1}{K}\sum_{k=1}^{K}X_k}.
\end{align*}
By the Law of Large Numbers, $\Theta_{\text{soft}}$ reduces to 
\begin{align}\label{eq: hatTheta-1}
\Theta_{\text{soft}} =&\frac{\mathbb{E}[Q_{k, 1}]}{\mathbb{E}[X_k]}+\lim_{K\to\infty}\frac{\frac{1}{K}\sum_{k=1}^K\frac{\hat{\epsilon}_k}{2}Q_{k, 2}}{\mathbb{E}[X_k]}.
\end{align}

Now, let us focus on the term $\frac{1}{K}\sum_{k=1}^KQ_{k, 2}$. Since $\lim_{k\to\infty}\hat{\epsilon}_k=\epsilon_w$, for any small $\eta>0$, there exists a large integer $H(\eta)$ such that $|\hat{\epsilon}_k - \epsilon_w|\leq\eta$. Then:
\begin{align*}
\lim_{K\to\infty}\frac{1}{K}\sum_{k=1}^K\frac{\hat{\epsilon}_k}{2}Q_{k,2} =& \lim_{K\to\infty}\frac{1}{K}\big(\sum_{k=1}^{H(\eta)}+\sum_{k=H(\eta)+1}^K\big)\frac{\hat{\epsilon}_k}{2}Q_{k,2} \\
\overset{(a)}{=}&\lim_{K\to\infty}\frac{1}{K}\sum_{k=1}^{H(\eta)}\frac{\hat{\epsilon}_k}{2}Q_{k,2}.
\end{align*}
The equality (a) holds because $$\lim_{K\to\infty}\frac{1}{K}\sum_{k=H(\eta)+1}^{K}\frac{\hat{\epsilon}_k}{2}Q_{k,2}=0,$$ as there are only finite number ($H(\eta)$) terms and each term is a finite scalar.  

Since  $|\hat{\epsilon}_k - \epsilon_w|\leq\eta$, the summation $\frac{1}{K}\sum_{k=1}^K\frac{\hat{\epsilon}_k}{2}Q_{k, 2}$ is bounded by
\begin{align*}
	&\lim_{K\to\infty}\frac{\epsilon_w-\eta}{2K}\sum_{k=H(\eta)}^KQ_{k, 2}
	\leq\lim_{K\to\infty}\frac{1}{K}\sum_{k=H(\eta)}^K\frac{\hat{\epsilon}_k}{2}Q_{k, 2}\\
	&\leq\lim_{K\to\infty}\frac{\epsilon_w+\eta}{2K}\sum_{k=H(\eta)}^KQ_{k, 2}.
\end{align*}
By the Law of Large Numbers, we have:
\begin{align}\label{eq: hatTheta-2}
&\frac{\epsilon_w-\eta}{2}\mathbb{E}[Q_{k, 2}]\leq\lim_{K\to\infty}\frac{1}{K}\sum_{k=H(\eta)}^K\frac{\hat{\epsilon}_k}{2}Q_{k, 2}\nonumber\\ &\leq\frac{\epsilon_w+\eta}{2} \mathbb{E}[Q_{k, 2}].
\end{align}
Substituting \eqref{eq: S sum area} and \eqref{eq: hatTheta-2} into \eqref{eq: hatTheta-1}, and taking $\eta\to0$, we obtain \eqref{eq: average AoC soft}.

\section{The proof of Theorem~\ref{thm: average AoC soft}}\label{App: average AoC soft closedform}
First of all, we denote
\begin{align*}
&Q_{1} = \frac{\mathbb{E}[X_kT_k+\frac{X_k^2}{2}]}{\mathbb{E}[X_k]},\,\,Q_{2} = \epsilon_w\cdot\frac{q_1 - q_2}{2\mathbb{E}[X_k]}\\
&q_1 = \mathbb{E}\big[\big((T_k-w)^+\big)^2\big]\\
&q_2 =\mathbb{E}\big[\big((T_k-S_{k,c}-w)^+\big)^2\big].
\end{align*}
From \cite[Proposition~1]{AoI_edgecomputing}, $Q_{1}$ has the same formula of the average AoI, which has the following expression,
\begin{align}\label{eq: AoC-AoI}
	&Q_{1}=\frac{1}{\lambda}+\frac{1}{\mu_t}+\frac{1}{\mu_c}+\frac{\lambda^2}{\mu_t^2(\mu_t-\lambda)}\nonumber\\
	&+ \frac{\lambda^2}{\mu_c^2(\mu_c-\lambda)}+\frac{\lambda^2}{\mu_t\mu_c(\mu_t+\mu_c-\lambda)}.
\end{align}
To obtain the closed expression for $\Theta_{\text{soft}}$ in \eqref{eq: average AoC soft}, it suffices to obtain the closed expression for $Q_2$.

{\bf Step 1}. We first obtain $\epsilon_w$.
Denote $U_{t}$ and $U_{c}$ are the system time in the transmission queue and the computation queue, respectively. Since the tandem is a combination of M/M/1 queues. Due to the memoryless property, $U_{t}$ and $U_{c}$ are independent \cite{QueueingBook, AoI_edgecomputing}. The density functions for $U_{t}$ and $U_{c}$ are given by \cite{QueueingBook}:
\begin{align}
f_{U_t}(x) =& (\mu_t-\lambda)e^{-(\mu_t-\lambda)x}\\
f_{U_c}(x) =& (\mu_c-\lambda)e^{-(\mu_c-\lambda)x}.
\end{align}
Due to independence, the densifty function of the total system time, i.e., $T_k=U_t+U_c$, is calculated by convolution \cite{QueueingBook}:
\begin{align}\label{eq: density Tk}
	f_{T_k}(x)=\left\{
	\begin{aligned}
		&\frac{\mu_t\delta_t\mu_c\delta_c}{\mu_c-\mu_t}(e^{-\mu_t\delta_tx}-e^{-\mu_c\delta_cx}),\,\,\mu_t\neq\mu_c\\
		&\mu_t^2\delta_t^2 xe^{-\mu_t\delta_tx},\,\,\mu_t=\mu_c.
	\end{aligned}
	\right.
\end{align}
Recall that $\epsilon_{w}=\Pr\big(T_k>w\big)$. From \eqref{eq: density Tk}, $\epsilon_{w}$ can be computed as
\begin{align}\label{eq: prob exceeding-1}
	\epsilon_{w} = \left\{
	\begin{aligned}
		&\frac{\mu_c\delta_c e^{-\mu_t\delta_t w}-\mu_t\delta_te^{-\mu_c\delta_c w}}{\mu_c-\mu_t}, &&\mu_t\neq \mu_c\\
		&(1+\mu_t\delta_tw)e^{-\mu_t\delta_t w}, &&\mu_t=\mu_c.
	\end{aligned}
	\right.
\end{align}

{\bf Step 2}. We compute $q_1$. 

For brevity, we denote 
\begin{align*}
\zeta_t = e^{-\mu_t\delta_tw},\,\,\zeta_c=e^{-\mu_c\delta_cw}.
\end{align*}
We first consider the case when $\mu_t\neq\mu_c$, we then consider the case when $\mu_t = \mu_c$. 

When $\mu_t\neq\mu_c$, according to \eqref{eq: density Tk},  we have
\begin{align*}
	\mathbb{E}\big[\big((T_k-w)^+\big)^2\big] =& \frac{\mu_t\delta_t\mu_c\delta_c}{\mu_c-\mu_t}\Big(\int_{w}^{\infty}(x-w)^2e^{-\mu_t\delta_t x}dx\\
	-&\int_{w}^{\infty}(x-w)^2e^{-\mu_c\delta_c x}dx\Big).
\end{align*}
By some algebra,
\begin{align*}
	&\int_{w}^{\infty}(x-w)^2e^{-\mu_t\delta_t x}dx=\frac{2\zeta_t }{\mu_t^3\delta_t^3}\\
	&\int_{w}^{\infty}(x-w)^2e^{-\mu_c\delta_c x}dx=\frac{2\zeta_c}{\mu_c^3\delta_c^3}.
\end{align*}
Therefore, when $\mu_t\neq \mu_c$,
\begin{align}\label{eq: Etw-1}
	\mathbb{E}\big[\big((T_k-w)^+\big)^2\big] =\frac{\mu_t\delta_t\mu_c\delta_c}{\mu_c-\mu_t}\Big(\frac{2\zeta_t }{\mu_t^3\delta_t^3} - \frac{2\zeta_c}{\mu_c^3\delta_c^3}\Big).
\end{align}
Next, when $\mu_t=\mu_c$, according to \eqref{eq: density Tk}, we have
\begin{align}\label{eq: Etw-2}
\mathbb{E}\big[\big((T_k-w)^+\big)^2\big]=\zeta_t (\frac{6}{\mu_t^2\delta_t^2}+\frac{2w}{\mu_t\delta_t}).
\end{align}

{\bf Step 3}. We compute $q_2$ by considering the waiting time in the computation queue, denoted as $W$. We have the following relation:
\begin{align*}
T_k - S_{k, c} = U_t+W.
\end{align*}
According to \cite{QueueingBook}, the waiting time in the computation queue, $W$, follows the density function:
\begin{align*}
f_{W}(x) = \rho_c\mu_c(1-\rho_c) e^{-\mu_c(1-\rho_c)x} + (1-\rho_c)\delta(x),
\end{align*}
where $\delta(x)$ is the Dirac delta function.
Since $U_t$ and $U_c$ are independent, $W$ and $U_t$ are also independent. By convolution, we can derive the density function of $U_t + W$. When $\mu_t\neq \mu_c$, we have:
\begin{align*}
f_{U_t+W}(x) =& (1-\rho_c)\mu_t\delta_te^{-\mu_t\delta_t x} \\
+& \frac{\rho_c\mu_c\delta_c\mu_t\delta_t(e^{-\mu_t\delta_t x}-e^{-\mu_c\delta_c x})}{\mu_c-\mu_t},
\end{align*}
and
\begin{align}\label{eq: q2-1}
q_2 =\frac{2\zeta_t}{\mu_t^3\delta_t^3}\frac{\mu_c\delta_c\mu_t\delta_t-\delta_c\mu_t^2\delta_t^2}{\mu_c-\mu_t}- \frac{2\zeta_c}{\mu_c^3\delta_c^3}\frac{\rho_c\mu_c\delta_c\mu_t\delta_t}{\mu_c-\mu_t}.
\end{align}
When $\mu_t=\mu_c$, we have:
\begin{align*}
f_{U_t+W}(x) =\mu_t\delta_t^2e^{-\mu_t\delta_t x} +\rho_t\mu_t^2\delta_t^2 t e^{-\mu_t\delta_t x},
\end{align*}
and
\begin{align}\label{eq: q2-2}
q_2 =\frac{2\zeta_t}{\mu_t^2\delta_t}+\rho_t\zeta_t( \frac{2w}{\mu_t\delta_t}+\frac{6}{\mu_t^2\delta_t^2}).
\end{align}

From {\bf Step 1}, {\bf Step 2} and {\bf Step 3}, substituting \eqref{eq: AoC-AoI}, \eqref{eq: prob exceeding-1}, \eqref{eq: Etw-1}, \eqref{eq: Etw-2}, \eqref{eq: q2-1}, and \eqref{eq: q2-2} into \eqref{eq: average AoC soft}, we have the desired results.

	\section{Proof of Theorem~\ref{thm: average AoC soft extension}}\label{App: proof of average AoC soft extension}
	The proof is inspired by the proof of \cite[Proposition~1]{AoI_edgecomputing}.
	
	We first divide the delay of task $k$ into $4$ components: 
	\begin{align*}
		T_k = W_{k, t} + S_{k, t} + W_{k, c} + S_{k, c},
	\end{align*}
	where $W_{k, t}=U_{k, t}-S_{k, t}$ and $W_{k, c}=U_{k, c}-S_{k,c}$ denote the waiting times of task $k$ in the transmission and computation queues, respectively, while $S_{k, t}$ and $S_{k, c}$ represent the service times of task $k$ in the transmission and computation queues, respectively.  From \eqref{eq: average AoC soft}, 
	$\Theta_{\text{soft}}$ can be re-written as
	\begin{align}\label{eq: Theta soft extend}
		\Theta_{\text{soft}} = & \frac{1}{\mathbb{E}[X_k]}\Big(\mathbb{E}[X_kW_{k, t}] + \mathbb{E}[X_kS_{k, t}]  \nonumber  \\
		+& \mathbb{E}[X_kW_{k, c}] + \mathbb{E}[X_kS_{k, c}] + \frac{1}{2}\mathbb{E}[X_k^2]\nonumber\\
		+&\frac{\Pr(T_k>w)}{2}\mathbb{E}[\big((T_k-w)^+\big)^2]\nonumber\\
		-&\frac{\Pr(T_k>w)}{2}\mathbb{E}[\big((T_k-S_{k,c}-w)^+\big)^2]\Big).
	\end{align}
	Note that $X_k$ is independent of $S_{k, t}$ and $S_{k, c}$, with $\mathbb{E}[X_k] = \frac{1}{\lambda}$, $\mathbb{E}[S_{k, t}] = \frac{1}{\mu_t}$, and $\mathbb{E}[S_{k, c}] = \frac{1}{\mu_c}$. Therefore, to compute $\Theta_{\text{soft}}$, it suffices to determine $\mathbb{E}[X_kW_{k, t}]$, $\mathbb{E}[X_kW_{k, c}]$, and $\frac{\Pr(T_k>w)}{2}\mathbb{E}[\big((T_k-w)^+\big)^2]$. Denote $g_1 = \mathbb{E}[X_kW_{k, t}]$, $g_2 = \mathbb{E}[X_kW_{k, c}]$,  $g_3 = \frac{\Pr(T_k>w)}{2}$, $g_4=\mathbb{E}[\big((T_k-w)^+\big)^2]$, and $g_5=\mathbb{E}[\big((T_k-S_{k, c}-w)^+\big)^2]$.

	{\bf Step 1}. To obtain $g_1$, we proceed as follows:
	\begin{align}\label{eq: g1-1}
		g_1 = \mathbb{E}[X_kW_{k, t}] = \int_{0}^{\infty} x \mathbb{E}[W_{k, t}|X_k = x]f_X(x)dx.
	\end{align}
	Since $W_{k, t}$ represents the waiting time in the transmission queue, we have $W_{k, t} = \big(U_{k-1, t} - X_k\big)^+$. Thus,
	\begin{align}
		\mathbb{E}[W_{k, t}|X_k = x] =& \mathbb{E}[\big(U_{k-1, t}- X_k\big)^+|X_k = x]\nonumber\\
		=&\mathbb{E}[(U_{k-1, t}-x)^+]\nonumber \\
		=&\int_{x}^{\infty}(\tau-x)f_{U_t}(\tau)d\tau.\label{eq: expectation W|X}
	\end{align}
	Substituting \eqref{eq: expectation W|X} into \eqref{eq: g1-1}, we obtain:
	\begin{align*}
		g_1 = \int_{0}^{\infty} x \int_{x}^{\infty} (\tau-x)f_{U_t}(\tau)d\tau f_X(x)dx.
	\end{align*}

	{\bf Step 2}. To obtain $g_2$, we proceed as follows:
	\begin{align}\label{eq: g2-1}
		g_2 =& \mathbb{E}[X_kW_{k, c}] = \int_{0}^{\infty} x \mathbb{E}[W_{k, c}|X_k = x]f_X(x)dx\nonumber\\
		=&\int_{0}^{\infty}x\Big(\int_{0}^{\infty}\tau f_{W_{k, c}|X_k = x}(\tau)d\tau\Big)f_X(x)dx.
	\end{align}
	The rest of the step is to find $f_{W_{k, c}|X_k = x}(\tau)d\tau$.
	
	We introduce another random variable, $D_{k, t}$, which represents the {\it inter-departure time} associated with task $k$ in the transmission queue.  Note that $X_k-D_{k, t}-W_{k, c}$ forms a Markov chain. By the chain rule,
	\begin{align*}
		&\Pr\{X_k=x, D_{k, t}=y, W_{k, c}=a\} = \Pr\{X_k=x\} \\
		&\times\Pr\{D_{k, t}=y|X_{k}=x\} \Pr\{W_{k, c}=a|D_{k, t}=y\},
	\end{align*}
	which implies
	\begin{align*}
		&\frac{\Pr\{X_k=x, D_{k, t}=y, W_{k, c}=a\}}{\Pr\{X_k=x\}} \\
		&= \Pr\{D_{k, t}=y|X_{k}=x\} \Pr\{W_{k, c}=a|D_{k, t}=y\},
	\end{align*}
	and therefore,
	\begin{align*}
		&\Pr\{D_{k, t}=y, W_{k, c}=a|X_k=x\}\\
		& = \Pr\{D_{k, t}=y|X_{k}=x\} \Pr\{W_{k, c}=a|D_{k, t}=y\}.
	\end{align*}
	Using the definition of density functions, we replace the probabilities with their respective density functions. Integrating with respect to $D_{k, t}$ on both sides, we have:
	\begin{align}\label{eq: f(W|X)}
		f_{W_{k, c}|X_{k}=x}(\tau) = \int_{0}^{\infty}f_{D_{k, t}|X_k=x}(y)f_{W_{k, c}|D_{k, t}=y}(\tau)dy.
	\end{align}
	To obtain \eqref{eq: f(W|X)}, we need to obtain $f_{D_{k, t}|X_k=x}(y)$ and $f_{W_{k, c}|D_{k, t}=y}(\tau)$. 
	
	Recall $W_{k, c}$ is the waiting time of task $k$ in the computation queue, we have $W_{k, c} = (U_{k-1, c} - D_{k, t})^+$. Thus,
	\begin{align}\label{eq: f(D|X)}
		& f_{W_{k, c}|D_{k, t}=y}(\tau) =  f_{(U_{k-1, c} - D_{k, t})^+|D_{k, t}=y}(\tau)\nonumber\\
		&= f_{(U_{k-1, c} - y)^+}(\tau) = f_{U_c}(\tau+y).
	\end{align}
	
	At the end of this step, we obtain $f_{D_{k, t}|X_k=x}(y)$. Define $P_{\text{busy}, k}$ and $P_{\text{idle}, k}$ as the probabilities that the transmission queue is busy or idle, respectively, {\it upon the arrival} of task $k$. Then:
	\begin{align}\label{eq: f(D|X)-0}
		f_{D_{k, t}|X_k=x}(y) =& P_{\text{busy}, k}\cdot f_{D_{k, t}|W_{k, t}>0, X_k=x}(y) \nonumber\\
		+& P_{\text{idle}, k}\cdot f_{D_{k, t}|W_{k, t}=0, X_k=x}(y).
	\end{align}
	
	When $W_{k, t}>0$, the inter-departure time of task $k$ and task $k-1$ in the transmission queue equals the service time of task $k$. Thus:
	\begin{align}
		&f_{D_{k, t}|W_{k, t}>0, X_k=x}(y) = f_{S_{t}}(y)\label{eq: P_{busy}-1}\\
		&P_{\text{busy}, k} = \Pr(W_{k, t}>0|X_k=x)\nonumber\\ 
		&=\Pr(U_{k-1, t}>x) = \int_{x}^{\infty} f_{U_t}(\tau)d\tau.\label{eq: P_{busy}-2}
	\end{align}
	
	When $W_{k, t}=0$, $D_{k, t}$ consists of $2$ components: the service time of task $k$, $S_{k, t}$, and the interval between the arrival of task $k$ and the departure of task $k-1$, denoted by $B_{k, t}$. Hence:
	\begin{align}\label{eq: f(D|W, X)}
		f_{D_{k, t}|W_{k, t}=0, X_k=x}(y) = f_{S_t}(y) \ast f_{B_{k, t}|W_{k, t}=0, X_k=x}(y),
	\end{align}
	where $\ast$ represents the convolution operation. 
	
	It follows that
	\begin{align}\label{eq: P(idle)}
		P_{\text{idle}, k} =& \Pr(W_{k, t}=0|X_k=x)=\Pr(U_{k-1, t}<x)\nonumber\\
		=&\int_{0}^{x}f_{U_t}(\tau)d\tau.
	\end{align}
	From the definition of $B_{k, t}$ as $B_{k, t} = X_k - U_{k-1, t}$, and using the result from \cite[Appendix~A]{AoI_edgecomputing}, we obtain:
	\begin{align}\label{eq: f(B)}
		&f_{B_{k, t}|W_{k, t}=0, X_k=x}(y) = \frac{f_{B_{k, t}| X_k=x}(y)}{P_{\text{idle}, k}}\nonumber\\
		=& \frac{f_{X_k - U_{k-1, t} | X_k=x}(y)}{P_{\text{idle}, k}} = \frac{f_{U_{k-1, t} | X_k=x}(X_k - y)}{P_{\text{idle}, k}}\nonumber\\
		=&\frac{f_{U_{k-1, t}}(x - y)}{P_{\text{idle}, k}}
	\end{align}
	Substituting \eqref{eq: f(D|W, X)} and \eqref{eq: f(B)} into \eqref{eq: f(D|W, X)}, we get:
	\begin{align}\label{eq: f(D|W, X)-1}
		f_{D_{k, t}|W_{k, t}=0, X_k=x}(y) = f_{S_t}(y) \ast \frac{f_{U_{k-1, t}}(x - y)}{\int_{0}^{x}f_{U_t}(\tau)d\tau}.
	\end{align}
	
Let $\xi(x) = \int_{x}^{\infty} f_{U_t}(u)du$.	Using \eqref{eq: g2-1}, \eqref{eq: f(W|X)}, \eqref{eq: f(D|X)-0}, \eqref{eq: P_{busy}-1}, \eqref{eq: P_{busy}-2}, \eqref{eq: P(idle)}, and \eqref{eq: f(D|W, X)-1}, we derive:
	\begin{align*}
		g_2 =& \int_{0}^{\infty}x \int_{0}^{\infty} \tau 
		\int_{0}^{\infty} (\xi(x)\cdot f_{S_{t}}(y)+ f_{S_t}(y) \ast f_{U_{t}}(x - y)\big)\\
		&\cdot
		f_{U_c}(\tau+y)dy
		d\tau f_X(x)dx.
	\end{align*}
	
	{\bf Step 3}. To obtain $g_3$, $g_4$, and $g_5$. Note that $T_k = U_{k, t}+ U_{k, c}$. Then:
	\begin{align*}
		f_{T_k}(\tau) = \int_{0}^{\tau}f_{U_t, U_c}(u, \tau-u)du.
	\end{align*}
	Substituting $f_{T_k}(\tau)$ into $g_3$, $g_4$, and $g_5$, we have
	\begin{align*}
		&g_3 = \int_{w}^{\infty}\int_{0}^{\tau}f_{U_t, U_c}(u, \tau-u)dud\tau\\
		&g_4=\int_{w}^{\infty}(\tau-w)^2\cdot\int_{0}^{\tau}f_{U_t, U_c}(u, \tau-u)dud\tau\\
		&g_5 = \int_{w}^{\infty}(\tau-w)^2\cdot\int_{0}^{\tau}f_{U_t, U_c-S_c}(u, \tau-u)dud\tau.
	\end{align*}

	From {\bf Steps 1 $\sim$ 3}, we obtain the desired results.

\section{Proof of Theorem~\ref{thm: average AoC hard general}}\label{App: average AoC hard}
Utilizing a similar idea of calculating average AoI \cite{AoItutorial}, the average AoC can be computed as the sum of area of the parallelogram (e.g., $\overline{ABCD}$) over time horizon $t$. The number of corresponding parallelograms is equal to the number of informative tasks. As the time horizon approaches infinity, the rate of informative tasks is
\begin{align*}
\lim_{t\to\infty}\frac{K(t)}{t},
\end{align*}
which is also the rate of parallelograms.
Then, the average AoC is given by
\begin{align}\label{eq: AoC hard proof}
\Theta_{\text{hard}} = \lim_{t\to\infty}\frac{K(t)}{t}\cdot\mathbb{E}[S_{\overline{ABCD}}].
\end{align}
From the distribution of $M$ in \eqref{eq: Mk-1}, we know that a valid task followed by $M-1$ invalid tasks, so 
\begin{align}\label{eq: rate of good tasks}
\lim_{t\to\infty}\frac{K(t)}{t} =\lim_{t\to\infty}\frac{1}{t/K(t)} = \frac{1}{\mathbb{E}[\sum_{j=1}^{M}X_j]}.
\end{align}
In addition, the area of $\overline{ABCD}$ can be calculated by 
\begin{align}\label{eq: areaABCD}
	S_{\overline{ABCD}} =& \mathbb{E}\Big[\big(\sum_{j=1}^{M}X_j+T_M\big)^2/2 - T_M^2/2\Big]\nonumber\\
	=& \mathbb{E}[T_M\cdot\sum_{j=1}^{M}X_j ]+\frac{1}{2}\mathbb{E}[\big(\sum_{j=1}^{M}X_j\big)^2\big].
\end{align}
Substituting \eqref{eq: rate of good tasks} and \eqref{eq: areaABCD} into \eqref{eq: AoC hard proof}, we obtain \eqref{eq: average AoC hard}.

\section{Proof of Theorem~\ref{thm: average AoC hard}}\label{App: average AoC hard closedform}
As discussed before, there are $4$ types of correlations underlying \eqref{eq: average AoC hard}: (i) positive correlations among the delays $\{T_k\}_k$ over $k$, (ii) positive correlations between $T_k$ ($1\leq k\leq M$) and $M$, (iii) negative correlations between $T_k$ and $X_k$, and (iv) negative correlations among $X_k$ and $M$. The correlations in (ii), (iii), and (iv) are incurred by the positive correlations in (i).
We can alleviate these correlations when the positive correlations in (i) are negligible.

In a tandem of two M/M/1 queues with parameters $(\lambda, \mu_t,\mu_c)$, when $\mu_t\gg\lambda$ and $\mu_c\gg\lambda$, the positive correlations among $\{T_k\}_k$ become negligible \cite{QueueingBook}. In other words, $T_k$ and $T_{k+1}$ are approximately independent over $k$. Consequently, due to the approximate independence among $\{T_k\}_k$, according to \eqref{eq: Mk-1}, $M$ approximates a geometric distribution with parameter $1-\epsilon_w$, which is approximately independent of $T_k$. Additionally, when $\mu_t\gg\lambda$ and $\mu_c\gg\lambda$, the delay $T_k$ is dominated by the service times at the transmitter and the computational node. This implies that $T_k$ and $X_k$ are approximately independent. Hence, $X_k$ and $M$ are also approximately independent.

{\bf Step 1}. We prove \eqref{eq: AoC hard-1}. 
Recall that $\{X_k\}_k$ are i.i.d over $k$. As discussed above, when $\mu_t\gg\lambda$ and $\mu_c\gg\lambda$, we have the following: (i) From the model assumption, $\{X_k\}_k$ are independent and identical distributions. Since $M$ is approximately independent of $X_k$,  we have:
\begin{align}
&\mathbb{E}[\sum_{j=1}^{M}X_j] \approx \mathbb{E}[M]\mathbb{E}[X_1]\label{eq: XM-1},\\
&\mathbb{E}[(\sum_{j=1}^{M}X_j)^2]\approx\mathbb{E}[M]\mathbb{E}[X_1^2]\nonumber\\
&+(\mathbb{E}[M^2]-\mathbb{E}[M])\mathbb{E}^2[X_1]\label{eq: XM-2}.
\end{align}
(ii) Since $T_k$ is approximately independent of $X_k$, we have: 
\begin{align}\label{eq: XM-3}
\mathbb{E}[T_M\cdot\sum_{j=1}^{M}X_j] \approx \mathbb{E}[T_M]\mathbb{E}[\sum_{j=1}^{M}X_j ].
\end{align}
Substituting \eqref{eq: XM-1}, \eqref{eq: XM-2}, and \eqref{eq: XM-3} into \eqref{eq: average AoC hard}, we get:
\begin{align*}
\Theta_{\text{hard}} \approx& \mathbb{E}[T_M]+\frac{\mathbb{E}[X_1^2]}{2\mathbb{E}[X_1]}+\big(\frac{\mathbb{E}[M^2]}{2\mathbb{E}[M]}-\frac{1}{2}\big)\mathbb{E}[X_1],
\end{align*}
thereby completing the proof of \eqref{eq: AoC hard-1}.

{\bf Step 2}. We prove \eqref{eq: Thetahard-1}. According to \eqref{eq: Mk-1},
\begin{align}\label{eq: TM}
\mathbb{E}[T_M] = \mathbb{E}[T_1|T_1\leq w].
\end{align}
Since the density function of $T_k$ is given by  \eqref{eq: density Tk}, substituting \eqref{eq: density Tk} into \eqref{eq: TM}, we have
\begin{align}\label{eq: ETM-1}
\mathbb{E}[T_M] = \frac{\frac{1-\frac{1+\mu_t\delta_tw}{e^{\mu_t\delta_tw}}}{\mu_t^2\delta_t^2} - \frac{1-\frac{1+\mu_c\delta_cw}{e^{\mu_c\delta_cw}}}{\mu_c^2\delta_c^2}}{\frac{1-e^{-\mu_t\delta_tw}}{\mu_t\delta_t}-\frac{1-e^{-\mu_c\delta_cw}}{\mu_c\delta_c}}.
\end{align}
Since $X_1$ has an exponential distribution with parameter $\lambda$, we have:
\begin{align}\label{eq: X2X1}
\frac{\mathbb{E}[X_1^2]}{2\mathbb{E}[X_1]} = \frac{1}{\lambda}.
\end{align}
Recall that $M$ approximates a geometric distribution with parameter $1-\epsilon_w$.  Therefore,
\begin{align}\label{eq: M2M1X-1}
\big(\frac{\mathbb{E}[M^2]}{2\mathbb{E}[M]}-\frac{1}{2}\big)\mathbb{E}[X_1] = \frac{1}{\lambda}(\frac{1}{1-\epsilon_w}-1),
\end{align}
where $\epsilon_w$ is given by \eqref{eq: prob exceeding-1}.
Substituting \eqref{eq: prob exceeding-1}, \eqref{eq: ETM-1}, \eqref{eq: X2X1}, and \eqref{eq: M2M1X-1} into \eqref{eq: AoC hard-1}, we obtain \eqref{eq: Thetahard-1}.

{\bf Step 3}. We prove \eqref{eq: Thetahard-2}.
When $\mu_t = \mu_c$, substituting \eqref{eq: density Tk} into \eqref{eq: TM}, we have
\begin{align}\label{eq: ETM-2}
\mathbb{E}[T_M] = \frac{\frac{2}{\mu_t\delta_t}-(\frac{2}{\mu_t\delta_t}+2w+\mu_t\delta_tw^2)e^{-\mu_t\delta_tw} }{1-e^{-\mu_t\delta_tw}(1+\mu_t\delta_tw)}.
\end{align}
Substituting \eqref{eq: prob exceeding-1}, \eqref{eq: ETM-1},  \eqref{eq: X2X1},  \eqref{eq: M2M1X-1}, and \eqref{eq: ETM-2} into \eqref{eq: AoC hard-1}, we obtain \eqref{eq: Thetahard-2}.

\section{Proof of Lemma~\ref{lem: computation throughput}}\label{App: computation throughput}
Consider a large time horizon $T$, during which there are $K(T)$ informative packets. For each informative task $j\in\{1,2,\cdots,K(T)\}$, denote the number of the associated bad tasks as $M_j$. $\{M_j\}_j$ is identical over $j$ and follows the distribution given in \eqref{eq: Mk-1}. Since $\{X_k\}_k$ are i.i.d over $k$, the sequence  $\{\sum_{k=1}^{M_j}X_{k}\}_j$ are identical over $j$. The remaining time in the interval $[0, T]$ is $T-\tau'_{K(T)}$. Thus, the time horizon can be re-written as
\begin{align}\label{eq: time horizon}
T = \sum_{j=1}^{K(T)}\sum_{k=1}^{M_j}X_{k+j-1}+T-\tau'_{K(T)}.
\end{align}
Substituting \eqref{eq: time horizon} into \eqref{eq: computation throughput}, we have
\begin{align*}
&\Xi = \lim_{T\to\infty}\frac{K(T)}{T} \\
&= \lim_{T\to\infty}\frac{K(T)}{\sum_{j=1}^{K(T)}\sum_{k=1}^{M_j}X_{k+j-1}+T-\tau'_{K(T)}}\\
&=\lim_{T\to\infty}\frac{1}{\frac{1}{K(T)}\sum_{j=1}^{K(T)}\sum_{k=1}^{M_j}X_{k+j-1}+\frac{T-\tau'_{K(T)}}{K(T)}}.
\end{align*}
Since the sequence $\{\sum_{k=1}^{M_j}X_{k}\}_j$ is identical over $j$, by the central limit theory, we have
\begin{align*}
\Xi=&\frac{1}{\mathbb{E}[\sum_{k=1}^{M}X_k]+\lim_{T\to\infty}\frac{T-\tau'_{K(T)}}{K(T)}}\\
=&\frac{1}{\mathbb{E}[\sum_{k=1}^{M}X_k]}.
\end{align*}

\section{Proof of Proposition~\ref{pro: approximated CT}}\label{App: approximated CT}
Recall that $M$ approximates a geometric distribution with parameter $1-\epsilon_w$, and $X$ follows an exponential distribution with parameter $\lambda$,  then
\begin{align}\label{eq: EM}
\frac{1}{\mathbb{E}[M]\mathbb{E}[X_1]} = \lambda(1-\epsilon_w).
\end{align}
Subsituting \eqref{eq: prob exceeding-1} into \eqref{eq: EM}, we obtain \eqref{eq: TC-1} and \eqref{eq: TC-2}.

\section{Proof of Lemma~\ref{lem: pareto optimal}}\label{App: lem: pareto optimal}
The proof is based on contradiction. Assume that the pair $\big(\hat{\Theta}(u), \hat{\Xi}(u)\big)$ is not a weakly Pareto-optimal point. This implies that there exists another solution $\big(\hat{\Theta}'(u), \hat{\Xi}'(u)\big)$ such that $\hat{\Theta}'(u) < \hat{\Theta}(u)$ and $\hat{\Xi}'(u) > \hat{\Xi}(u)$. Given that $\hat{\Xi}'(u) > \hat{\Xi}(u)>u$, the solution $\big(\hat{\Theta}'(u), \hat{\Xi}(u)'\big)$ must be a feasible solution to problem \eqref{eq: pareto optimization}. However, since $\hat{\Theta}(u)$ is the minimum value in the problem \eqref{eq: pareto optimization}, it follows that $\hat{\Theta}(u)\leq \hat{\Theta}'(u)$. This contradicts the assumption that $\hat{\Theta}'(u) < \hat{\Theta}(u)$.

	\section{Proof of Theorem~\ref{thm: average AoC hard extension}}\label{App: proof of average AoC hard extension}
	
	In a tandem of two G/G/1 queues with parameters $(\lambda, \mu_t,\mu_c)$, when $\mu_t\gg\lambda$ and $\mu_c\gg\lambda$, the positive correlations among $\{T_k\}_k$ become negligible \cite{QueueingBook}. In other words, $T_k$ and $T_{k+1}$ are approximately independent over $k$. Consequently, due to the approximate independence among $\{T_k\}_k$, and according to \eqref{eq: Mk-1}, $M$ approximates a geometric distribution with parameter $1-\epsilon_w$, which is approximately independent of $T_k$. Additionally, when $\mu_t\gg\lambda$ and $\mu_c\gg\lambda$, the delay $T_k$ is dominated by the service times at the transmitter and the computational node. This implies that $T_k$ and $X_k$ are approximately independent. Hence, $X_k$ and $M$ are also approximately independent. Based on the same process in {\bf Step 1} in Appendix~\ref{App: average AoC hard closedform}, we show that the average AoC defined in \eqref{eq: average AoC hard} can be accurately approximated by \eqref{eq: AoC hard-1}.

	{\bf Step 1}. The delay of task $k$, $T_k$, is the sum of the system times in both transmission and computation queues, i.e., $T_k = U_{k, t} + U_{k, c}$.
	Based on Assumption~\ref{ass: distributions}, the density functions $f_{U_t}$ and $f_{U_c}$ are known. Denote the density function and CDF of $T_k$ as follows,
	\begin{align*}
		f_{T_k}(\tau)=& \int_{0}^{\tau}f_{U_t,U_c}(u, \tau-u)du\\
		F_{T_k}(x) = &\int_{0}^{x}f_T(\tau)d\tau.
	\end{align*} 
	When $\mu_t\gg\lambda$ and $\mu_c\gg\lambda$, the sequence $\{T_k\}_k$ is approximately independent. Therefore, from \eqref{eq: Mk-1}, we have
	\begin{align}\label{eq: prof of M=n}
		\Pr(M=n) \approx& \big(\Pr(T_k>w)\big)^{n-1}\Pr(T_k\leq w)\nonumber\\
		=& \big(1-F_T(w)\big)^{n-1}F_T(w).
	\end{align}
	Based on \eqref{eq: prof of M=n}, it is straightforward to calculate,
	\begin{align}
		\mathbb{E}[M] =& \frac{1}{F_T(w)}\label{eq: App E[M]}\\
		\mathbb{E}[M^2] =&\frac{2-F_T(w)}{F_T^2(w)}\label{eq: App E[M2]}.
	\end{align}

	{\bf Step 2}. The random variable $T_M$ is the value of $T_k$ at the stopping time $M$, which is the first instance when $T_M\leq w$. Since $\{T_k\}_k$ are approximately i.i.d over $k$, we have 
	\begin{align}\label{eq: App E[TM]}
		\mathbb{E}[T_M] = \mathbb{E}[T_k|T_k\leq w] = \frac{\int_{0}^{w}\tau f_T(\tau)d\tau}{F_T(w)}.
	\end{align}
	Substituting \eqref{eq: App E[M]}, \eqref{eq: App E[M2]}, and \eqref{eq: App E[TM]} into \eqref{eq: AoC hard-1}, we obtain \eqref{eq: AoC hard-extension}.

	\section{Proof of Proposition~\ref{pro: approximated CT extension}}\label{App: approximated CT extension}
	From \eqref{eq: prof of M=n} in Appendix~\ref{App: proof of average AoC hard extension}, $M$ approximates a geometric distribution with parameter $F_T(w)$. Therefore, we have
	\begin{align*}
		\frac{1}{\mathbb{E}[M]\mathbb{E}[X_1]} =\frac{F_T(w)}{\int_{0}^{\infty}xf_X(x)dx},
	\end{align*}
	which derives the desired result.

\section{Proof of \eqref{eq: recursion c soft-1}}\label{App: csoftt+2}
	
We first derive the recursion of $c^{(i)}_{\text{soft}}(k)$. If a task  from source~$i$ leaves the computational node at time $k+1$, i.e., $d_i(k+1)=1$, this task must reach the receiver at the beginning of the current time slot due to $\mu_c=1$.
The instantaneous delay of transmitter $i$ at the end of time $k$ is $z_i(k)$, then from \eqref{eq: AoC-soft}, we have
\begin{align*}
c^{(i)}_{\text{soft}}(k+1) =& z_i(k) + 1 +  \frac{A_i(k)}{G_i(k)}(z_i(k)+1-w)^+,
\end{align*}
where $G_i(k)$ is defined in \eqref{eq: index of information packet} and $A_i(k)$ is defined in \eqref{eq: index of delay not exceeding w}. Otherwise, if no task from source~$i$ leaves the computational node at time $k+1$, we have:
\begin{align*}
c^{(i)}_{\text{soft}}(k+1) = c^{(i)}_{\text{soft}}(k) + 1.
	\end{align*}
	Thus, the recursion for $c^{(i)}_{\text{soft}}(k)$ is given by: 
	\begin{align}\label{eq: recursion c soft}
		&c^{(i)}_{\text{soft}}(k+1) = 1_{\{d_i(k+1)=0\}}\big(c^{(i)}_{\text{soft}}(k) + 1\big)\nonumber\\
		&+1_{\{d_i(k+1)=1\}}\big(z_i(k) + 1 + \ell_i\big(z_i(k)\big),
	\end{align}
	where $\ell_i\big(z_i(k)\big)=\frac{A_i(k)}{G_i(k)}(z_i(k)+1-w)^+$.

Note that a task takes at least $2$ slots from starting transmission to leaving the computational node. Next, we incorporate $\{a_i(k)\}_{i=1}^N$ into the relationship between  $c^{(i)}_{\text{soft}}(k+1)$ and $c^{(i)}_{\text{soft}}(k)$:
\begin{itemize}
\item If $\sum_{i=1}^{N} d_i(k-1) = 0$, meaning no scheduling decision will be made in time slot $k$, so no task can leave the computational node at time $k+1$. In this case, from  \eqref{eq: recursion c soft}, we have  $c^{(i)}_{\text{soft}}(k+1) = c^{(i)}_{\text{soft}}(k) + 1$.
\item If $\sum_{i=1}^{N} d_i(k-1) = 1$ and $a_i(k) = 0$, meaning a task from transmitter $i$ can not leave the computational node at time $k+1$. In this case, from \eqref{eq: recursion c soft}, we have $c^{(i)}_{\text{soft}}(k+1) = c^{(i)}_{\text{soft}}(k) + 1$.
\item  If $\sum_{i=1}^{N} d_i(k-1) = 1$ and $a_i(k) = 1$, meaning transmitter $i$ is scheduled in time slot $k$, then:
\begin{itemize}
			\item If the transmission is completed in one time slot, the current task can leave the computational node at the end of time $k+1$, i.e., $d_i(k+1)=1$. In this case, from \eqref{eq: recursion c soft}, we have $c_i^{\text{soft}}(k+1) = z_i(k) + 1 + \ell_i\left(z_i(k)\right)$.
			\item If the transmission is not completed in one time slot, then from \eqref{eq: recursion c soft}, we have $c_i^{\text{soft}}(k+1) = c_i^{\text{soft}}(k) + 1$.
		\end{itemize}
		
	\end{itemize}
	
	Based on the discussion above, we can write the expression for $c_i^{\text{soft}}(k+1)$ as follows,
	\begin{align*}
		&c_i^{\text{soft}}(k+1) = 1_{\{\sum_{i=1}^{N}d_i(k-1)=0\}}\big(c_i^{\text{soft}}(k) + 1\big)\\
		&+1_{\{\sum_{i=1}^{N}d_i(k-1)=1, a_i(k)=0\}}\big(c_i^{\text{soft}}(k) + 1\big)\\
		&+1_{\{\sum_{i=1}^{N}d_i(k-1)=1, a_i(k)=1, d_i(k+1)=0\}}\big(c_i^{\text{soft}}(k) + 1\big)\\
		&+1_{\{\sum_{i=1}^{N}d_i(k)=1, a_i(k+1)=1, d_i(k+1)=1\}}\\
		&\cdot\Big(z_i(k) + 1 + \ell_i\big(z_i(k)\big)\Big).
	\end{align*}

	\section{Proof of Propositioni~\ref{pro: maxweight soft}}\label{App: maxweight soft}
	Note that given $a_i(k)=1$, for all $k$, we have:
	\begin{align}\label{eq: prob d}
	\Pr(d_i(k+1)=1|a_i(k)=1)=\mu_{t, i}.
	\end{align}
Substituting \eqref{eq: prob d} into \eqref{eq: Drift} and \eqref{eq: recursion c soft-1}, we calculate the drift as follows:
	\begin{align*}
		\Delta(\mathcal{S}(k)) =& \mathbb{E}[L(k+1) - L(k-1) |\mathcal{S}(k)] \nonumber \\
		=& \frac{2}{N}\sum_{i=1}^{N}\beta_i - \frac{1}{N}\sum_{i=1}^{N}\beta_i\mu_{t, i}\mathbb{E}[a_i(k)|\mathcal{S}(k)]w_i(k),
	\end{align*}
	where $w_i(k)$ is given in Algorithm~\ref{alg: soft}. To minimize the drift $\Delta(\mathcal{S}(k))$, Algorithm~\ref{alg: soft} selects the transmitter with the maximum weight $w_i(k)$.

\end{document}